\let\originallhook\lhook
\let\lhook\originallhook
\renewcommand{\emptyset}{\varnothing}
\newcommand{\code}[1]{\texttt{#1}}
\newcommand{\tool}[1]{\textsf{#1}}
\newcommand{\sfC}{\tool{C}}
\newcommand{\LLVM}{\tool{LLVM}\xspace}
\newcommand{\aprove}{\tool{AProVE}}
\newcommand{\PP}{\mathcal{P}}
\newcommand{\Blks}{\mathit{Blks}}
\newcommand{\Ids}{\mathcal{V}_{\PP}}
\newcommand{\Pos}{p}
\newcommand{\GG}{\mathcal{G}}
\newcommand{\PPos}{\mathit{Pos}}
\newcommand{\bigast}{\mathop{\mathlarger{\mathlarger{\mathlarger{*}}}}}
\newcommand{\pointstosl}{\hookrightarrow}
\newcommand{\SL}{\mathit{SL}}
\newcommand{\FOL}{\mathit{FO}}
\newcommand{\N}{\mathbb{N}}
\newcommand{\Nplus}{\mathbb{N}_{>0}}
\newcommand{\Z}{\mathbb{Z}}
\newcommand{\Prog}{\mathcal{P}}
\newcommand{\ERROR}{\mathit{ERR}}
\newcommand{\rSC}[1]{Sect.~\ref{#1}}
\newcommand{\rF}[1]{Fig.~\ref{#1}}
\newcommand{\oldcomment}[1]{}
\newcommand{\oldcom}[1]{}
\newcommand{\codevar}[1]{\ensuremath{\mathtt{#1}}}
\newcommand{\codevarx}[1]{\protect{\ensuremath{\mathtt{#1}}}}
\newcommand{\pointsto}[1][]{\hookrightarrow_{#1}} 
\newcommand{\pointstorec}[2][]{\xhookrightarrow{#2}_{#1}} 
\newcommand{\domain}{\mathop{domain}} 
\newcommand{\listpred}{\mathit{li}}
\newcommand{\listinv}{\mathit{l}}
\newcommand{\listsize}{\mathit{bs}}
\newcommand{\LI}{\mathit{LI}}
\newcommand{\PT}{\mathit{PT}}
\newcommand{\KB}{\mathit{KB}}
\newcommand{\LV}{\mathit{LV}}
\newcommand{\stateformula}[1]{\langle{#1}\rangle}
\newcommand{\alloc}[2]{\ensuremath{\llbracket{}#1,\,#2\rrbracket}}
\newcommand{\QFIA}{\mathit{QF\_IA}}
\newcommand{\AL}{\mathit{AL}}
\newcommand{\Vsym}{\mathcal{V}_{\mathit{sym}}}
\newcommand{\partialfunctionmap}[0]{\rightharpoonup}
\newcommand{\sizeOf}{\mathit{size}}
\newcommand{\slassignment}{\mathit{as}}
\newcommand{\slmemory}{\mathit{mem}}
\newcommand{\cmpFor}{\code{cmpF}}
\newcommand{\bodyFor}{\code{bodyF}}
\newcommand{\cmpWhile}{\code{cmpW}}
\newcommand{\bodyWhile}{\code{bodyW}}
\newcommand{\initPtr}{\code{initPtr}}
\newcommand{\GraphInitEntryZero}{A}
\newcommand{\GraphInitCmpZero}{B}
\newcommand{\GraphInitCmpOne}{C}
\newcommand{\GraphInitCmpOneRefTrue}{D}
\newcommand{\GraphInitCmpOneRefFalse}{E}
\newcommand{\GraphInitCmpTwoTrue}{F}
\newcommand{\GraphInitBodyZero}{G}
\newcommand{\GraphInitBodyOne}{H}
\newcommand{\GraphInitBodySix}{J}
\newcommand{\GraphInitBodyTen}{K}
\newcommand{\GraphInitCmpZeroA}{L}
\newcommand{\GraphInitCmpZeroB}{M}
\newcommand{\GraphInitCmpZeroGen}{O}
\newcommand{\GraphInitBodyStoreStart}{P}
\newcommand{\GraphInitBodyStoreRes}{Q}
\newcommand{\GraphInitCmpZeroC}{R}
\newcommand{\GraphTravEntryZero}{S}
\newcommand{\GraphTravCmpZeroTwo}{T}
\newcommand{\GraphTravBodyGetelemStart}{U}
\newcommand{\GraphTravBodyGetelemStartP}{U'}
\newcommand{\GraphTravBodyGetelemRes}{V}
\newcommand{\GraphTravBodyGetelemResP}{V'}
\newcommand{\GraphTravCmpZeroThree}{W}
\newcommand{\GraphTravBodyGetelemStartFinal}{X}
\newcommand{\GraphTravBodyGetelemResFinal}{Y}
\newcommand{\GraphTravCmpZeroFour}{W'}
\newcommand{\StateA}{s}
\newcommand{\StateB}{{s'}}
\newcommand{\StateGen}{{\overline{s}}}
\newcommand{\seqs}
\renewcommand{\arraystretch}{1.2}
\renewcommand{\implies}{\Rightarrow}
\DeclareMathAlphabet
{\mathttit}{OT1}{lmtt}{l}{it}
\newlist{caselist}{enumerate}{10}
\setlist[caselist]{wide,labelindent=0pt,itemsep=1ex,topsep=1ex,parsep=0pt,leftmargin=0pt,%
 labelwidth=*,labelsep=1ex,align=caselabel,label*=.\arabic*,}
\setlist[caselist,1]{label=\arabic*}
\newcommand\thecaselabeltext{}
\newcommand{\thecaselabelword}{Case}
\crefname{caselisti}{Case}{Cases}
\crefname{theorem}{Thm.}{Thms.}
\Crefname{theorem}{Theorem}{Theorems}
\crefname{lemma}{Lem.}{Lems.\ }
\Crefname{lemma}{Lemma}{Lemmas}
\crefname{subformula}{subformula}{subformulas}
\Crefname{subformula}{Subformula}{Subformulas}
\crefname{corollary}{Cor.}{Corollaries}
\Crefname{corollary}{Corollary}{Corollaries}
\crefname{definition}{Def.}{Def.}
\Crefname{definition}{Definition}{Definitions}
\crefname{proofcondition}{Cond.}{Conditions}
\Crefname{proofcondition}{Condition}{Definitions}
\crefname{appendix}{App.}{App.}
\Crefname{appendix}{Appendix}{Appendixes}
\crefname{page}{page}{pages}
\Crefname{page}{Page}{Pages}
\crefname{condition}{Cond.}{Conditions}
\Crefname{condition}{Condition}{Definitions}
\crefname{section}{Sect.}{FIXME}
\Crefname{section}{Section}{FIXME}
\crefname{eqstatement}{}{}
\Crefname{eqstatement}{}{}
\newcounter{i}
\newlist{labeledenumerate}{enumerate}{10}
\setlist[labeledenumerate]{leftmargin=*,labelindent=1em,topsep=5pt}
\tikzstyle{eval-edge}=[-stealth]
\tikzstyle{gen-edge}=[-stealth]
\tikzstyle{int-gen-edge}=[-stealth]
\tikzstyle{ca-edge}=[-stealth]
\tikzstyle{fs-edge}=[-stealth]
\tikzstyle{omit-edge}=[-stealth, dotted]
\tikzstyle{ptr}  = [draw, -latex']
\tikzstyle{ptrvar} = [rectangle, draw, text height=3mm, text width=3mm, text centered, node distance=3cm, inner sep=0pt]
\tikzstyle{data} = [rectangle split, rectangle split horizontal, rectangle split parts=2, draw, text centered, text width=0.4cm]
\tikzstyle{mem16} = [rectangle split, rectangle split horizontal, rectangle split parts=16, draw, text centered, text width=0.4cm]
\tikzset{
    invisible/.style={opacity=0},
    visible on/.style={alt={#1{}{invisible}}},
    alt/.code args={<#1>#2#3}{%
      \alt<#1>{\pgfkeysalso{#2}}{\pgfkeysalso{#3}} 
    },
  }
\tikzset{onslide/.code args={<#1>#2}{%
  \only<#1>{\pgfkeysalso{#2}} 
}}
\tikzset{
    middle zigzag/.style={
        decorate,
        decoration={
            middlezigzag,
            meta-segment length=#1,
            segment length=0.5cm,
			amplitude=1.5pt
        }
    },
    middle zigzag/.default=1cm
}
\newcounter{cor-memory-safety}
\newcounter{thm-soundness-graph-construction-main-text}
\newcounter{thm-termination}
   \def\@citecolor{blue}%
   \def\@urlcolor{blue}%
   \def\@linkcolor{blue}%
\def\orcidID#1{\smash{\href{http://orcid.org/#1}{\protect\raisebox{-1.25pt}{\protect\includegraphics{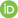}}}}}
\newcommand{\report}[1]{}
\newcommand{\paper}[1]{}
\newcommand{\arxiv}[1]{#1}
\title{Proving Termination of \tool{C} Programs with Lists\thanks{funded by the Deutsche Forschungsgemeinschaft (DFG, German Research Foundation) -
235950644 (Project GI 274/6-2)}} 
\author{Jera Hensel\orcidID{0000-0003-2852-9830} \and J\"urgen Giesl\orcidID{0000-0003-0283-8520}}
\institute{LuFG Informatik 2, RWTH Aachen University, Germany}
\begin{document}

\maketitle

\begin{abstract}There are many techniques and tools to prove termination
of \sfC{} programs, but up to now 
these tools were not very powerful
for fully automated
termination proofs of 
programs whose
termination depends on recursive data structures like lists.
We present the
first approach that extends powerful techniques for termination analysis of \sfC{} programs (with
memory allocation and explicit pointer arithmetic) to lists.
\end{abstract}
\section{Introduction}
\label{sect:introduction}

In \cite{ASV19,JLAMP18,TACAS22,LLVM-JAR17}, we intro\-duced an approach for
automatic termination analysis of
\sfC{} that also handles programs whose termination relies on the relation
between
allocated memory addresses and the data stored at such addresses.
This approach is implemented in our tool \aprove{} \cite{JAR17-AProVE}.
Instead of analyzing \sfC{} directly, \aprove{} compiles the program to \LLVM{} code using \tool{Clang} \cite{Clang}.
Then it constructs a (finite) symbolic execution graph (SEG) such that every program run
corresponds to a path through
the SEG.
\aprove{} proves memory safety during the construction of the SEG to ensure
absence of undefined behavior (which would also
allow non-termination). Afterwards,
the SEG is transformed into an integer transition system (ITS) such
that all paths through the SEG (and hence, the \sfC{} program)
are terminating if the ITS is terminating. To analyze termination of the
 ITS, 
\aprove{} applies standard techniques and calls 
 the tools \tool{T2} \cite{T2} and
 \tool{LoAT} \cite{loat,LoATIJCAR22} to detect non-termi\-nation of ITSs.
However, like other termination tools  for \sfC{},
\begin{wrapfigure}[14]{r}{5.3cm}
\vspace*{-.7cm}
\scriptsize
\hspace*{-.2cm}\begin{minipage}{5.2cm}
  \begin{Verbatim}[commandchars=\\\{\}]
struct list \{
  unsigned int value;
  struct list* next;   \};

int main() \{
  \textcolor{black!55}{// initialize length}
  unsigned int n = nondet_uint();
  \textcolor{black!55}{// initialize list of length n}
  struct list* tail = NULL;
  struct list* curr;
  for (unsigned int k = 0; k < n; k++) \{
    curr = malloc(sizeof(struct list));
    curr->value = nondet_uint();
    curr->next = tail;
    tail = curr;                       \}
  \textcolor{black!55}{// traverse list}
  struct list* ptr = tail;
  while(ptr != NULL) \{
    ptr = *((struct list**)((void*)ptr +
          offsetof(struct list, next)));\}\}
  \end{Verbatim}
\end{minipage}
\end{wrapfigure}
\noindent
  up to now
   \aprove{} supported dynamic data structures only in a very restricted way.

In this paper, we introduce a novel technique to analyze
\sfC{} programs on lists. In
the\linebreak program on the right,
 \code{nondet\_uint}  returns a random unsigned integer. The \code{for}
loop creates a list of \code{n} random numbers if $\code{n}> 0$.\linebreak The
\code{while} loop traverses this list
via poin\-ter arithmetic: Starting with
\code{tail}, it com\-putes the address of the \code{next} field of the\linebreak
current element
by adding the offset of the \code{next} field within a \code{list}  to the address of the
current  \code{list} and   dereferencing the com\-puted \pagebreak address (i.e., the content of the 
\code{next} field).
This is done by \code{offsetof},  defined in
the \sfC{} library \code{stddef.h}.\footnote{Note that
\code{ptr + n} increases \code{ptr} by
  \code{n} times the size of the type \code{*ptr}.
 As we want to increase \code{ptr} by a number of bytes
 and \code{ptr} is not an \code{i8} pointer, we first  cast \code{ptr} to \code{void*}. Then \code{((void*)ptr + offsetof(struct list, next))}  contains the \code{next} pointer, so we cast our
  computed address to \code{struct list**} before dereferencing it.}
Since the list is acyclic and the \code{next}
poin\-ter of its last element is the null pointer, list
traversal always termi\-nates.
Of course, the \code{while} loop could also traverse the list\linebreak via \code{ptr =
  ptr->next}, but in \sfC{},  memory accesses
 can be combined with pointer arithmetic. This example contains both the
 access via \code{curr->next} (when initializing the list) and pointer
  arithmetic (when traversing the list).

We present a new general technique
to infer \emph{list invariants} via symbolic execution,
which express
all properties that are crucial for
memory safety and termination.
In our example, the list invariant contains the 
information that dereferencing the \code{next} pointer
in the \code{while} loop
is  safe and that one finally reaches
the null pointer.
In general, our novel list invariants allow us to abstract from detailed information about
lists (e.g., about their intermediate elements)
such that abstract states with ``similar'' lists can be merged and generalized during the
symbolic execution in order to obtain finite SEGs. At the
same time, list invariants 
express enough information about the lists (e.g., their length, their start address, etc.)
such that memory safety and termination can still
be proved.

We define the abstract states used for symbolic execution in Sect.\ \ref{sect:domain}.
In Sect.\ \ref{sect:symbexec}, after recapitulating the construction of
SEGs, we adapt our techniques for merging and generalizing states from \cite{LLVM-JAR17}
to infer list invariants.
Moreover, we
adapt\linebreak those rules for symbolic 
execution that are affected by introducing list invariants.
\rSC{sect:Proving Termination} discusses the generation of ITSs and the soundness of our approach. 
 Sect.\ \ref{sect:evaluation}
gives an overview on related work. Moreover, we
evaluate the implementation of our approach in the tool
\aprove{} using benchmark sets from
\emph{SV-COMP} \cite{SVCOMP22} and the \emph{Termination Competition}
\cite{TermComp}.
All proofs can be found
in\paper{ \cite{report}.}\arxiv{ App.\ \ref{app:proofs}.}

\vspace*{-.05cm}

\paragraph{Limitations}
To ease the presentation, in this paper we treat integer types as unbounded.
Moreover, we assume that a program consists of a
single non-recursive function
and that values may be stored at any address.
Our approach can also deal with bitvectors, data
alignments, and programs with arbitrary many (possibly recursive) functions, see
\cite{ASV19,JLAMP18,LLVM-JAR17} for details.
However,  so far only lists without sharing can be handled by our new technique. Extending it to more
general recursive data structures is one of the main challenges for future
work.

\section{Abstract States for Symbolic Execution}\label{sect:domain}

The \LLVM{} code
for the \code{for} loop is given on the next page. It is equivalent to the code produced by
\tool{Clang} without optimizations on a 64-bit computer.
We explain it in detail in \cref{sect:symbexec}.
To ease readability, we omitted instructions and keywords that are irrelevant for our
presentation, renamed variables, and wrote \code{list}\linebreak instead of \code{struct.list}.
Moreover, 
we gave the \sfC{} instructions (\textcolor{black!55}{in gray}) before 
the corresponding \LLVM{} code.
The code consists of several \emph{basic blocks} including \cmpFor{}
 and \bodyFor{} 
(corresponding  to the loop \pagebreak
comparison and body).

\begin{wrapfigure}[20]{r}{6cm}
\begin{boxedminipage}{6cm}
\scriptsize
\begin{Verbatim}[commandchars=\\\{\}]
list = type \{ i32, list* \}

define i32 @main() \{ ...
\cmpFor:
  \tiny{\textcolor{black!55}{k < n}}
  0: k = load i32, i32* k_ad
  1: kltn = icmp ult i32 k, n
  2: br i1 kltn, label \bodyFor, label \initPtr
\bodyFor:
  \tiny{\textcolor{black!55}{curr = malloc(sizeof(struct list));}}
  0: mem = call i8* @malloc(i64 16)
  1: curr = bitcast i8* mem to list*
  \tiny{\textcolor{black!55}{curr->value = nondet_uint();}}
  2: nondet = call i32 @nondet_uint()
  3: curr_val = getelementptr list,
                list* curr, i32 0, i32 0
  4: store i32 nondet, i32* curr_val
  \tiny{\textcolor{black!55}{curr->next = tail;}}
  5: tail = load list*, list** tail_ptr
  6: curr_next = getelementptr list,
                 list* curr, i32 0, i32 1
  7: store list* tail, list** curr_next
  \tiny{\textcolor{black!55}{tail = curr;}}
  8: store list* curr, list** tail_ptr
  \tiny{\textcolor{black!55}{k++}}
  9: kinc = add i32 k, 1
  10:store i32 kinc, i32* k_ad
  11:br label \cmpFor
  ...                                     \}
\end{Verbatim}
\end{boxedminipage}
\vspace{-.05cm}
\end{wrapfigure}
We now recapitulate  the \emph{abstract\linebreak states}
of
\cite{LLVM-JAR17}  used for symbolic
execu\-tion and extend  them by a component $\LI$ for list
invariants, i.e., they  have the form
$((\codevar{b}, i), \LV, \AL, \PT, \LI, \KB)$. The first component is a \emph{program position}
(\codevar{b}, $i$), indicating that instruction\linebreak $i$ of block \codevar{b} is executed
next. $\PPos \subseteq (\Blks \times \N)$ is the set of all program positions, and
$\Blks$ are all basic blocks.

The second component is a partial injective
function $\LV\colon\Ids\partialfunctionmap\Vsym$, which maps \emph{\underline{l}ocal program \underline{v}ariables} $\Ids$ of the program
$\PP$ to
an infinite set  $\Vsym$ of symbolic variables with $\Vsym
\cap \Ids = \emptyset$.
 We  identify $\LV$ with the set of equations $\{ \code{x} = \LV(\code{x}) \mid
\code{x} \in \domain(\LV)\}$ and we
often extend $\LV$ to a function from $\Ids \uplus \N$ to $\Vsym \uplus \N$ by defining
$\LV(n) = n$ for all $n \in \N$.

The third component of each state is a set $\AL$ of (bytewise) allocations  $\alloc{v_1}{v_2}$ with
$v_1, v_2 \in \Vsym$, which indicate that $v_1 \le v_2$
and that all addresses between $v_1$ and $v_2$ have been allocated. 
We require any two  entries $\alloc{v_1}{v_2}$ and $\alloc{w_1}{w_2}$ from $\AL$
with
$v_1 \neq w_1$ or $v_2 \neq w_2$ to be disjoint. 

The fourth and fifth components  $\PT$ and $\LI$ model the memory contents. $\PT$ contains
``\underline{p}oints-\underline{t}o'' entries of the form $v_1 \pointsto[\codevar{ty}] v_2$ where $v_1,v_2 \in \Vsym$ and
$\codevar{ty}$ is an \LLVM{} type, meaning that the address $v_1$ of type $\codevar{ty}$
\underline{p}oints \underline{t}o $v_2$. 
In contrast, the set $\LI$ of \emph{\underline{l}ist \underline{i}nvariants} (which is new
compared to \cite{LLVM-JAR17}) does not 
describe pointwise memory contents but contains invariants
$v_{\mathit{ad}} \pointstorec[\codevar{ty}]{v_\ell} [(\mathit{off}_i: \codevar{ty}_i:\linebreak
  {v_i..\hat{v}_i})]_{i=1}^n$ where $n\!\in\!\N_{>0}$,
  $v_{\mathit{ad}},v_\ell,v_i,\hat{v}_i \in
\Vsym$, $\mathit{off}_i \in \N$
for all $1 \leq i \leq n$, $\codevar{ty}$ and
$\codevar{ty}_i$ are \LLVM{} types for all $1 \leq i \leq n$, and there is exactly
one ``recursive field''
$1 \leq j \leq n$ such that $\code{ty}_j = \code{ty*}$.\footnote{
Soundness of our approach is not affected if there are other recursive fields, but our
symbolic execution technique for list traversal
on list invariants in Sect.\ \ref{sect:List Traversal}
can only be applied if  the traversal is done along field $j$.
}
Such an invariant represents a
\code{struct} \code{ty} with $n$ fields that corresponds to a recursively defined list of
length $v_\ell$.
Here, $v_{\mathit{ad}}$ points to the first list element, the $i$-th field starts at address $v_{\mathit{ad}} + \mathit{off}_i$ (i.e., with
offset $\mathit{off}_i$)\footnote{The field offsets can be computed using the data
  layout string in the \LLVM{} program.}
and has type $\codevar{ty}_i$,
and the values of the $i$-th fields of
the first and last list\linebreak element
are $v_i$ and $\hat{v}_i$, respectively.
For example, the following list invariant \eqref{example list invariant} represents all lists of length $x_{\ell}$
and type \code{list} whose elements store a 32-bit integer in their first field and the pointer to the next element in their second field\linebreak
with offset $8$. The first list element starts at address $x_\code{mem}$,
the second  starts at ad\-dress $x_\code{next}$, and the last element
contains the null pointer.
Moreover, the first  ele-\linebreak ment stores the integer value $x_\code{nd}$ and the last list
element stores  the \pagebreak integer $\hat{x}_\code{nd}$. 
\begin{equation}
  \label{example list invariant}
x_{\code{mem}} \pointstorec[\code{list}]{x_{\ell}} [(0: \code{i32}:
  {x_\code{nd}}..\hat{x}_{\code{nd}}), (8: \code{list*}: x_{\code{next}}..0)]
\end{equation}
\noindent
For example, this invariant represents the list with the allocation
$\alloc{x_{\code{mem}}}{x_{\code{mem}}+15}$, where the first four bytes store the integer $5$
and  the last eight bytes store the pointer $x_{\code{next}}$,
and the
allocation $\alloc{x_{\code{next}}}{x_{\code{next}}+15}$, where  the first four bytes
store the integer $2$ and the last eight bytes store the null pointer (i.e., the address
$0$). Here,  we have $x_{\ell} = 2$.
 Sect.\ \ref{sect:List Traversal} will show that the expressiveness of our list invariants is indeed needed to prove termination
of programs that traverse a
list.

The last component of a state is a \emph{\underline{k}nowledge \underline{b}ase} $\KB$
of quantifier-free first-order formulas that express
integer arithmetic properties of $\Vsym$. We identify \emph{sets} of first-order formulas $\{\varphi_1, \ldots, \varphi_m\}$ with their conjunction $\varphi_1 \wedge \ldots \wedge \varphi_m$.

A special state $\ERROR$ is reached if we cannot prove absence of undefined
beha-\linebreak  vior (e.g., if memory safety might be violated
by dereferencing the null pointer).

As an example, the following abstract state \eqref{example state}
represents concrete states at the beginning of the
block \cmpFor{}, where the program variable \code{curr} is assigned the symbolic variable
$x_{\code{mem}}$, the allocation $\alloc{x_{\code{k\_ad}}}{x_{\code{k\_ad}}^\mathit{end}}$
consisting of $4$ bytes stores the value $x_{\code{kinc}}$,
and $x_{\code{mem}}$ points to
the first element of a list of length $x_\ell$ (equal to $x_{\code{kinc}}$)
that satisfies the list invariant \eqref{example list invariant}. (This state will later
be obtained during the symbolic execution, see State $O$ in \cref{fig:ForLoopMerging} in
Sect.\ \ref{sect:inferring_inv}.)
\begin{equation}
  \label{example state}
\hspace*{-5cm}  \parbox{6cm}{
 \begin{tikzpicture}[node distance = \ydist and \xdist]
\scriptsize
\tikzstyle{state}=[inner sep=2pt, font=\scriptsize, draw]
\node[state, align=left] (1) 
{$(\cmpFor, 0),\;
  \{ \code{curr} = x_{\code{mem}}, \,
     \code{kinc} = x_{\code{kinc}}, \,
     ...\}, \;
  \{ \alloc{x_{\code{k\_ad}}}{x_{\code{k\_ad}}^\mathit{end}}, \,
     ...\}, \;
  \{ x_{\code{k\_ad}} \pointsto[\code{i32}] x_{\code{kinc}}, \,
     ...\}, \;$\\
 $\{ x_{\code{mem}} \pointstorec[\code{list}]{x_{\ell}} [(0: \code{i32}: x_{\code{nd}}..\hat{x}_{\code{nd}}), (8: \code{list*}: x_{\code{next}}..0)]\}, \;
  \{ x_{\code{k\_ad}}^\mathit{end} = x_{\code{k\_ad}} + 3, \,
     x_{\ell} = x_{\code{kinc}}, \,
     ...\}
 $
};
\end{tikzpicture}}
\end{equation}

A state $s= (\Pos, \LV, \AL, \PT, \LI, \KB)$ is \emph{represented by a formula}
$\stateformula{s}$ which contains $\KB$ and
encodes 
$\AL$, $\PT$, and $\LI$ in first-order logic. This allows us to use
standard SMT solving for all reasoning during the construction of the SEG.
Moreover, $\stateformula{s}$ is
also used for the generation of the ITS afterwards.
The encoding of $\AL$ and $\PT$ is as in
\cite{LLVM-JAR17}, see\arxiv{ App.\ \ref{Separation Logic Semantics of Abstract
    States}:}\paper{ \cite{report}:}
$\stateformula{s}$  contains formulas which express that allocated addresses are positive, that allocations represent
disjoint memory areas, that equal addresses point to equal values, and that addresses
are different if they point to different values.
For each element of $\LI$, we add the following new formulas to $\stateformula{s}$ which express
that the list length $v_\ell$ is
$\geq 1$ and the ad-\linebreak{}dress $v_{\mathit{ad}}$ of the first element is not null. If
 $v_\ell = 1$, 
then the values $v_i$ and $\hat{v}_i$ 
of the fields in the first and the last element are equal.
If  $v_\ell  \geq 2$, then the \code{next} pointer $v_j$
 in the first element must not be null. Finally, if there is a field whose
 values
$v_k$ and $\hat{v}_k$ 
differ in the first and the last element, then the length $v_\ell$ must be
$\geq 2$.\vspace*{-.1cm}
\[
\mbox{ $\begin{array}{@{\hspace*{-.15cm}}l}
     \{v_\ell \geq 1 \wedge v_{\mathit{ad}} \geq 1 
  \mid
(v_{\mathit{ad}} \pointstorec[\codevar{ty}]{v_\ell}
  [(\mathit{off}_i: \codevar{ty}_i: {v_i..\hat{v}_i})]_{i=1}^n) \in \LI \} \; \cup \\
 \{\bigwedge_{i=1}^n v_i = \hat{v}_i 
   \mid (v_{\mathit{ad}} \pointstorec[\codevar{ty}]{v_\ell}
  [(\mathit{off}_i: \codevar{ty}_i: {v_i..\hat{v}_i})]_{i=1}^n) \in \LI \mbox{ and } \models \, 
   \stateformula{s} \implies v_\ell = 1\} \; \cup\\
 \mbox{\small $\{v_j \geq 1 
   \mid
(v_{\mathit{ad}} \pointstorec[\codevar{ty}]{v_\ell}
       [(\mathit{off}_i: \codevar{ty}_i: {v_i..\hat{v}_i})]_{i=1}^n) \in \LI
       \mbox{ with }  \codevar{ty}_j = \codevar{ty*} \mbox{ and } \models \, 
   \stateformula{s} \implies v_\ell \geq 2\} \; \cup$}\\
 \mbox{\small $\{v_\ell \geq 2 
   \mid (v_{\mathit{ad}} \pointstorec[\codevar{ty}]{v_\ell}
  [(\mathit{off}_i: \codevar{ty}_i: {v_i..\hat{v}_i})]_{i=1}^n) \in \LI \mbox{ and }
  \exists k\!\in\!\Nplus, k \leq n, \, \mbox{s.t.} \, \models \, 
   \stateformula{s} \implies v_k \neq \hat{v}_k\}$}
  \end{array}$}
\]

In \emph{concrete}
 states $c$, all values of variables and
memory contents are deter\-mined uniquely.
To ease the formalization, we assume
that all integers are unsigned and refer to \cite{JLAMP18} for the general
case.
 So
 for all $v \in \Vsym(c)$ (i.e., all $v \in \Vsym$\linebreak occurring in $c$)
 we have
 $\models \stateformula{c} \implies v = n$ for some
  $n \in \N$. Moreover, 
 here $\PT$ only contains information about
 allocated addresses   and $\LI = \varnothing$ since the abstract knowledge in list invariants
 is unnecessary
 if all memory contents \pagebreak are known.

 For instance,  all concrete states  $((\cmpFor,0),\LV,\AL, \PT,\emptyset,\KB)$
 represented by the state
 \eqref{example state}
contain $\ell$ allocations of 16 bytes for some $\ell  \geq 1$,
where in the first four bytes a  32-bit integer is stored and in the last eight
bytes the address of the next allocation (or 0, in case of the last allocation) is stored.

See\arxiv{ App.\ \ref{Separation Logic Semantics of Abstract States}}\paper{ \cite{report}} for a formal definition to
determine 
 which concrete states are represented by a state
$s$. To this end, as in \cite{LLVM-JAR17} we define a \emph{separation logic} formula $\stateformula{s}_{\SL}$
which also encodes the knowledge contained in the memory
components of states. To extend this formula to list invariants, we use 
a fragment similar to
\emph{quantitative} separation logic \cite{QSL}, extending conventional separation logic
by list predicates.
 For any state $s$, we have $\models  \stateformula{s}_{\SL}
\implies \stateformula{s}$, i.e., $\stateformula{s}$ is a weakened version of
$\stateformula{s}_{\SL}$ that we use for symbolic execution and
the termination proof.

\vspace*{-.2cm}

\section{Symbolic Execution with List Invariants}
\label{sect:symbexec}

\vspace*{-.1cm}

We first recapitulate the construction of SEGs.
Then, \cref{sect:inferring_inv} extends the technique
for \emph{merging} and generalization of
states from \cite{LLVM-JAR17} to infer list invariants.
Finally,
we adapt  the rules for symbolic execution to list invariants in
\cref{sect:adapting_inv}.

\footnotesize
\newcommand{\ydist}{0.3cm}
\newcommand{\xdist}{0.5cm}
\newcommand{\setfont}{\scriptsize}
\newcommand{\labelxshift}{-1pt}

\begin{figure}[t]
\centering
\begin{tikzpicture}[node distance = \ydist and \xdist]
\scriptsize
\def\widetwidth{6.2cm}
\def\smalltwidth{4.75cm}
\def\fulllinewidhth{11cm}
\def\edgenodedist{0.09cm}
\def\FirstIndentwidth{3cm}
\def\SecondIndentwidth{2cm}
\tikzset{invisible/.style={opacity=0}}
\tikzstyle{state}=[
                           inner sep=2pt,
                           font=\scriptsize,
                           draw]

\node[state, label={[xshift=\labelxshift]2:$\GraphInitEntryZero$}] (1) 
{$(\code{entry}, 0),\;
  \emptyset, \;
  \emptyset, \;
  \emptyset, \;
  \emptyset, \;
  \emptyset
 $
};

\node[state, below=of 1, align=left, label={[xshift=\labelxshift]2:$\GraphInitCmpZero$}] (2) 
{$(\cmpFor, 0),\;
  \{ \code{n} = v_{\code{n}}, \,
     \code{tail\_ptr} = v_{\code{tp}}, \,
     \code{k\_ad} = v_{\code{k\_ad}}, \,
     ...\}, \;
  \{ \alloc{v_{\code{tp}}}{v_{\code{tp}}^\mathit{end}}, \,
     \alloc{v_{\code{k\_ad}}}{v_{\code{k\_ad}}^\mathit{end}} \},
 $\\
 $
  \{ v_{\code{tp}} \pointsto[\code{list*}] 0, \,
     v_{\code{k\_ad}} \pointsto[\code{i32}] 0 \}, \;
  \emptyset, \;
  \{ v_{\code{tp}}^\mathit{end} = v_{\code{tp}} + 7, \,
     v_{\code{k\_ad}}^\mathit{end} = v_{\code{k\_ad}} + 3, \,
     ... \}
 $
};

\node[state, below=of 2, align=left, label={[xshift=\labelxshift]2:$\GraphInitCmpOne$}] (3) 
{$(\cmpFor, 1),\;
  \{ \code{k} = 0, \,
     ...\}, \;
  \AL^{\GraphInitCmpZero}, \;
  \PT^{\GraphInitCmpZero}, \;
  \emptyset, \;
  \KB^{\GraphInitCmpZero}
 $
};

\node[state, below left=.3cm and -2cm of 3, align=left, label={[xshift=\labelxshift]2:$\GraphInitCmpOneRefTrue$}] (3a) 
{$(\cmpFor, 1),\;
  \{ \code{k} = 0, \,
     ...\}, \;
  \AL^{\GraphInitCmpZero},
 $\\
 $
  \PT^{\GraphInitCmpZero}, \;
  \emptyset, \;
  \{ v_\code{n} > 0, \,
     ... \}
 $
};

\node[state, below right=.3cm and -2cm of 3, align=left, label={[xshift=\labelxshift]2:$\GraphInitCmpOneRefFalse$}] (3b) 
{$(\cmpFor, 1),\;
  \{ \code{k} = 0, \,
     ...\}, \;
  \AL^{\GraphInitCmpZero},
 $\\
 $
  \PT^{\GraphInitCmpZero}, \;
  \emptyset, \;
  \{ v_\code{n} \leq 0, \,
     ... \}
 $
};

\node[state, below right=.3cm and -4cm of 3a, align=left, label={[xshift=\labelxshift]2:$\GraphInitCmpTwoTrue$}] (4a) 
{$(\cmpFor, 2),\;
  \{ \code{kltn} = 1, \,
     ...\}, \;
  \AL^{\GraphInitCmpZero}, \;
  \PT^{\GraphInitCmpZero}, \;
  \emptyset, \;
  \KB^{\GraphInitCmpOneRefTrue}
 $
};

\node[below=of 3b] (4b) 
{$...$};

\node[state, below right=.3cm and -4.5cm of 4a, align=left, label={[xshift=\labelxshift]2:$\GraphInitBodyZero$}] (5) 
{$(\bodyFor, 0),\;
  \LV^{\GraphInitCmpTwoTrue}, \;
  \AL^{\GraphInitCmpZero}, \;
  \PT^{\GraphInitCmpZero}, \;
  \emptyset, \;
  \KB^{\GraphInitCmpOneRefTrue}
 $
};

\node[state, below right=.3cm and -6.5cm of 5, align=left, label={[xshift=\labelxshift]2:$\GraphInitBodyOne$}] (6) 
{$(\bodyFor, 1),\;
  \{ \code{mem} = v_{\code{mem}}, \,
     ...\}, \;
  \{ \alloc{v_{\code{mem}}}{v_{\code{mem}}^\mathit{end}}, \,
     ... \}, \;
  \PT^{\GraphInitCmpZero}, \;
  \emptyset, \;
  \{ v_{\code{mem}}^\mathit{end} = v_{\code{mem}} + 15, \,
     ... \}
 $
};

\node[state, below=of 6, align=left, label={[xshift=\labelxshift]2:$\GraphInitBodySix$}] (11) 
{$(\bodyFor, 7),\;
  \{ \code{curr} = v_{\code{mem}}, \,
     \code{nondet} = v_{\code{nd}}, \,
     \code{curr\_val} = v_{\code{mem}}, \,
     \code{tail} = 0, \,
     \code{curr\_next} = v_{\code{cn}}, \,
     ...\},$\\
 $\AL^{\GraphInitBodyOne}, \;
  \{ v_{\code{mem}} \pointsto[\code{i32}] v_{\code{nd}}, \,
     ... \}, \;
  \emptyset, \;
  \{ v_{\code{cn}} = v_{\code{mem}} + 8, \,
     ... \}
 $
};

\node[state, below=of 11, align=left, label={[xshift=\labelxshift]2:$\GraphInitBodyTen$}] (15) 
{$(\bodyFor, 11),\;
  \{ \code{kinc} = 1, \;
     ...\}, \,
  \AL^{\GraphInitBodyOne}, \;
  \{ v_{\code{cn}} \pointsto[\code{list*}] 0, \,
     v_{\code{tp}} \pointsto[\code{list*}] v_{\code{mem}}, \,
     v_{\code{k\_ad}} \pointsto[\code{i32}] 1, \,
     ... \}, \;
  \emptyset, \;
  \KB^{\GraphInitBodySix}
 $
};

\draw[omit-edge] (1)  --  (2);
\draw[eval-edge] (2)  --  (3);
\draw[eval-edge] (3)  --  (3a);
\draw[eval-edge] (3)  --  (3b);
\draw[eval-edge] (3a) --  (4a);
\draw[eval-edge] (3b) --  (4b);
\draw[eval-edge] (4a) --  (5);
\draw[eval-edge] (5)  --  (6);
\draw[omit-edge] (6)  --  (11);
\draw[omit-edge] (11) --  (15);

\end{tikzpicture}
\vspace*{-.1cm}
\caption{SEG for the First Iteration of the \code{for} Loop}
\label{fig:ForLoopFirstIteration}
\vspace*{-.5cm}
\end{figure}
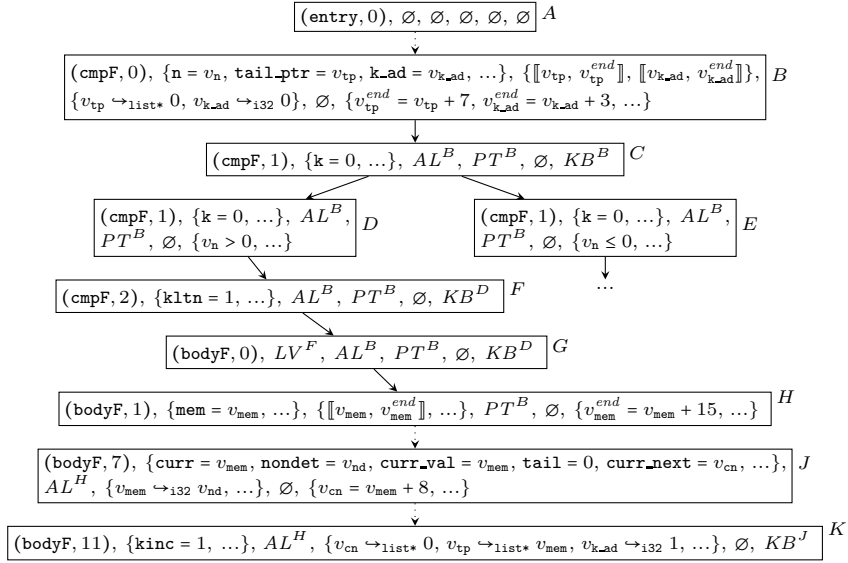
\normalsize
Our symbolic execution starts with a state $\GraphInitEntryZero$
at the first instruction of the first block
(called \code{entry} in
our example).
Fig.\ \ref{fig:ForLoopFirstIteration} shows 
the first iteration of the \code{for} loop.
Dotted arrows indicate that we omitted some symbolic execution steps. For every
state, we perform symbolic execution by applying 
the corresponding inference rule as in \cite{LLVM-JAR17}
  to compute
its successor state(s) and repeat this until all paths end in return states.
We call an SEG with this property \emph{complete}.

As an example, we recapitulate the inference rule for the \code{load} instruction in the case where a value is loaded from allocated and initialized memory.
It loads the value of type \code{ty} that is stored at the address \code{ad} to the program
variable \code{x}.
Let $\sizeOf(\codevar{ty})$ denote \vspace*{-.3cm}\pagebreak the size of \codevar{ty} in bytes for any \LLVM{} type \codevar{ty}.
If we can prove that there is an
allocation $\alloc{v_1}{v_2}$ containing all addresses $\LV(\code{ad}), \ldots,
\LV(\code{ad})+\sizeOf(\code{ty})-1$ and
there exists an entry $(w_1 \pointsto[\code{ty}] w_2) \in \PT$ such that $w_1$ is equal to
the address $\LV(\code{ad})$ loaded from, then we
transform the state $s$ at position $p = (\codevar{b},i)$ to a state $s'$ at position
$p^+ = (\codevar{b},i+1)$. In $s'$, a fresh symbolic variable $w$ is assigned to \code{x}
and $w = w_2$ is added to
$\KB$.
We write $\LV[\code{x} := w]$ for the function where
$\LV[\code{x} := w](\code{x}) = w$ and
$\LV[\code{x} := w](\code{y}) = \LV(\code{y})$
for all $\code{y} \neq \code{x}$.

\vspace*{.15cm}
\noindent
\fbox{
\begin{minipage}{11.8cm}
\label{rule:load}
\small
\mbox{\small \textbf{\hspace*{-.2cm}\code{load} from initialized allocated memory ($p\!:\!\!$ ``\code{x = load ty, ty* ad}'', $\!\codevar{x},\codevar{ad} \in \Ids\!$)}}\\
\vspace*{-.2cm}\\
\centerline{$\frac{\parbox{6.6cm}{\centerline{
$s = (\Pos, \; \LV, \; \AL, \; \PT, \; \LI, \; \KB)$}\vspace*{.1cm}}}
{\parbox{7.4cm}{\vspace*{.1cm} \centerline{
$s' = (\Pos^+, \; \LV[\code{x}:=w], \; \AL, \; \PT, \; \LI, \; \KB \cup \{w = w_2\})$}}}\;\;\;\;$
\mbox{if  $w \in \Vsym$ is fresh and}} \vspace*{-.15cm}
{\small
\begin{itemize}
\item[$\bullet$] there is $\alloc{v_1}{v_2} \in \AL$ with $\models \, \stateformula{s} \implies (v_1 \leq \LV(\code{ad}) \wedge \LV(\code{ad})+\sizeOf(\code{ty})-1 \leq v_2)$
\item[$\bullet$] there are $w_1, w_2 \in \Vsym$ with $\models \, \stateformula{s} \implies (\LV(\code{ad}) = w_1)$ and $(w_1 \pointsto[\code{ty}] w_2) \in \PT$
\end{itemize}}
\end{minipage}}
\vspace*{.15cm}

In our example,
the \code{entry} block comprises the first three lines of the \sfC{} program and
the initialization of the pointer to the loop variable $\code{k}$:
First,  a non-deterministic unsigned integer is assigned to \code{n}, i.e.,
 $(\code{n} = v_\code{n}) \in \LV^{\GraphInitCmpZero}$, where
 $v_\code{n}$ is not restricted.
  Moreover,  memory for the
pointers \code{tail\_ptr} and \code{k\_ad} is allocated and they
point to \code{tail = NULL} and \code{k = 0}, respectively
($\code{tail\_ptr} = v_{\code{tp}}$
and $\code{k\_ad} = v_{\code{k\_ad}}$ 
with $(v_{\code{tp}} \pointsto[\code{list*}] 0),
(v_{\code{k\_ad}} \pointsto[\code{i32}] 0)
\in \PT^\GraphInitCmpZero$). For
simplicity, in \cref{fig:ForLoopFirstIteration}
we use concrete values directly instead of introducing fresh variables
for them. Since we assume a 64-bit architecture, 
\code{tail\_ptr}'s allocation contains 8 bytes. For the integer value of \code{k},
only 4 bytes are allocated. Alignments and pointer sizes depend on the memory layout and are given
in the \LLVM{} program.

State $\GraphInitCmpOne$ results from $\GraphInitCmpZero$ by evaluating the \code{load}
instruction at
$(\cmpFor{}, 0)$, see the above \code{load} rule. 
Thus, there is an
\emph{evaluation edge}
 from
 $\GraphInitCmpZero$ to
$\GraphInitCmpOne$.

The next instruction is an \underline{i}nteger
\underline{c}o\underline{mp}arison whose Boolean return value depends on whether the
\underline{u}nsigned value of \code{k} is \underline{l}ess \underline{t}han the one
of \code{n}. If we cannot decide the validity of a comparison, we refine
the state into two successor states.
Thus, the states $\GraphInitCmpOneRefTrue$ and $\GraphInitCmpOneRefFalse$ 
(with
$(v_\code{n} > 0) \in \KB^\GraphInitCmpOneRefTrue$ and
 $(v_\code{n} \leq 0) \in\KB^\GraphInitCmpOneRefFalse$) are
reached by \emph{refinement edges} from State $\GraphInitCmpOne$. Evaluating
$\GraphInitCmpOneRefTrue$ yields $\code{kltn} = 1$ in $\GraphInitCmpTwoTrue$. Therefore, the 
\underline{br}anch instruction leads to the block \bodyFor{} in State
$\GraphInitBodyZero$. State $\GraphInitCmpOneRefFalse$ is evaluated to a state with
$\code{kltn} = 0$. This path branches to the block \initPtr{} and terminates quickly as
\code{tail\_ptr} points to an empty list.

The instruction at $(\bodyFor{}, 0)$ allocates 16 bytes of memory
starting at $v_{\code{mem}}$
in State $\GraphInitBodyOne$.
The next instruction casts the pointer to the allocation from \code{i8*} to
\code{list*} and assigns it to \code{curr}. 
Now the allocated area can be treated as a list element.
Then \code{nondet\_uint()} is invoked to assign a 32-bit integer to \code{nondet}.
The \code{getelementptr} instruction computes the address of the integer field of the list
element by indexing this field (the second \code{i32 0}) based on the start address
(\code{curr}). The first index (\code{i32 0}) specifies that a field of
\code{*curr} itself is computed and not of another list stored after \code{*curr}. Since 
the address of the integer value of the list element coincides with the start address of
the list element, this instruction assigns $v_\code{mem}$ to \code{curr\_val}. 
Afterwards, the value of \code{nondet} is stored at \code{curr\_val} ($v_{\code{mem}}
\pointsto[\code{i32}] v_\code{nd}$), the value \code{0} stored at $v_\code{tp}$ is loaded
to \code{tail}, and a second \code{getelementptr} instruction computes the address of the
recursive field of the current list element ($v_{\code{cn}} = v_{\code{mem}} + 8$) and
assigns it to \code{curr\_next}, leading to 
state $\GraphInitBodySix$.
In the path to $\GraphInitBodyTen$,  the values of \code{tail} and \code{curr} are stored
at \code{curr\_next} and \code{tail\_ptr},  respectively ($v_{\code{cn}}
\pointsto[\code{list*}] 0$, $v_{\code{tp}} \pointsto[\code{list*}]
v_{\code{mem}}$). Finally, the incremented value of \code{k} is assigned to \code{kinc}
and stored at \code{k\_ad} ($v_{\code{k\_ad}} \pointsto[\code{i32}] 1$).

To ensure a finite graph construction, when a program position is reached for the second
time, we try to merge the states at this position to a \emph{generalized}
state. However, this is only meaningful if the domains of the $\LV$ functions
 of the two states coincide (i.e., the states consider the same program
variables). Therefore, after the branch from the loop body back to  \cmpFor{}
(see State $\GraphInitCmpZeroA$ in Fig.\ \ref{fig:ForLoopSecondIteration}), we evaluate the loop a
second time and reach $\GraphInitCmpZeroB$. Here, a second list element with 
\def\distB{0.85cm}
\begin{wrapfigure}[5]{r}{4.2cm}
\vspace*{-0.85cm}
\begin{tikzpicture}[node distance=2cm, auto]
\hspace*{-0.15cm}
\node (CmpZeroA) at (-0.67,0.07) {\small{$\GraphInitCmpZeroA:$}};
\node[label={[label distance=-0.55cm]0:{\scriptsize{$\;\;v_\codevar{mem}$}}}]  (vmem) at (0,0) {};
\node[data, right=0.35cm of vmem] (L1A) {\scriptsize{$\;\;v_\codevar{nd}$} \nodepart{second} \scriptsize{0}};
   
\node[above=0.17cm of $(L1A.west)!0.5!(L1A.text split)$] (o1L1A) {\tiny{\code{value}}};
\node[above=0.17cm of $(L1A.east)!0.5!(L1A.text split)$] (o2L1A) {\tiny{\code{next}}};

\path[ptr]  ($(vmem)+(0.25,0)$) --++(right:2mm)  |- (L1A.text west);

\node (CmpZeroB) at ($(-0.67,0.07)+(0,-\distB)$) {\small{$\GraphInitCmpZeroB:$}};
\node[] (distanceB) at (0,-\distB) {};
\node[label={[label distance=-0.55cm]0:{\scriptsize{$\;\;w_\codevar{mem}$}}}]  (wmem) at (distanceB) {};
\node[data, right=0.35cm of wmem] (L1B) {\scriptsize{$\;\;w_\codevar{nd}$} \nodepart{second}};
\node[data, right=0.2cm of L1B]   (L2B) {\scriptsize{$\;\;v_\codevar{nd}$} \nodepart{second} \scriptsize{0}};
   
\node[above=0.17cm of $(L1B.west)!0.5!(L1B.text split)$] (o1L1B) {\tiny{\code{value}}};
\node[above=0.17cm of $(L1B.east)!0.5!(L1B.text split)$] (o2L1B) {\tiny{\code{next}}};
\node[above=0.17cm of $(L2B.west)!0.5!(L2B.text split)$] (o1L2B) {\tiny{\code{value}}};
\node[above=0.17cm of $(L2B.east)!0.5!(L2B.text split)$] (o2L2B) {\tiny{\code{next}}};

\path[ptr]  ($(wmem)+(0.25,0)$) --++(right:2mm)  |- (L1B.text west);
\draw[fill] ($(L1B.east)!0.5!(L1B.text split)$) circle (0.05);
\draw[ptr]  ($(L1B.east)!0.5!(L1B.text split)$) --++(right:2mm) |- (L2B.text west);

\end{tikzpicture}
\vspace*{-.1cm}
\end{wrapfigure}
value
$w_\code{nd}$ and a \code{next} pointer $w_\code{cn}$ pointing to\linebreak $v_\code{mem}$ has been stored at a new
allocation
$\llbracket{}w_\code{mem},\linebreak w_\code{mem}^\mathit{end}\rrbracket$. Now, \code{curr} points to
the new element and \code{k} has been incremented again, so \code{k\_ad} points to
2.
  
\footnotesize

\begin{figure}[t]
\centering
\begin{tikzpicture}[node distance = \ydist and \xdist]
\scriptsize
\def\widetwidth{6.2cm}
\def\smalltwidth{4.75cm}
\def\fulllinewidhth{11cm}
\def\edgenodedist{0.2cm}
\def\FirstIndentwidth{3cm}
\def\SecondIndentwidth{2cm}
\tikzset{invisible/.style={opacity=0}}
\tikzstyle{state}=[
                           inner sep=2pt,
                           font=\scriptsize,
                           draw]

\node[state, align=left, label={[xshift=-\labelxshift]2:$\GraphInitCmpZeroA$}] (1) 
{$(\cmpFor, 0),\;
  \{
   \code{n} = v_{\code{n}}, \,
   \code{tail\_ptr} = v_{\code{tp}}, \,
   \code{mem} = v_{\code{mem}}, \,
   \code{curr} = v_{\code{mem}}, \,
   \code{nondet} = v_{\code{nd}}, \,
   \code{curr\_val} = v_{\code{mem}},$\\ 
   $\,\code{curr\_next} = v_{\code{cn}}, \,
\code{k} = 0, \,
   \code{kinc} = 1, \,
   ...\},\;
  \{
   \alloc{v_{\code{tp}}}{v_{\code{tp}}^\mathit{end}}, \,
   \alloc{v_{\code{k\_ad}}}{v_{\code{k\_ad}}^\mathit{end}}, \,
   \alloc{v_{\code{mem}}}{v_{\code{mem}}^\mathit{end}}
   \},$\\
 $\{
   v_{\code{tp}} \pointsto[\code{list*}] v_{\code{mem}}, \,
   v_{\code{k\_ad}} \pointsto[\code{i32}] 1, \,
   v_{\code{mem}} \pointsto[\code{i32}] v_{\code{nd}}, \,
   v_{\code{cn}} \pointsto[\code{list*}] 0
   \}, \;
  \emptyset,$\\
 $\{ v_{\code{n}} > 0, \,
    v_{\code{k\_ad}}^\mathit{end} = v_{\code{k\_ad}} + 3, \,
   v_{\code{tp}}^\mathit{end} = v_{\code{tp}} + 7, \,
   v_{\code{mem}}^\mathit{end} = v_{\code{mem}} + 15, \,
   v_{\code{cn}} = v_{\code{mem}} + 8, \,
   ... \}$
};

\node[state, below=of 1, align=left, label={[xshift=-\labelxshift]2:$\GraphInitCmpZeroB$}] (2) 
{$(\cmpFor, 0),\;
  \{   
   \code{n} = v_{\code{n}}, \,
   \code{tail\_ptr} = v_{\code{tp}}, \,
   \code{mem} = w_{\code{mem}}, \,
   \code{curr} = w_{\code{mem}}, \,
   \code{nondet} = w_{\code{nd}}, \,
   \code{curr\_val} = w_{\code{mem}},$\\
   $\,\code{curr\_next} = w_{\code{cn}}, \,
\code{k} = 1, \,
   \code{kinc} = 2, \,
   ...\},\;
  \{
   \alloc{v_{\code{tp}}}{v_{\code{tp}}^\mathit{end}}, \,
   \alloc{v_{\code{k\_ad}}}{v_{\code{k\_ad}}^\mathit{end}}, \,
   \alloc{v_{\code{mem}}}{v_{\code{mem}}^\mathit{end}}, \,
   \alloc{w_{\code{mem}}}{w_{\code{mem}}^\mathit{end}}
   \},$\\
 $\{
   v_{\code{tp}} \pointsto[\code{list*}] w_{\code{mem}}, \,
   v_{\code{k\_ad}} \pointsto[\code{i32}] 2, \,
   v_{\code{mem}} \pointsto[\code{i32}] v_{\code{nd}}, \,
   v_{\code{cn}} \pointsto[\code{list*}] 0, \,
   w_{\code{mem}} \pointsto[\code{i32}] w_{\code{nd}}, \,
   w_{\code{cn}} \pointsto[\code{list*}] v_{\code{mem}}
   \}, \;
  \emptyset,$\\
 $\{v_{\code{n}}> 1, 
  v_{\code{k\_ad}}^\mathit{end} = v_{\code{k\_ad}}\!+\!3, 
   v_{\code{tp}}^\mathit{end} = v_{\code{tp}}\!+\!7,  
   v_{\code{mem}}^\mathit{end} = v_{\code{mem}}\!+\!15, 
   v_{\code{cn}} = v_{\code{mem}}\!+\!8, 
   w_{\code{mem}}^\mathit{end} = w_{\code{mem}}\!+\!15, 
   w_{\code{cn}} = w_{\code{mem}}\!+\!8, 
   ... \}$
};

\draw[omit-edge] (1) -- (2);

\end{tikzpicture}
\paper{\vspace*{-.5cm}}
\caption{Second Iteration of the \code{for} Loop}
\label{fig:ForLoopSecondIteration}
\vspace*{-.5cm}
\end{figure}
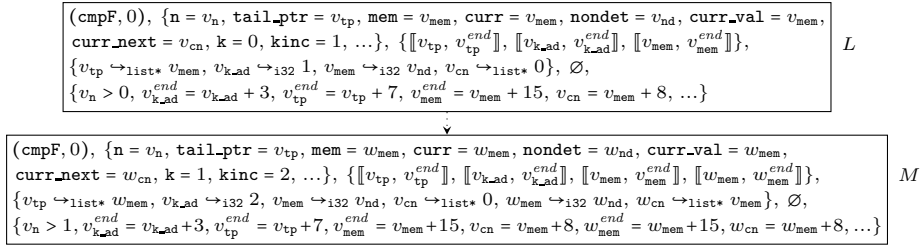
\normalsize

\subsection{\mbox{Inferring List Invariants and Generalization of States}}
\label{sect:inferring_inv}

As mentioned, our goal is to merge $\GraphInitCmpZeroA$ and
  $\GraphInitCmpZeroB$ to a more general state $\GraphInitCmpZeroGen$ that repre\-sents all states which are
represented by  $\GraphInitCmpZeroA$ or
  $\GraphInitCmpZeroB$.
The challenging part during\linebreak generalization is to find loop invariants
  automatically that always hold at this po\-sition and provide sufficient information to
  prove termination of the loop.
For $\GraphInitCmpZeroGen$,\linebreak we can neither use the information
that \code{curr} points to a struct whose \code{next} field contains the null pointer (as
  in $\GraphInitCmpZeroA$), nor that its \code{next} field points to another struct
  whose \code{next} field contains the null pointer (as in $\GraphInitCmpZeroB$).
  
With the approach of \cite{LLVM-JAR17},
 when merging states like $\GraphInitCmpZeroA$ and
 $\GraphInitCmpZeroB$
where a list has\linebreak different lengths, 
the merged state would only contain those list elements
that are allocated in both states (often this is only the first element).
Then elements which are the null pointer
in one but not in 
the other state are lost. Hence, proving memory safety (and thus, also 
termination) fails when the list is traversed afterwards, since now there might be
\code{next} pointers to non-allocated memory. 

We solve this problem by introducing \emph{list invariants}. In our
example, we will\linebreak infer an invariant stating that \code{curr} points to a list of
length $x_\ell \geq 1$. This invariant also implies that all struct fields are allocated
  and that there is no sharing.

To this end, we  adapt the merging heuristic from  \cite{LLVM-JAR17}.
To merge two states $s$ and $s'$ at the same program position with $\domain(\LV^s) =
\domain(\LV^{s'})$, we introduce a fresh symbolic variable $x_\codevar{var}$ for each program
variable \codevar{var} and use instantiations $\mu_\StateA$ and $\mu_\StateB$ \pagebreak which map
$x_\codevar{var}$ to the corresponding symbolic variables of $\StateA$ and
$\StateB$. For the merged state $\StateGen$, we set
$\LV^\StateGen(\codevar{var}) = x_\codevar{var}$. Moreover, we identify corresponding
variables that
only occur in the memory components and extend  $\mu_\StateA$ and $\mu_\StateB$ accordingly.
In a second step, we check which constraints from the memory components and the knowledge
base hold in both states in order to find invariants that we can add to the memory
components and the knowledge base of $\StateGen$. For example, if
$\alloc{\mu_\StateA(x)}{\mu_\StateA(x^{end})} \in \AL^\StateA$ and
$\alloc{\mu_\StateB(x)}{\mu_\StateB(x^{end})} \in \AL^\StateB$ for $x, x^{end} \in \Vsym$,
then $\alloc{x}{x^{end}}$ is added to $\AL^\StateGen$.  To extend this heuristic
to lists, we have to regard several memory entries together. 
If there is an
$\codevar{ad} \in \Ids$ such that $\mu_\StateA(x_\codevar{ad}) = v_1^\mathit{start}$ and
$\mu_\StateB(x_\codevar{ad}) = w_1^\mathit{start}$ both point to lists of type \codevar{ty}
but of different lengths $\ell_\StateA \neq
\ell_\StateB$ with $\ell_\StateA, \ell_\StateB \geq 1$, then we create a list invariant.

For a state $s$ we say that $v_1^\mathit{start}$ \emph{points to a list of
type \codevar{ty} with $n$ fields and length $\ell_\StateA$ with allocations
    $\alloc{v^\mathit{start}_k}{v^\mathit{end}_k}$ and
values $v_{k,i}$}
(for $1 \leq k \leq \ell_\StateA$ and $1 \leq i \leq n$)
if the following
conditions $(a)-(d)$ hold: 
\begin{itemize}
	\item[$(a)$] \codevar{ty} is an \LLVM{} struct type with subtypes $\codevar{ty}_i$
          and field offsets $\mathit{off}_i \in \N$ for all $1 \leq i \leq n$ such that there exists exactly one $1 \leq j \leq n$ with $\codevar{ty}_j = \codevar{ty*}$.
	\item[$(b)$] There exist pairwise different
          $\alloc{v^\mathit{start}_k}{v^\mathit{end}_k} \in \AL^\StateA$ for all $1 \leq k
          \leq \ell_\StateA$ and $\models \, \stateformula{\StateA} \implies
v^\mathit{end}_k = v^\mathit{start}_k + \sizeOf(\codevar{ty})-1$.
	\item[$(c)$] For all $1 \leq k \leq \ell_\StateA$ and $1 \leq i \leq n$ there
          exist $v^\mathit{start}_{k,i}, v_{k,i} \in \Vsym$ with
          $\models \, \stateformula{\StateA} \implies
          v^\mathit{start}_{k,i} = v^\mathit{start}_k + \mathit{off}_i$ and
          $(v^\mathit{start}_{k,i} \pointsto[\codevar{ty}_i] v_{k,i}) \in \PT^\StateA$.
	\item[$(d)$] For all $1 \leq k < \ell_\StateA$ we have $\models \,
          \stateformula{\StateA} \implies v_{k,j} = v^\mathit{start}_{k+1}$. 
\end{itemize}
Condition $(a)$ states that \code{ty} is a list type with $n$ fields, where
the pointer to the next element is  in the $j$-th field. In $(b)$ we ensure that each list element has a unique
allocation of the correct size where $v_1^\mathit{start}$ is the start address of the
first allocation.
  Condition $(c)$ requires that for the $k$-th element, the initial address
plus\linebreak the $i$-th offset points to a value $v_{k,i}$ of type $\codevar{ty}_i$. Finally, $(d)$
states that the recur\-sive field of each element indeed points to the initial
address of the next element.

Then, for fresh $x_\ell,x_i,\hat{x}_i \in \Vsym$, we add
the following list invariant to $\LI^\StateGen$.

\vspace*{-.2cm}

\begin{equation}
  \label{exampleListInvariantMergedState}
x_{\codevar{ad}} \pointstorec[\codevar{ty}]{x_\ell} [(\mathit{off}_i: \codevar{ty}_i:
  {x_i..\hat{x}_i})]_{i=1}^n
\end{equation}

To ensure
that the allocations expressed by the list invariant are disjoint from all
allocations in $\AL^\StateGen$, we do not use
the list allocations $\alloc{v^\mathit{start}_k}{v^\mathit{end}_k}$ to infer
generalized allocations in $\AL^\StateGen$.
Similarly, to create
$\PT^\StateGen$, we only use entries 
$v \hookrightarrow_{\codevar{ty}} w$ from
$\PT^\StateA$
and $\PT^\StateB$ where $v$ is disjoint from the list
addresses,
i.e., where $\models \stateformula{s} \implies v < v_k^\mathit{start} \lor v >
v_k^\mathit{end}$ holds for all $1 \leq k \leq \ell_s$ and analogously for $s'$.
Moreover, we add formulas to $\KB^\StateGen$ stating that $(A)$ the length $x_\ell$ of
the list is at least the smaller length of the merged lists, $(B)$ $x_\ell$
is equal to all variables $x$ which result from merging variables $v$ and $w$ that
are equal to the lengths $\ell_\StateA$ and $\ell_\StateB$ in $\StateA$ and $\StateB$, and
$(C)$   the symbolic variable $x_i$ for the value of the $i$-th field of the first list
element is equal
to all variables $x$ with $\mu_\StateA(x) = v_{1,i}$ and $\mu_\StateB(x) = w_{1,i}$ where
$v_{1,i}$ and $w_{1,i}$ are the values of the $i$-th field of the first list element in
$s$ and $s'$
(and
analogously for the values $\hat{x}_i$ of the last list element):
\begin{itemize}
	\item[$(A)\!\!$] $\min(\ell_\StateA,\ell_\StateB) \leq x_\ell$
	\item[$(B)\!\!$] {\small $\bigwedge_{x \in \mu^{-1}_\StateA(v) \cap \mu^{-1}_\StateB(w)} x_\ell = x$ for all $v,w \in \Vsym$ with $\models \, \stateformula{\StateA} \implies v = \ell_\StateA$ and $\models \, \stateformula{\StateB} \implies w = \ell_\StateB$}
	\item[$(C)\!\!$] {\small $\bigwedge_{x \in \mu^{-1}_\StateA(v_{1,i}) \cap \mu^{-1}_\StateB(w_{1,i})} x_i = x$\; and \;$\bigwedge_{x \in \mu^{-1}_\StateA(v_{\ell_{\StateA},i}) \cap \mu^{-1}_\StateB(w_{\ell_{\StateB},i})} \hat{x}_i = x$\; for all $1 \leq i \leq n$}
\pagebreak\end{itemize}

To identify the variables in the list invariant 
\eqref{exampleListInvariantMergedState}
of $\StateGen$ with the corresponding values in $\StateA$ and $\StateB$, the instantiations $\mu_\StateA$ and $\mu_\StateB$ are extended such that $\mu_\StateA(x_\ell) = \ell_\StateA$, $\mu_\StateB(x_\ell) = \ell_\StateB$, $\mu_\StateA(x_i) = v_{1,i}$, $\mu_\StateB(x_i) = w_{1,i}$, $\mu_\StateA(\hat{x}_i) = v_{\ell_{\StateA},i}$, and $\mu_\StateB(\hat{x}_i) = w_{\ell_{\StateB},i}$ for all $1 \leq i \leq n$.
Similarly, if there already exist list invariants in $\StateA$ and $\StateB$, for each pair of corresponding variables a new variable is introduced and mapped to its origin by $\mu_\StateA$ and $\mu_\StateB$.
This adaption of the merging heuristic only concerns the result of merging but
not the rules \emph{when} to merge two states. Thus, 
the same reasoning as in \cite{LLVM-JAR17} can be used to prove soundness and termination of
merging.

\footnotesize

\begin{figure}[t]
\centering
\begin{tikzpicture}[node distance = \ydist and \xdist]
\scriptsize
\def\widetwidth{6.2cm}
\def\smalltwidth{4.75cm}
\def\fulllinewidhth{11cm}
\def\edgenodedist{0.2cm}
\def\FirstIndentwidth{3cm}
\def\SecondIndentwidth{2cm}
\tikzset{invisible/.style={opacity=0}}
\tikzstyle{state}=[
                           inner sep=2pt,
                           font=\scriptsize,
                           draw]

\node[state, align=left, label={[xshift=-\labelxshift]91:$\GraphInitCmpZeroA$}] (1) {};

\node[state, below=of 1, align=left, label={[xshift=-\labelxshift]269:$\GraphInitCmpZeroB$}] (2) {};

\node[state, right=of 1, yshift=-.23cm, xshift=.5cm, align=left, label={[xshift=-\labelxshift]175:$\GraphInitCmpZeroGen$}] (3) 
{$(\cmpFor, 0),\;
  \{
     \code{n} = x_{\code{n}}, \,
     \code{tail\_ptr} = x_{\code{tp}}, \,
   \code{mem} = x_{\code{mem}}, \,
   \code{curr} = x_{\code{mem}}, \,
   \code{nondet} = x_{\code{nd}}, \,
   \code{curr\_val} = x_{\code{mem}},$\\
$\,
   \code{curr\_next} = x_{\code{cn}}, \, \code{k} = x_{\code{k}}, \,
   \code{kinc} = x_{\code{kinc}}, \,
    ...\},\;
  \{
   \alloc{x_{\code{tp}}}{x_{\code{tp}}^\mathit{end}}, \,
   \alloc{x_{\code{k\_ad}}}{x_{\code{k\_ad}}^\mathit{end}}
   \},$\\
 $\{
   x_{\code{tp}} \pointsto[\code{list*}] x_{\code{mem}}, \,
   x_{\code{k\_ad}} \pointsto[\code{i32}] x_{\code{kinc}}
  \}, \;
  \{
   x_{\code{mem}} \pointstorec[\code{list}]{x_{\ell}}
     [(0: \code{i32}: x_{\code{nd}}..\hat{x}_{\code{nd}}),
      (8: \code{list*}: x_{\code{next}}..0)]
   \},$\\
 $\{
      x_{\code{n}} > x_{\code{k}}, \,
x_{\code{k\_ad}}^\mathit{end} = x_{\code{k\_ad}} + 3, \,
   x_{\code{tp}}^\mathit{end} = x_{\code{tp}} + 7, \,
   x_{\code{cn}} = x_{\code{mem}} + 8, \,
   x_{\code{kinc}} = x_{\code{k}} + 1, \,
   1 \leq x_\ell, \,
   x_{\ell} = x_{\code{kinc}}, \,
   ... \}$
};

\draw[omit-edge] (1) -- (2);

\coordinate (merge-p1) at ($(1.east)+(\edgenodedist*1.6,0)$);
\coordinate (merge-p2) at (merge-p1 |- 2.east);
\draw[-,rounded corners=3pt] (1.east) -- (merge-p1) -- (merge-p2) -- (2.east);

\coordinate (gen-p1) at ($(1.south east)+(\edgenodedist*1.6,-\ydist/2)$);
\draw[gen-edge] (gen-p1) --  (3.west);

\end{tikzpicture}
\vspace*{-.1cm}
\caption{Merging of States}
\label{fig:ForLoopMerging}
\vspace*{-.6cm}
\end{figure}
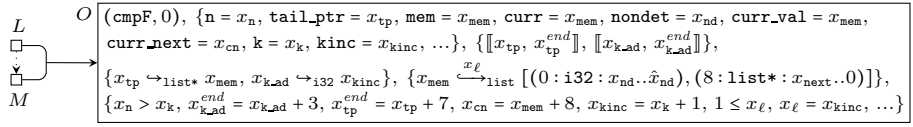
\normalsize

In our example, $\GraphInitCmpZeroA$
and $\GraphInitCmpZeroB$ 
contain lists of length $\ell_\GraphInitCmpZeroA =
1$ and $\ell_\GraphInitCmpZeroB = 2$. To
ease the presentation, we re-use variables that are known to be equal instead of
introducing fresh variables. If $x_\code{mem}$ is the variable for the program variable
\code{curr}, we have $\mu_\GraphInitCmpZeroA(x_\codevar{mem}) = v_\code{mem}$ and
$\mu_\GraphInitCmpZeroB(x_\codevar{mem}) = w_\code{mem}$. Indeed, $v_\code{mem}$
resp.\ $w_\code{mem}$ points to a list with values $v_{k,i}$ resp. $w_{k,i}$ as defined in
$(a)$--$(d)$: For the type \code{list} with $n=2$, $\codevar{ty}_1 = \codevar{i32}$,
$\codevar{ty}_2 = \codevar{list*}$, $\mathit{off}_1 = 0$, $\mathit{off}_2 = 8$, and
$j = 2$ (see $(a)$), we have 
$\alloc{v_\codevar{mem}}{{v_{\codevar{mem}}^\mathit{end}}} \in \AL^\GraphInitCmpZeroA$ and
$\alloc{v_\codevar{mem}}{{v_{\codevar{mem}}^\mathit{end}}}$, $
\alloc{w_\codevar{mem}}{{w_{\codevar{mem}}^\mathit{end}}} \in \AL^\GraphInitCmpZeroB$, all
consisting of $\sizeOf(\codevar{list}) = 16$ bytes, see $(b)$.
We have $(v_{\code{mem}} \pointsto[\code{i32}] v_{\code{nd}}), (v_{\code{cn}} \pointsto[\code{list*}] 0)
\in \PT^\GraphInitCmpZeroA$ with $(v_{\code{cn}} = v_{\code{mem}} + 8) \in
\KB^\GraphInitCmpZeroA$ and $(v_{\code{mem}} \pointsto[\code{i32}] v_{\code{nd}}), (v_{\code{cn}}
\pointsto[\code{list*}] 0), (w_{\code{mem}} \pointsto[\code{i32}] w_{\code{nd}}), (w_{\code{cn}}
\pointsto[\code{list*}] v_{\code{mem}}) \in \PT^\GraphInitCmpZeroB$ with $(v_{\code{cn}} =
v_{\code{mem}} + 8), (w_{\code{cn}} = w_{\code{mem}} + 8) \in \KB^\GraphInitCmpZeroB$
(see  $(c)$), so the first list element in $\GraphInitCmpZeroB$ points to the second
one (see  $(d)$). 
Therefore, when merging $\GraphInitCmpZeroA$ and $\GraphInitCmpZeroB$ to a new state
$\GraphInitCmpZeroGen$ (see \cref{fig:ForLoopMerging}), the lists are merged to a list invariant of variable length
$x_\ell$ and we add the formulas $(A)$ $1 \leq x_\ell$ and $(B)$ $x_\ell = x_\code{kinc}$
to $\KB^\GraphInitCmpZeroGen$. 
By $(C)$, the \codevar{i32} value of the first element is identified with
$x_\code{nd}$, since $\mu_\GraphInitCmpZeroA(x_\code{nd})$ is equal to the first value of
the first list element in $\GraphInitCmpZeroA$ and $\mu_\GraphInitCmpZeroB(x_\code{nd})$ is
equal to the first value of the first list element in $\GraphInitCmpZeroB$. Similarly, the
values of the last list elements are identified
  with $0$, as in $\GraphInitCmpZeroA$ and $\GraphInitCmpZeroB$.

After merging $s$ and $s'$ to a generalized state $\overline{s}$, we
continue symbolic execution from $\overline{s}$.
 The next time we reach the same program position, we might have to merge
the corresponding states again. As described in \cite{LLVM-JAR17}, we use
a heuristic for constructing the SEG which ensures
that after a finite number of iterations, a state is reached
that only represents concrete states that are also represented by an \emph{already existing} (more
general) state in the SEG. Then symbolic execution can\linebreak continue from this more general state
instead. So with this heuristic,
the construction always ends in a complete SEG or  an SEG containing the state $\ERROR$.

We formalized the concept of ``generalization''  by a symbolic
execution rule in \cite{LLVM-JAR17}.
Here, the state $\overline{s}$ is a generalization of $s$ if the conditions
$(g1)-(g6)$ hold.

Condition $(g1)$ prevents cycles consisting
only of refinement and generalization edges in the graph.
Condition $(g2)$ states that the instantiation $\mu \colon
\Vsym(\overline{s}) \to \Vsym(s) \cup \Z$ maps symbolic variables from the more general state $\overline{s}$ to their counterparts from the more specific state $s$ such that they correspond to the same program variable.
Conditions  $(g3)$--$(g6)$ ensure that all knowledge present in $\overline{\KB}$, $\overline{\AL}$, $\overline{\PT}$, and $\overline{\LI}$ still holds in $s$ with the applied instantiation.

\vspace*{.12cm}
\noindent
\fbox{
\begin{minipage}{11.8cm}
\label{rule:generalization}
\small
\mbox{\small \textbf{\hspace*{-.15cm}generalization with instantiation $\mu$}}\\
\vspace*{-.2cm}\\
\centerline{$\frac{\parbox{4.6cm}{\centerline{
$s = (\Pos, \; \LV, \; \AL, \; \PT, \; \LI, \; \KB)$}\vspace*{.1cm}}}
{\parbox{3.2cm}{\vspace*{.1cm} \centerline{
$\overline{s} = (\Pos, \; \overline{\LV}, \; \overline{\AL}, \;  \overline{\PT}, \; \overline{\LI}, \; \overline{\KB})$}}}\;\;\;\;$
\mbox{if}} \vspace*{-.3cm}
{\small
\begin{itemize}
\item[$(g1)$] $s$ has an incoming evaluation edge
\item[$(g2)$] $\domain(\LV) = \domain(\overline{\LV})$  and
$\LV(\code{var}) = \mu(\overline{\LV}(\code{var}))$ for all $\code{var} \in \Ids$ where $\LV$ and $\overline{\LV}$ are defined
\item[$(g3)$] $\models \stateformula{s} \implies \mu(\overline{\KB})$
\item[$(g4)$] if $\alloc{x_1}{x_2} \in \overline{\AL}$, then $\alloc{v_1}{v_2} \in \AL$ with $\models \stateformula{s} \implies  v_1 = \mu(x_1) \wedge v_2 = \mu(x_2)$
\item[$(g5)$] if $(x_1 \hookrightarrow_{\codevar{ty}} x_2) \in \overline{\PT}$,\\then $(v_1 \hookrightarrow_{\codevar{ty}} v_2) \in \PT$ with $\models \stateformula{s} \implies v_1 = \mu(x_1) \wedge v_2 = \mu(x_2)$ 
\item[$(g6)$] \label{bullet-point-generalization-list-inv} if $(x_{\codevar{ad}} \pointstorec[\codevar{ty}]{x_\ell} [(\mathit{off}_i: \codevar{ty}_i: {x_i..\hat{x}_i})]_{i=1}^n) \in \overline{\LI}$,\\then either $(v_{\codevar{ad}} \pointstorec[\codevar{ty}]{v_\ell} [(\mathit{off}_i: \codevar{ty}_i: {v_i..\hat{v}_i})]_{i=1}^n) \in \LI$ with
    \begin{itemize}
        \item $\models \stateformula{s} \implies v_\codevar{ad} = \mu(x_\codevar{ad}) \wedge v_\ell = \mu(x_\ell)$ and
        \item $\models \stateformula{s} \implies v_i = \mu(x_i) \wedge \hat{v}_i = \mu(\hat{x}_i)$ for all $1 \leq i \leq n$,
    \end{itemize}
    or $v^\mathit{start}_1$ points to a list of type \codevar{ty} and length $\ell$
with allocations
    $\alloc{v^\mathit{start}_k}{v^\mathit{end}_k}$ and
       values $v_{k,i}$ (for $1 \leq k \leq \ell, 1 \leq i \leq n$) such that
    \begin{itemize}
        \item $\models \stateformula{s} \implies v^\mathit{start}_1 = \mu(x_\codevar{ad}) \wedge \ell = \mu(x_\ell)$,
        \item $\models \stateformula{s} \implies v_{1,i} = \mu(x_i) \wedge v_{\ell,i} = \mu(\hat{x}_i)$ for all $1 \leq i \leq n$, and
        \item if $(z_1 \hookrightarrow_{\codevar{ty}} z_2) \in \overline{\PT}$,\\then $\models \stateformula{s} \implies \mu(z_1) < v^\mathit{start}_k \lor \mu(z_1) > v^\mathit{end}_k$ for all $1 \leq k \leq \ell$.
    \end{itemize}
\end{itemize}}
\end{minipage}}
\vspace*{.12cm}

Condition $(g6)$ is new compared to \cite{LLVM-JAR17} and takes list
invariants into account.
So  for every list invariant $\overline{\listinv}$ of $\overline{s}$
there is either a corresponding list invariant $\listinv$ in $s$ such that lists
represented by $\listinv$ in $s$ are also represented by $\overline{\listinv}$ in
$\overline{s}$, or there is a concrete list in $s$ that is represented by
$\overline{\listinv}$ in $\overline{s}$. The last condition of the latter case ensures
that disjointness between
the memory domains of $\overline{\PT}$ and $\overline{\LI}$ is preserved.
See\arxiv{ App. \ref{app:proofs}}\paper{ \cite{report}}
for
the soundness proof of the extended generalization\linebreak rule, i.e., that every concrete state
represented by $s$ is also represented by $\overline{s}$.

Our merging technique always yields generalizations according to this
rule, i.e., the edges from $\GraphInitCmpZeroA$ and $\GraphInitCmpZeroB$
to $\GraphInitCmpZeroGen$ in \cref{fig:ForLoopMerging} are  generalization
edges.
Here, one\linebreak chooses
$\mu_\GraphInitCmpZeroA$ and
$\mu_\GraphInitCmpZeroB$
such that $\mu_\GraphInitCmpZeroA(x_\codevar{mem}) =
v_\codevar{mem}$, $\mu_\GraphInitCmpZeroA(x_\ell) = 1$,
$\mu_\GraphInitCmpZeroA(x_\codevar{nd}) = v_\codevar{nd}$,
$\mu_\GraphInitCmpZeroA(\hat{x}_\codevar{nd})\linebreak = v_\codevar{nd}$,
$\mu_\GraphInitCmpZeroA(x_\codevar{next}) = 0$,
$\mu_\GraphInitCmpZeroB(x_\codevar{mem}) = w_\codevar{mem}$,
$\mu_\GraphInitCmpZeroB(x_\ell) = 2$,
$\mu_\GraphInitCmpZeroB(x_\codevar{nd}) = w_\codevar{nd}$,
$\mu_\GraphInitCmpZeroA(\hat{x}_\codevar{nd}) = v_\codevar{nd}$, and
$\mu_\GraphInitCmpZeroB(x_\codevar{next}) = v_\codevar{mem}$. In both cases, all
conditions of the second case of $(g6)$ with $\ell_\GraphInitCmpZeroA = 1$ and
$\ell_\GraphInitCmpZeroB = 2$ are satisfied. With
$\mu_\GraphInitCmpZeroA(x_\codevar{kinc}) = 1$
resp. $\mu_\GraphInitCmpZeroB(x_\codevar{kinc}) = 2$, we also have $\models
\stateformula{\GraphInitCmpZeroA} \implies \mu_\GraphInitCmpZeroA(x_\ell) =
\mu_\GraphInitCmpZeroA(x_\codevar{kinc})$ resp. $\models \stateformula{\GraphInitCmpZeroB}
\implies \mu_\GraphInitCmpZeroB(x_\ell) = \mu_\GraphInitCmpZeroB(x_\codevar{kinc})$.

\vspace*{-.1cm}

\subsection{Adapting List Invariants}
\label{sect:adapting_inv}

To handle and modify list invariants, three of our symbolic execution rules have to be
changed.
 \cref{sect:List Extension} presents a variant of the \code{store}
rule where the list invariant is \emph{extended} by an
element. In \cref{sect:List Traversal},  we adapt the  \code{load}  rule 
to load values from the first list element
and
we present a variant of the \code{getelementptr} 
rule for list \emph{traversal}.
Soundness of our new rules is proved in\arxiv{ App.\  \ref{app:proofs}}\paper{ \cite{report}}.
For all other instructions, the symbolic execution rules from \cite{LLVM-JAR17} remain
unchanged.

\vspace*{-.25cm}

\subsubsection{List Extension}
\label{sect:List Extension}

\footnotesize

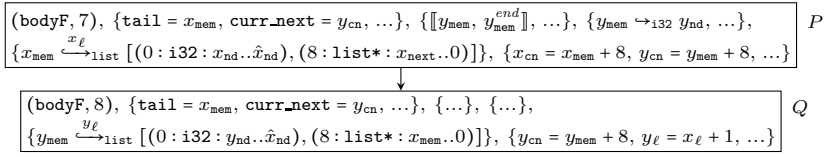
\begin{figure}[t]
\centering
\begin{tikzpicture}[node distance = \ydist and \xdist]
\scriptsize
\def\widetwidth{6.2cm}
\def\smalltwidth{4.75cm}
\def\fulllinewidhth{11cm}
\def\edgenodedist{0.2cm}
\def\FirstIndentwidth{3cm}
\def\SecondIndentwidth{2cm}
\tikzset{invisible/.style={opacity=0}}
\tikzstyle{state}=[
                           inner sep=2pt,
                           font=\scriptsize,
                           draw]

\node[state, align=left, label={[xshift=-\labelxshift]2:$\GraphInitBodyStoreStart$}] (1) 
{$(\bodyFor, 7), \;
  \{
   \code{tail} = x_{\code{mem}}, \,
   \code{curr\_next} = y_{\code{cn}}, \,
   ...\}, \;
  \{
   \alloc{y_{\code{mem}}}{y_{\code{mem}}^\mathit{end}}, \,
   ...\}, \;
  \{
   y_{\code{mem}} \pointsto[\code{i32}] y_{\code{nd}}, \,
   ...
  \},$\\
 $\{
   x_{\code{mem}} \pointstorec[\code{list}]{x_{\ell}}
     [(0: \code{i32}: x_{\code{nd}}..\hat{x}_{\code{nd}}),
      (8: \code{list*}: x_{\code{next}}..0)]
  \}, \;
  \{
   x_{\code{cn}} = x_{\code{mem}} + 8, \,
   y_{\code{cn}} = y_{\code{mem}} + 8, \,
   ... \}$
};

\node[state, align=left, below=of 1, label={[xshift=-\labelxshift]2:$\GraphInitBodyStoreRes$}] (2) 
{$(\bodyFor, 8), \;
  \{
   \code{tail} = x_{\code{mem}}, \,
   \code{curr\_next} = y_{\code{cn}}, \,
   ...\}, \;
  \{
   ...\}, \; 
  \{...\},$\\
 $\{
   y_{\code{mem}} \pointstorec[\code{list}]{y_{\ell}}
     [(0: \code{i32}: y_{\code{nd}}..\hat{x}_{\code{nd}}),
      (8: \code{list*}: x_{\code{mem}}..0)]
  \}, \;
  \{
   y_{\code{cn}} = y_{\code{mem}} + 8, \,
   y_{\ell} = x_{\ell} + 1, \,
   ... \}$
};

\draw[eval-edge] (1) -- (2);

\end{tikzpicture}
\vspace*{-.1cm}
\caption{Extending a List Invariant}
\label{fig:ForLoopStore}
\vspace*{-.5cm}
\end{figure}
\normalsize

After merging $\GraphInitCmpZeroA$ and $\GraphInitCmpZeroB$,
symbolic execution continues from the more general state
$\GraphInitCmpZeroGen$ in \rF{fig:ForLoopMerging}. Here, the values of \code{k} and \code{kinc} and the length of the
list are not concrete but any
positive (resp.\ non-negative) value with $x_\ell = x_\code{kinc} =
x_\code{k}+1$. The symbolic execution of $\GraphInitCmpZeroGen$ is similar to the steps
from $\GraphInitCmpZero$ to $\GraphInitBodySix$ in 
\cref{sect:symbexec} (see \rF{fig:ForLoopFirstIteration}). First, the value $x_\codevar{kinc}$ stored at \code{k\_ad} is
loaded to \code{k}. To distinguish whether \code{k $<$ n} still holds, the next state is
refined.
From
the refined state with \code{k $<$ n}, we enter the
loop body again. A new block $\alloc{y_{\codevar{mem}}}{y_{\codevar{mem}}^\mathit{end}}$
of 16 bytes is allocated and $y_{\codevar{mem}}$ is assigned to
\code{mem} and \code{curr}.
Then, a new unknown value $y_\code{nd}$ is assigned to \code{nondet}.
The address of the \code{i32} value of the current element (equal to
$y_{\code{mem}}$) is computed by the first \code{getelementptr} instruction of the loop
and the value $y_\codevar{nd}$ of \code{nondet} is stored at it. The second
\code{getelementptr} instruction computes the address $y_{\code{cn}}$ of the
recursive field and results in State $\GraphInitBodyStoreStart$ in
Fig. \ref{fig:ForLoopStore}, where $y_{\code{cn}} = y_{\code{mem}}+8$ is
added to $\KB^\GraphInitBodyStoreStart$. Now, \code{store} sets the
address of the \code{next} field to the head of the list created in the previous iteration.
Since this instruction extends the list by an element, instead of adding
$y_{\code{cn}} \pointsto[\code{list*}] x_{\code{mem}}$ to $\PT^\GraphInitBodyStoreRes$, we
extend the list invariant: The length is set to $y_\ell$ and identified with
$x_\ell+1$ in $\KB^\GraphInitBodyStoreRes$. The pointer $x_{\code{mem}}$ to the first
element is replaced by $y_{\code{mem}}$, while
the first recursive field in the list gets the value
$x_{\code{mem}}$. Since $(y_{\code{mem}} \pointsto[\code{i32}]
y_{\code{nd}}) \in \PT^\GraphInitBodyStoreStart$, $y_\codevar{nd}$ is
the value of the first \code{i32} integer in the list. 
We remove all entries from $\PT^\GraphInitBodyStoreRes$ that are already contained in the new list invariant, e.g., $y_{\code{mem}} \pointsto[\code{i32}] y_{\code{nd}}$.

To formalize this adaption of list invariants,
we introduce a modified
 rule for \code{store}
in addition to the one in
\cite{LLVM-JAR17}.
It handles the case where there 
is a
concrete list  at some address $v^{\mathit{start}}$, \code{pa} points to the $m$-th
field of this list's first element, one wants to store a value $\mathit{t}$ 
at the address \code{pa}, and one already has a list\linebreak invariant $l$ for the ``tail'' of the
list in the $j$-th field 
(if $m \neq j$) resp.\ for
the list at
\begin{wrapfigure}[12]{r}{5cm}
\def\distB{1.15cm}
\def\distC{2.5cm}
\def\distD{3.65cm}
\vspace*{-0.7cm}
\hspace*{-.15cm}\begin{tikzpicture}[node distance=2cm, auto]
\hspace*{-0.15cm}
\node (case1) at (-0.1,0.6) {\scriptsize{\underline{$m \neq j$:}}};
\node (case2) at ($(-0.1,0.6)+(0,-\distC)$) {\scriptsize{\underline{$m = j$:}}};
\node[label={[label distance=-0.65cm]0:{\scriptsize{$v^\mathit{start}$}}}]  (vstart1) at (0,0) {};
\node[data, right=0.42cm of vstart1] (L1A) {\scriptsize{\textcolor{white}{5}} \nodepart{second}};
\node[data, right=0.2cm of L1A]   (L2A) {\scriptsize{2} \nodepart{second}};
\node[right=0.2cm of L2A]   (L3A) {\scriptsize{\ldots}};

\path[ptr]  ($(vstart1)+(0.25,0)$) --++(right:2mm)  |- (L1A.text west);
\draw[fill] ($(L1A.east)!0.5!(L1A.text split)$) circle (0.05);
\draw[ptr]  ($(L1A.east)!0.5!(L1A.text split)$) --++(right:2mm) |- (L2A.text west);
\draw[fill] ($(L2A.east)!0.5!(L2A.text split)$) circle (0.05);
\draw[ptr]  ($(L2A.east)!0.5!(L2A.text split)$) --++(right:2mm) |- (L3A.west);

\node  (pa1) at ($(L1A.148)+(0,0.35)$) {\scriptsize{\code{pa}}};
\node  (vad1) at ($(L2A.148)+(0,0.35)$) {\scriptsize{$v_\mathit{ad}$}};

\path[ptr]  ($(pa1)-(0,0.1)$) -- (L1A.148);
\path[ptr]  ($(vad1)-(0,0.13)$) -- (L2A.148);
\draw[red,decorate,decoration={brace,amplitude=4pt}] ($(L2A.south west)+(2cm,-3pt)$) -- ($(L2A.south west)+(0,-3pt)$) node[midway,below,yshift=-2.5pt]{\scriptsize{$l$}};

\node[] (distanceB) at (0,-\distB) {};
\node[label={[label distance=-0.65cm]0:{\scriptsize{$v^\mathit{start}$}}}]  (vstart2) at (distanceB) {};
\node[data, right=0.42cm of vstart2] (L1B) {\scriptsize{5} \nodepart{second}};
\node[data, right=0.2cm of L1B]   (L2B) {\scriptsize{2} \nodepart{second}};
\node[right=0.2cm of L2B]   (L3B) {\scriptsize{\ldots}};

\path[ptr]  ($(vstart2)+(0.25,0)$) --++(right:2mm)  |- (L1B.text west);
\draw[fill] ($(L1B.east)!0.5!(L1B.text split)$) circle (0.05);
\draw[ptr]  ($(L1B.east)!0.5!(L1B.text split)$) --++(right:2mm) |- (L2B.text west);
\draw[fill] ($(L2B.east)!0.5!(L2B.text split)$) circle (0.05);
\draw[ptr]  ($(L2B.east)!0.5!(L2B.text split)$) --++(right:2mm) |- (L3B.west);

\node  (pa2) at ($(L1B.148)+(0,0.35)$) {\scriptsize{\code{pa}}};
\node  (vad2) at ($(L2B.148)+(0,0.35)$) {\scriptsize{$v_\mathit{ad}$}};

\path[ptr]  ($(pa2)-(0,0.1)$) -- (L1B.148);
\path[ptr]  ($(vad2)-(0,0.13)$) -- (L2B.148);
\draw[red,decorate,decoration={brace,amplitude=4pt,aspect=0.29}] ($(L2B.south west)+(2cm,-3pt)$) -- ($(L1B.south west)+(0,-3pt)$) node[pos=0.29,below,yshift=-2.5pt]{\scriptsize{$l'$}};

\node[] (distanceC) at (0,-\distC) {};
\node[label={[label distance=-0.65cm]0:{\scriptsize{$v^\mathit{start}$}}}]  (vstart3) at (distanceC) {};
\node[data, right=0.42cm of vstart3] (L1C) {\scriptsize{5}  \nodepart{second}};
\node[data, right=0.2cm of L1C]   (L2C) {\scriptsize{2} \nodepart{second}};
\node[right=0.2cm of L2C]   (L3C) {\scriptsize{\ldots}};

\path[ptr]  ($(vstart3)+(0.25,0)$) --++(right:2mm)  |- (L1C.text west);
\draw[fill] ($(L2C.east)!0.5!(L2C.text split)$) circle (0.05);
\draw[ptr]  ($(L2C.east)!0.5!(L2C.text split)$) --++(right:2mm) |- (L3C.west);

\node  (pa3) at ($(L1C.32)+(0,0.35)$) {\scriptsize{\code{pa}}};
\node  (vad3) at ($(L2C.148)+(0,0.35)$) {\scriptsize{$v_\mathit{ad}$}};

\path[ptr]  ($(pa3)-(0,0.1)$) -- (L1C.32);
\path[ptr]  ($(vad3)-(0,0.13)$) -- (L2C.148);
\draw[red,decorate,decoration={brace,amplitude=4pt}] ($(L2C.south west)+(2cm,-3pt)$) -- ($(L2C.south west)+(0,-3pt)$) node[midway,below,yshift=-2.5pt]{\scriptsize{$l$}};

\node[] (distanceD) at (0,-\distD) {};
\node[label={[label distance=-0.65cm]0:{\scriptsize{$v^\mathit{start}$}}}]  (vstart4) at (distanceD) {};
\node[data, right=0.42cm of vstart4] (L1D) {\scriptsize{5} \nodepart{second}};
\node[data, right=0.2cm of L1D]   (L2D) {\scriptsize{2} \nodepart{second}};
\node[right=0.2cm of L2D]   (L3D) {\scriptsize{\ldots}};

\path[ptr]  ($(vstart4)+(0.25,0)$) --++(right:2mm)  |- (L1D.text west);
\draw[fill] ($(L1D.east)!0.5!(L1D.text split)$) circle (0.05);
\draw[ptr]  ($(L1D.east)!0.5!(L1D.text split)$) --++(right:2mm) |- (L2D.text west);
\draw[fill] ($(L2D.east)!0.5!(L2D.text split)$) circle (0.05);
\draw[ptr]  ($(L2D.east)!0.5!(L2D.text split)$) --++(right:2mm) |- (L3D.west);

\node  (pa4) at ($(L1D.32)+(0,0.35)$) {\scriptsize{\code{pa}}};
\node  (vad4) at ($(L2D.148)+(0,0.35)$) {\scriptsize{$v_\mathit{ad}$}};

\path[ptr]  ($(pa4)-(0,0.1)$) -- (L1D.32);
\path[ptr]  ($(vad4)-(0,0.13)$) -- (L2D.148);
\draw[red,decorate,decoration={brace,amplitude=4pt}] ($(L2D.south west)+(2cm,-3pt)$) -- ($(L1D.south west)+(0,-3pt)$) node[midway,below,yshift=-2.5pt]{\scriptsize{$l'$}};

\coordinate (c1) at ($(L3A.east)+(0.2,0)$);
\draw[gen-edge,rounded corners=3pt] (L3A.east) -- (c1) -- node[xshift=5pt,yshift=15pt,rotate=270]{\scriptsize{store 5}} (c1 |- L3B.east) -- (L3B.east);

\coordinate (c2) at ($(L3C.east)+(0.2,0)$);
\draw[gen-edge,rounded corners=3pt] (L3C.east) -- (c2) -- node[xshift=5pt,yshift=21pt,rotate=270]{\scriptsize{store $v_\mathit{ad}$}} (c2 |- L3D.east) -- (L3D.east);
\end{tikzpicture}
\vspace*{-.6cm}
\end{wrapfigure}
the address $\mathit{t}$ (if $m = j$).
In all other cases,
the ordinary \code{store} rule is applied.

More precisely, 
let the list invariant $l$ describe a list of length $v_l$ at the address
$v_\mathit{ad}$. Then $l$ 
is 
replaced by a new list invariant $l'$ which describes the list
at the address $v^{\mathit{start}}$ after
storing $t$ at the address \code{pa}. Irrespective of whether $m \neq j$ or $m = j$,
the resulting list at  $v^{\mathit{start}}$
has the list at $v_\mathit{ad}$ as its ``tail'' and thus, its length  $v_\ell'$ is 
$v_\ell +1$.
We prevent sharing of different
 elements by removing the allocation $\alloc{v^\mathit{start}}{v^\mathit{end}}$ of the
 list and all
 points-to information of pointers
in \pagebreak $\alloc{v^\mathit{start}}{v^\mathit{end}}$.

\noindent
\fbox{
\begin{minipage}{11.8cm}
\label{rule:extension}
\small
\mbox{\small \textbf{\hspace*{-.15cm}list extension ($p:$ ``\code{store ty $t$, ty* pa}'',
    $t \in \Ids \cup \N$, \code{pa} $\in \Ids$)}}\\
\vspace*{-.2cm}\\
\centerline{$\frac{\parbox{6.6cm}{\centerline{
$s = (\Pos, \; \LV, \; \AL, \; \PT, \; \LI, \; \KB)$}\vspace*{.1cm}}}
{\parbox{8cm}{\vspace*{.1cm} \centerline{
$s' = (\Pos^+, \; \LV, \; \AL \backslash \{\alloc{v^\mathit{start}}{v^\mathit{end}}\}, \; \PT', \; \LI\backslash\{\listinv\}\cup\{\listinv'\}, \; \KB')$}}}\;\;\;\;$
\mbox{if}} \vspace*{-.15cm}
{\small
\begin{itemize}
\item[$\bullet$] there is $\listinv = (v_\mathit{ad} \pointstorec[\codevar{lty}]{v_\ell}
  [(\mathit{off}_i: \codevar{lty}_i: {w_i..\hat{w}_i})]_{i=1}^n) \in \LI$ with $\codevar{lty}_j = \codevar{lty*}$
\item[$\bullet$] there is $\alloc{v^\mathit{start}}{v^\mathit{end}} \in \AL$ with $\models \, \stateformula{s} \implies v^\mathit{end} = v^\mathit{start} + \sizeOf(\codevar{lty})-1$
\item[$\bullet$] there exists $1 \leq m \leq n$ such that $\codevar{ty} = \codevar{lty}_m$ and $\models \, \stateformula{s} \implies \LV(\codevar{pa}) = v^\mathit{start} + \mathit{off}_m$
\item[$\bullet$]  $\models
  \, \stateformula{s} \implies v_\mathit{ad} = v_j$ if $m \neq j$ and $\models \,
  \stateformula{s} \implies v_\mathit{ad} = \LV(t)$ if $m = j$
\item[$\bullet$] for all $1 \leq i \leq n$ with $i \neq m$ there exist $v^\mathit{start}_i,v_i \in \Vsym$\\with $\models \, \stateformula{s} \implies v^\mathit{start}_i = v^\mathit{start} + \mathit{off}_i$ and $(v^\mathit{start}_i \pointsto[\codevar{lty}_i] v_i) \in \PT$
\item[$\bullet$] $\PT' = \{(x_1 \hookrightarrow_{\codevar{sy}} x_2) \in \PT \mid\, \models
  \stateformula{s} \implies
(v^\mathit{end} < x_1) \vee (x_1+\sizeOf(\codevar{sy})-1 < v^\mathit{start}) \}$
\item[$\bullet$] $\listinv' = (v^\mathit{start} \pointstorec[\codevar{lty}]{v'_\ell} [(\mathit{off}_i: \codevar{lty}_i: {v_i..\hat{w}_i})]_{i=1}^n)$
\item[$\bullet$] $\KB' = \KB \,\cup\, \{v_m = \LV(t),\, v'_\ell = v_\ell+1\}$, where
   $v_m, v'_\ell$ are fresh
\end{itemize}}
\end{minipage}}

\vspace*{-.3cm}

\subsubsection{List Traversal}
\label{sect:List Traversal}
After the current element $y_\code{mem}$ is stored at $x_\code{tp}$ and the value
$x_\code{kinc}$ of \codevar{k} is incremented to $y_\code{kinc}$ and stored at
$x_\code{k\_ad}$, we reach a state $\GraphInitCmpZeroC$ at\linebreak position $(\cmpFor, 0)$ by the
branch instruction.
However, our already existing state $\GraphInitCmpZeroGen$ is more
general than $\GraphInitCmpZeroC$, i.e., we can draw a generalization edge from
$\GraphInitCmpZeroC$ to $\GraphInitCmpZeroGen$\linebreak
using the generalization rule with the
instantiation $\mu_\GraphInitCmpZeroC$ where $\mu_\GraphInitCmpZeroC(x_\codevar{mem}) = y_\codevar{mem}$,
$\mu_\GraphInitCmpZeroC(x_\codevar{nd}) = y_\codevar{nd}$,
$\mu_\GraphInitCmpZeroC(x_\codevar{cn}) = y_\codevar{cn}$,
$\mu_\GraphInitCmpZeroC(x_\codevar{k}) = x_\codevar{kinc}$,
$\mu_\GraphInitCmpZeroC(x_\codevar{kinc}) = y_\codevar{kinc}$,
$\mu_\GraphInitCmpZeroC(x_\ell) = y_\ell$,
$\mu_\GraphInitCmpZeroC(\hat{x}_\codevar{nd}) = \hat{x}_\codevar{nd}$, and
$\mu_\GraphInitCmpZeroC(x_\codevar{next}) = x_\codevar{mem}$. Thus, the cycle of the first loop closes
here.

\begin{wrapfigure}[7]{r}{7.81cm}
  \vspace*{-.75cm}
\hspace*{-.1cm}\begin{tikzpicture}[node distance = \ydist and \xdist]
\scriptsize
\def\widetwidth{6.2cm}
\def\smalltwidth{4.75cm}
\def\fulllinewidhth{11cm}
\def\edgenodedist{0.3cm}
\def\FirstIndentwidth{3cm}
\def\SecondIndentwidth{2cm}
\tikzset{invisible/.style={opacity=0}}
\tikzstyle{state}=[inner sep=2pt, font=\scriptsize, draw]

\node[state, align=left, label={[xshift=\labelxshift]2:$\GraphTravEntryZero$}] (1) 
{$(\initPtr, 0), \;
  \{\code{tail\_ptr} = x_{\code{tp}}, \,
    ...\}, \;
  \{...\}, \;
  \{
   x_{\code{tp}} \pointsto[\code{list*}] x_{\code{mem}}, \,
   ...\},$\\
 $\{
   x_{\code{mem}} \pointstorec[\code{list}]{x_{\ell}}
     [(0: \code{i32}: x_{\code{nd}}..\hat{x}_{\code{nd}}),
      (8: \code{list*}: x_{\code{next}}..0)]
   \}, \;
  \{
   ... \}$
};

\node[state, below=of 1, align=left, label={[xshift=\labelxshift]2:$\GraphTravCmpZeroTwo$}] (2) 
{$(\cmpWhile, 0),\;
  \{
   \code{ptr} = x_{\code{ptr}}, \,
   \code{curr'} = x_{\code{mem}}, \,
   \code{next\_ptr} = x_{\code{np}},$ \\
$\code{next} = x_{\code{next}}, \,
   ...\}, \;
 \{
   \alloc{x_{\code{ptr}}}{x_\code{ptr}^\mathit{end}}, \,
   ...\}, \;
  \{
   x_{\code{ptr}} \pointsto[\code{list*}] x_{\code{next}}, \,
   ...\},$\\
 $\{
   x_{\code{mem}} \pointstorec[\code{list}]{x_{\ell}}
     [(0: \code{i32}: x_{\code{nd}}..\hat{x}_{\code{nd}}),
      (8: \code{list*}: x_{\code{next}}..0)]
   \},$\\
 $\{
   x_{\code{np}} = x_{\code{mem}} + 8, \,
   ... \}$
};

\draw[omit-edge] (1)  --  (2);

\end{tikzpicture}
\end{wrapfigure}
As mentioned,
in the path from $\GraphInitCmpZeroGen$ to $\GraphInitCmpZeroC$ there is a state at
position $(\cmpFor, 1)$ which is refined (similar to State $\GraphInitCmpOne$). If 
 \code{k $<$ n}
 holds, we reach $\GraphInitCmpZeroC$. The other
path with
\code{k $\not<$ n}
leads out of the

\vspace*{-.15cm}

\begin{wrapfigure}[12]{r}{6.9cm}
\vspace{-0.85cm}
\begin{boxedminipage}{6.8cm}
\scriptsize
\begin{Verbatim}[commandchars=\\\{\}]
\initPtr:
  0: tail' = load list*, list** tail_ptr
  1: store list* tail', list** ptr
  2: br label \cmpWhile

\cmpWhile:
  0: str = load list*, list** ptr
  1: notnull = icmp ne list* str, null
  2: br i1 notnull, label \bodyWhile, label ret

\bodyWhile:
  0: curr' = bitcast list* str to i8*
  1: next_ptr = getelementptr i8, i8* curr', i64 8
  2: next_ptr' = bitcast i8* next_ptr to list**
  3: next = load list*, list** next_ptr'
  4: store list* next, list** ptr
  5: br label \cmpWhile
\end{Verbatim}
\end{boxedminipage}
\vspace{-.05cm}
\end{wrapfigure}
\noindent
\code{for} loop to the
block \initPtr{} followed by the \code{while} loop (see State $\GraphTravEntryZero$ 
and the corresponding \LLVM{} code on the side).
The\linebreak value $x_\codevar{mem}$
at address \codevar{tail\_ptr} is loaded to \code{tail'} and stored at a new pointer variable \codevar{ptr}.
State $\GraphTravCmpZeroTwo$ is reached after the first iteration of the \code{while} loop body. Here, block \cmpWhile{} loads the value $x_\codevar{mem}$ stored at \code{ptr} to
\code{str}. Since it is not the null pointer,  we enter \bodyWhile{}, which
corresponds to the body of the \code{while} loop. First, $x_\codevar{mem}$ is cast to an
\code{i8} pointer. Then  \code{getelementptr} computes a pointer
$x_\codevar{np}$ to the next element by adding 8 bytes to $x_\codevar{mem}$.
After another cast back to a $\code{list*}$ pointer, we load the content of the new
pointer to \code{next}. To this end, we need the following new variant of the \code{load}
rule to load values that are described by a list \pagebreak invariant.

\noindent
\fbox{
\begin{minipage}{11.8cm}
\label{rule:loadlist}
\small
\mbox{\small \textbf{\hspace*{-.15cm}\code{load} from list invariant ($p:$ ``\code{x = load ty, ty* ad\_i}'', $\codevar{x},\codevar{ad\_i} \in \Ids$)}}\\
\vspace*{-.2cm}\\
\centerline{$\frac{\parbox{6.6cm}{\centerline{
$s = (\Pos, \; \LV, \; \AL, \; \PT, \; \LI, \; \KB)$}\vspace*{.1cm}}}
{\parbox{7.4cm}{\vspace*{.1cm} \centerline{
$s' = (\Pos^+, \; \LV[\code{x}:=w], \; \AL, \; \PT, \; \LI, \; \KB \cup \{w = v_i\})$}}}\;\;\;\;$
\mbox{if $w \in \Vsym$ is fresh and}} \vspace*{-.2cm}
{\small
\begin{itemize}
\item[$\bullet$] there is $\listinv = (v_\mathit{ad} \pointstorec[\codevar{ty}]{v_\ell}
  [(\mathit{off}_i: \codevar{ty}_i: {v_i..\hat{v}_i})]_{i=1}^n) \in \LI$
\item[$\bullet$] there exists $1 \leq i \leq n$ such that $\codevar{ty} = \codevar{ty}_i$ and $\models \, \stateformula{s} \implies \LV(\code{ad\_i}) = v_\mathit{ad} + \mathit{off}_i$
\end{itemize}}
\end{minipage}}
\vspace*{.07cm}

With this new \code{load} rule, the content of the new pointer is identified as
$x_\codevar{next}$. It
is loaded to \code{next} and stored at $x_\codevar{ptr}$. Then we return to the
block \cmpWhile{} (State $\GraphTravCmpZeroTwo$). Merging $\GraphTravCmpZeroTwo$ with its
predecessor at the same program position is not possible yet since the domains of the
respective $\LV$
functions do not coincide.
  Now, $x_\codevar{next}$ is loaded to
\code{str} and compared to the null pointer. Since we do not have information about
$x_\codevar{next}$, $\GraphTravCmpZeroTwo$'s successor state is refined
to a state with $x_\codevar{next} = 0$ (which starts a path out of the
loop to a return state), and to a state with $x_\codevar{next} \geq 1$, which reaches
$\GraphTravBodyGetelemStart$ after a few evaluation steps, see \cref{fig:Trav1}.
Now, \code{getelementptr}
computes the pointer $x_\codevar{np}'  = x_\codevar{next} + 8$ to the third element of the
list, which is assigned to  \code{next\_ptr}. $\stateformula{\GraphTravBodyGetelemStart}$ contains
$x_\ell \geq 2$ since the first and the last pointer value are known to be
different ($x_\codevar{next} \neq 0$). This information is crucial for creating a new list
invariant starting at $x_\codevar{next}$, which is used in the next iteration of the
loop. Therefore, if our list invariant did not contain variables for the first and the
last pointer, we could not prove termination of the program.
In such a case where the pointer to the third element
of a list invariant is computed and the length of the list is at least two, we
\emph{traverse} the list invariant to retain the correspondence between the
computed pointer $x'_\code{np}$
and the new list invariant.
In the resulting state $\GraphTravBodyGetelemRes$,
we represent the first list element by an allocation
$\alloc{x_\code{mem}}{x_\code{mem}^\mathit{end}}$ and preserve all knowledge about this
element that was
encoded 
in the list invariant
($x_\code{mem}^\mathit{end} = x_{\code{mem}} + 15$, $x_{\code{mem}}
\pointsto[\code{i32}] x_{\code{nd}}$, $x_{\code{np}} \pointsto[\code{list*}] x_{\code{next}}$).
Moreover, we adapt the list invariant such that it now represents the list at 
$x_\codevar{next}$ (i.e., without its first element) starting with the value 
$x'_\code{nd}$. We also relate the length of the
new list invariant to the length of the former  one ($x'_{\ell} = x_{\ell} -
1$).

\footnotesize

\begin{figure}[t]
\centering
\begin{tikzpicture}[node distance = \ydist and \xdist]
\scriptsize
\def\widetwidth{6.2cm}
\def\smalltwidth{4.75cm}
\def\fulllinewidhth{11cm}
\def\edgenodedist{0.2cm}
\def\FirstIndentwidth{3cm}
\def\SecondIndentwidth{2cm}
\tikzset{invisible/.style={opacity=0}}
\tikzstyle{state}=[
                           inner sep=2pt,
                           font=\scriptsize,
                           draw]

\node[state, align=left, label={[xshift=\labelxshift]2:$\GraphTravBodyGetelemStart$}] (3) 
{$(\bodyWhile, 1),\;
  \{
   \code{ptr} = x_{\code{ptr}}, \,
   \code{curr'} = x_{\code{next}}, \,
   \code{next\_ptr} = x_{\code{np}}, \,
   \code{next} = x_{\code{next}}, \,
   ...\},$\\
 $\{
   \alloc{x_{\code{ptr}}}{x_\code{ptr}^\mathit{end}}, \,
   ...\}, \;
  \{
   x_{\code{ptr}} \pointsto[\code{list*}] x_{\code{next}}, \,
   ...\},$\\
 $\{
   x_{\code{mem}} \pointstorec[\code{list}]{x_{\ell}}
     [(0: \code{i32}: x_{\code{nd}}..\hat{x}_{\code{nd}}),
      (8: \code{list*}: x_{\code{next}}..0)]
   \},$\\
 $\{
   x_{\code{np}} = x_{\code{mem}} + 8, \,
   x_{\code{next}} \geq 1, \,
   ... \}$
};

\node[state, below=of 3, align=left, label={[xshift=\labelxshift]2:$\GraphTravBodyGetelemRes$}] (4) 
{$(\bodyWhile, 2),\;
  \{
   \code{ptr} = x_{\code{ptr}}, \,
   \code{curr'} = x_{\code{next}}, \,
   \code{next\_ptr} = x'_{\code{np}}, \,
   \code{next} = x_{\code{next}}, \,
   ...\},$\\
 $\{
   \alloc{x_{\code{ptr}}}{x_\code{ptr}^\mathit{end}}, \,
   \alloc{x_{\code{mem}}}{x_\code{mem}^\mathit{end}}, \,
   ...\}, \;
  \{
   x_{\code{ptr}} \pointsto[\code{list*}] x_{\code{next}}, \,
   x_{\code{mem}} \pointsto[\code{i32}] x_{\code{nd}}, \,
   x_{\code{np}} \pointsto[\code{list*}] x_{\code{next}}, \,
   ...\},$\\
 $\{
   x_{\code{next}} \pointstorec[\code{list}]{x'_{\ell}}
     [(0: \code{i32}: x'_{\code{nd}}..\hat{x}_{\code{nd}}),
      (8: \code{list*}: x'_{\code{next}}..0)]
   \},$\\
 $\{
 x_{\code{np}} = x_{\code{mem}} + 8, \,
   x_\code{mem}^\mathit{end} = x_{\code{mem}} + 15, \,
   x'_{\code{np}} = x_{\code{next}} + 8, \,
   x'_{\ell} = x_{\ell} - 1, \,
   ... \}$
};

\draw[eval-edge] (3)  --  (4);

\end{tikzpicture}
\vspace*{-.1cm}
\caption{Traversing a List Invariant}
\label{fig:Trav1}
\vspace*{-.7cm}
\end{figure}
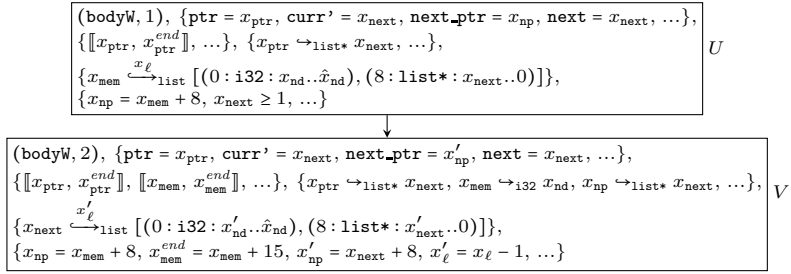
\normalsize

Thus, 
in addition to the rule for  \code{getelementptr} in \cite{LLVM-JAR17}, we
now introduce rules for list traversal via \code{getelementptr}. The rule below
handles the case where the address
calculation is based on the type \code{i8} and the \code{getelementptr} instruction
adds the number of bytes given by the term
$t$ to the address \code{pa}.
Here, the offsets in our list invariants are needed to compute the address of the accessed
field.
We also have similar rules for list traversal via \vspace*{-.3cm}\pagebreak field access (i.e.,
where the next element is accessed using \code{curr'->next} as in the
\code{for} loop) and for the case where we cannot prove that the length $v_{\ell}$ of the
list is at least 2, see\arxiv{ App.\ \ref{app:rules}}\paper{ \cite{report}}.

\vspace*{.05cm}
\noindent
\fbox{
\begin{minipage}{11.8cm}
\label{rule:traversal}
\small
\mbox{\small \textbf{\hspace*{-.2cm}list traversal ($p:$ ``\code{pb = getelementptr i8, i8* pa, i$m$ $t$}'', {\scriptsize $t \in \Ids \cup \N$, $\code{pa},\code{pb}\in \Ids$})}}\vspace*{.2cm}\\
\vspace*{-2mm}
\centerline{$\frac{\parbox{6.6cm}{\centerline{
$s = (\Pos, \; \LV, \; \AL, \; \PT, \; \LI, \; \KB)$}\vspace*{.1cm}}}
{\parbox{10.2cm}{\vspace*{.1cm} \centerline{
$s' = (\Pos^+, \; \LV[\code{pb}:=w^\mathit{start}_j], \; \AL \cup \alloc{v^\mathit{start}}{v^\mathit{end}}, \; \PT', \; \LI\backslash\{\listinv\}\cup\{\listinv'\}, \; \KB')$}}}\;\;\;\;$
\mbox{if}} \vspace*{-.1cm}
{\small
\begin{itemize}
\item[$\bullet$] there is $\listinv = (v_\mathit{ad} \pointstorec[\codevar{ty}]{v_\ell}
  [(\mathit{off}_i: \codevar{ty}_i: {v_i..\hat{v}_i})]_{i=1}^n) \in \LI$ with
$\codevar{ty}_j = \codevar{ty*}$,\\
  $\models
  \, \stateformula{s} \implies \LV(\codevar{pa}) = v_j$,  $\models \, \stateformula{s} \implies \LV(t) = \mathit{off}_j$, and $\models \, \stateformula{s} \implies v_\ell \geq 2$
\item[$\bullet$] $\PT' = \PT \cup \{(v^\mathit{start}_i \hookrightarrow_{\codevar{ty}_i} v_i) \mid 1 \leq i \leq n \}$
\item[$\bullet$] $\listinv' = (w^\mathit{start} \pointstorec[\codevar{ty}]{w_\ell} [(\mathit{off}_i: \codevar{ty}_i: {w_i..\hat{v}_i})]_{i=1}^n)$
\item[$\bullet$] $\KB' = \KB \cup \{v^\mathit{start} = v_\mathit{ad},\, v^\mathit{end} = v^\mathit{start} + \sizeOf(\code{ty}) -1,\, w^\mathit{start} = v_j,\, w_\ell = v_\ell-1,\\ w^\mathit{start}_j = w^\mathit{start} + \mathit{off}_j\} \cup \{v^\mathit{start}_i = v_\mathit{ad} + \mathit{off}_i \mid 1 \leq i \leq n \}$
\item[$\bullet$] $v^\mathit{start}, v^\mathit{end}, v^\mathit{start}_1, \ldots, v^\mathit{start}_n, w^\mathit{start}, w_\ell, w^\mathit{start}_j, w_1, \ldots, w_n \in \Vsym$ are fresh
\end{itemize}}
\end{minipage}}
\vspace*{.05cm}

We continue the symbolic execution of State $\GraphTravBodyGetelemRes$ in our example
and finally obtain a complete SEG with a path
from a state $W$ at the position 
$(\cmpWhile, 0)$ to the next
state $W'$ at this position, and a generalization edge back from $W'$ to $W$ using an
instantiation $\mu_{W'}$. Both $W$ and $W'$
contain a list invariant similar to $T$ where
instead of the length $x_\ell$ in $T$, we have the symbolic variables $z_\ell$ and $z_\ell'$
in $W$ and $W'$, where $\mu_{W'}(z_\ell) = z_\ell'$
(see\arxiv{ App.\ \ref{app:leadingExample}}\paper{ \cite{report}} for more details).

\vspace*{-.6cm}

\footnotesize

\begin{figure}[h]
\centering
\begin{tikzpicture}[node distance = \ydist and \xdist]
\scriptsize
\def\widetwidth{6.2cm}
\def\smalltwidth{4.75cm}
\def\fulllinewidhth{11cm}
\def\edgenodedist{0.3cm}
\def\FirstIndentwidth{3cm}
\def\SecondIndentwidth{2cm}
\tikzset{invisible/.style={opacity=0}}
\tikzstyle{state}=[
                           inner sep=2pt,
                           font=\tiny,
                           draw]

\node[state, align=left, label={[xshift=-.1cm]12:$\GraphTravCmpZeroThree$}] (3) 
{$(\cmpWhile, 0),\;
  \{
   \code{ptr} = z_{\code{ptr}}, \,
   \code{curr'} = z'_{\code{next}}, \,
   \code{next\_ptr} = z''_{\code{np}}, \,
   \code{next} =$\\
   $z''_{\code{next}}, \,
   ...\}, \;
 \{
   \alloc{z_{\code{ptr}}}{z_\code{ptr}^\mathit{end}},
   ...\},
   \{
   z_{\code{ptr}} \pointsto[\code{list*}] z''_{\code{next}}, \,
   ...\},$\\
 $\{
   z_{\code{mem}} \pointstorec[\code{list}]{z_{\ell 1}}
     [(0\!:\!\code{i32}\!:\!z_{\code{nd}}..{z'_{\code{nd}}}),
      (8\!:\!\code{list*}\!:\!z_{\code{next}}..{z'_\code{next}})],$\\
  \;\,$z'_{\code{next}} \pointstorec[\code{list}]{z_{\ell}}
     [(0: \code{i32}: z''_\code{nd}..\hat{z}_\code{nd}),
      (8: \code{list*}: z''_{\code{next}}..0)]
   \},$\\
 $\{
   z''_{\code{np}} = z'_{\code{next}} + 8, \,
   z_{\ell} = z'_{\ell} - 1, \,
   ... \}$
};

\node[state, right=of 3, align=left, label={[xshift=.1cm]190:$\GraphTravCmpZeroFour$}] (6) 
{$(\cmpWhile, 0),\;
  \{
   \code{ptr} = z_{\code{ptr}}, \,
   \code{curr'} = z''_{\code{next}}, \,
   \code{next\_ptr} = z'''_{\code{np}}, \,
   \code{next} =$\\
  $z'''_{\code{next}}, \,
   ...\}, \;
 \{
   \alloc{z_{\code{ptr}}}{z_\code{ptr}^\mathit{end}},
   ...\},
   \{
   z_{\code{ptr}} \pointsto[\code{list*}] z'''_{\code{next}}, \,
   ...\},$\vspace*{-.1cm}\\
 $\{
   z_{\code{mem}} \pointstorec[\code{list}]{z'_{\ell 1}}
     [(0\!:\!\code{i32}\!:\!z_{\code{nd}}..{z''_\code{nd}}),
      (8\!:\!\code{list*}\!:\!z_{\code{next}}..{z''_\code{next}})],$\vspace*{-.1cm}\\
  \;\,$z''_{\code{next}} \pointstorec[\code{list}]{z'_{\ell}}
     [(0: \code{i32}: z'''_\code{nd}..\hat{z}_\code{nd}),
      (8: \code{list*}: z'''_{\code{next}}..0)]
   \},$\\
 $\{
   z'''_{\code{np}} = z''_{\code{next}} + 8, \,
   z'_{\ell 1} = z_{\ell 1} + 1, \,
   z'_{\ell} = z_{\ell} - 1, \,
   ... \}$
};

\draw[omit-edge] (3) to[out=10,in=170]  (6);
\draw[gen-edge] (6) to[out=190,in=-10]  (3);


\end{tikzpicture}
\end{figure}
\normalsize

\vspace*{-.8cm}

\section{Proving Termination}
\label{sect:Proving Termination}

To prove termination of a program $\Prog$,  as
 in \cite{LLVM-JAR17} the cycles of the SEG are
translated to an integer transition system whose termination implies termination of $\Prog$.
 The edges of the SEG are transformed
 into ITS
 transitions whose application
 conditions 
 consist of the state formulas $\stateformula{s}$ and equations to identify corresponding
 symbolic variables of the different states. For evaluation and refinement edges, the 
 symbolic variables do not change. For generalization edges, we use the instantiation $\mu$ to identify corresponding symbolic variables. 
 In our example, the ITS has cyclic transitions of the following form:

\vspace*{-.45cm}

 \[ \mbox{\footnotesize $\begin{array}{rlll}
 O(x_{\code{n}}, x_{\code{k}}, x_{\code{kinc}}, \ldots) &\to^+& R(x_{\code{n}}, x_{\code{k}},
 x_{\code{kinc}}, \ldots)& \quad \mid \quad
 x_{\code{kinc}} = x_{\code{k}} + 1  \wedge  x_{\code{n}} > x_{\code{k}} \wedge  \ldots \\
 R(x_\code{n}, x_\code{k},
 x_\code{kinc}, \ldots) &\to& O(x_\code{n},  x_\code{kinc}, \ldots) \\
 W(z_\ell, z_\ell', \ldots) &\to^+ & W'(z_\ell, z_\ell', \ldots) &\quad \mid \quad z_\ell =
 z_\ell'-1  \wedge z_\ell \geq 1 \wedge\ldots \\  
 W'(z_\ell, z_\ell', \ldots) &\to& W(z_\ell', \ldots) 
 \end{array}$} 
 \]

\vspace*{-.05cm}

\noindent
The first cycle resulting from the generalization edge from
$\GraphInitCmpZeroC$ to $\GraphInitCmpZeroGen$ 
 terminates since \code{k} is increased until it reaches
\code{n}. The generalization edge yields a condition identifying
$x_\codevar{kinc}$ in $\GraphInitCmpZeroC$ with $x_\codevar{k}$
in $\GraphInitCmpZeroGen$, since $\mu_\GraphInitCmpZeroC(x_\codevar{k}) =
x_\codevar{kinc}$.
With the condi\-tions $x_\code{kinc} = x_\code{k} + 1$ and
$x_\code{n} > x_\code{k}$ (from $\KB^\GraphInitCmpZeroGen$), the resulting transitions of
the
ITS are terminating.
 The second cycle from the generalization edge from
$\GraphTravCmpZeroFour$ to
$\GraphTravCmpZeroThree$
 terminates since the length of the list starting with
\code{curr'} decreases. \vspace*{-.2cm}\pagebreak Although there is no program
variable for the length, due to our list invariants the states contain variables for this length, which are also passed
to the ITS.  Thus, the ITS contains the variable
$z_{\ell}$ (where $z_{\ell}$ in $\GraphTravCmpZeroThree$ is identified with 
$z'_{\ell}$ in  $\GraphTravCmpZeroFour$ due to 
$\mu_{\GraphTravCmpZeroFour}(z_{\ell}) = z'_{\ell}$). Since the condition
$z'_{\ell} = z_{\ell} - 1$ is obtained on the path from
 $\GraphTravCmpZeroThree$ to
 $\GraphTravCmpZeroFour$ and
$z_{\ell} \geq 1$ is part of $\stateformula{\GraphTravCmpZeroThree}$ due to the
list invariant with length $z_{\ell}$ in $\LI^\GraphTravCmpZeroThree$,
 the resulting transitions of the  ITS clearly terminate.
 Analogous to \cite[Cor.\ 11 and Thm.\ 13]{LLVM-JAR17}, we obtain the following theorem.
To prove that a complete SEG represents all program paths,
in \cite{LLVM-JAR17} we used the \LLVM{} semantics defined by the \tool{Vellvm} project
\cite{Vellvm}.
One now also has to prove soundness of those symbolic execution rules which 
    were modified due to the new concept of list invariants (i.e., generalization, 
    list extension, and list traversal), see\arxiv{ App.\ \ref{app:proofs}.}\paper{ \cite{report}.}

\setcounter{thm-termination}{\value{theorem}}

\begin{theorem}[Memory Safety and Termination]\label{thm-termination}
  Let $\Prog$ be a program with a complete SEG
  $\GG$.
Since a complete
  SEG does not contain $\ERROR$,  $\Prog$ is memory safe for all concrete states
  represented by the states in $\GG$.\footnote{Our approach can only
\emph{prove} but not \emph{disprove} memory safety, i.e., a SEG with the state $\ERROR$
just means that we failed in showing memory safety.}
  If the ITS corresponding to $\GG$ is terminating, then $\Prog$ is also terminating for all states represented by $\GG$.
\end{theorem}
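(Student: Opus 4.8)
The plan is to lift the soundness framework of \cite{LLVM-JAR17} to the extended abstract domain with list invariants, establishing a \emph{simulation} between concrete program runs and paths through the SEG $\GG$, and then to read off both claims from this simulation. Concretely, I would first prove a \emph{single-step soundness} lemma for every symbolic execution rule: whenever an abstract state $s$ represents a concrete state $c$ (via the separation-logic semantics $\stateformula{s}_{\SL}$ of Sect.~\ref{sect:domain}) and $c \cto c'$ is a concrete \LLVM{} step, then either $c'$ is again represented by one of the successor states of $s$ in $\GG$, or the applicable rule forces a transition to $\ERROR$. For the rules inherited unchanged from \cite{LLVM-JAR17} this is exactly the old argument; the genuinely new work is to establish this lemma for the three rules that touch list invariants.

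Granting the single-step lemma, a straightforward induction on the length of a concrete run shows that every finite concrete run starting in a state represented by some $s \in \GG$ is matched by a path in $\GG$ ending in a state that represents the final concrete state. \emph{Memory safety} then follows immediately: a concrete step that would exhibit undefined behaviour (e.g.\ a null-pointer or out-of-bounds dereference) can only be matched by a step into $\ERROR$; since $\GG$ is complete and contains no $\ERROR$, no such step ever occurs along a matched run, so every concrete run represented by $\GG$ stays memory safe.

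For \emph{termination}, I would argue by contraposition. Suppose some concrete state represented by $\GG$ admits an infinite run. By the simulation it is matched by an infinite path in the finite graph $\GG$, which therefore traverses some cycle infinitely often. It remains to show that such an infinite SEG path induces an infinite computation of the ITS, using the translation of Sect.~\ref{sect:Proving Termination}: each evaluation or refinement edge yields an ITS transition whose guard is implied by $\stateformula{s}$ (which holds for the matching concrete state), and each generalization edge yields a transition whose variable renaming is exactly the instantiation $\mu$, so the concrete values witnessing the SEG path satisfy all guards in sequence and give an infinite ITS run, contradicting termination of the ITS. The point that makes this go through for list-traversing loops is that the decreasing quantity — here the length variable $x_\ell$ resp.\ $z_\ell$ of the relevant list invariant — is carried into the ITS precisely because our invariants record the list length.

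The main obstacle is the single-step soundness of the \textbf{list traversal} and \textbf{list extension} rules, since these fold and unfold the quantitative separation-logic list predicate. For traversal I would verify the \emph{unfolding} equivalence: under $v_\ell \geq 2$, a list predicate of length $v_\ell$ at $v_\mathit{ad}$ is semantically the separating conjunction of the first element (materialised as the allocation $\alloc{v^\mathit{start}}{v^\mathit{end}}$ together with the points-to entries added to $\PT'$) and a list predicate of length $v_\ell-1$ at $v_j$, checking that the fresh equalities placed in $\KB'$ fix exactly these addresses and offsets and that no sharing is introduced. For extension I would verify the dual \emph{folding} step, where removing $\alloc{v^\mathit{start}}{v^\mathit{end}}$ and the points-to entries inside it from $\PT$ is precisely what guarantees the separating conjunction with the existing tail invariant. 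Soundness of \textbf{generalization} (condition $(g6)$) is the remaining new piece, but it is the representation-preservation statement already isolated in the excerpt and plugs directly into the single-step lemma. Once these three semantic preservation facts are in place, the two conclusions follow exactly as in \cite[Cor.~11 and Thm.~13]{LLVM-JAR17}.
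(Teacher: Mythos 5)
Your proposal takes essentially the same route as the paper: the paper also obtains the theorem ``analogous to \cite[Cor.~11 and Thm.~13]{LLVM-JAR17}'' via the Vellvm-based simulation framework, reducing all genuinely new work to single-step representation preservation for the three rules touched by list invariants (generalization via condition $(g6)$, list extension, and list traversal), which its appendix proves by exactly the folding/unfolding analysis of the separation-logic list predicate---including the disjointness and fresh-variable bookkeeping---that you describe. Your observation that the list-length variable carried into the ITS is what makes the termination argument work for the traversal cycle likewise matches the paper's own justification, so there is no gap.
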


\section{Conclusion, Related Work, and Evaluation}
\label{sect:evaluation}

We presented a new approach for automated proofs of memory safety and termination of
\tool{C}/\tool{LLVM} programs on lists.  It first constructs a symbolic execution graph (SEG) which
overapproximates all
program runs. Afterwards, an integer transition system (ITS) is generated from this graph whose
termination is proved using standard techniques. The main idea of our new approach is
the
extension of the  states in the SEG by suitable \emph{list
invariants}. We 
developed techniques to infer and  modify list invariants
automatically during the symbolic execution.

During the construction of the SEG, the list invariants 
abstract from a concrete number of memory allocations to a list of allocations of variable
length while preserving knowledge about some of the contents (the values  of the fields of the first and
the last element) and the list shape (the start address of the first element, the list length, and the content of the
last recursive pointer which allows us to distinguish between cyclic and acyclic lists).
They also contain information on
the memory arrangement of the list fields which is needed for  programs that
access fields
via
pointer arithmetic.
The symbolic variables for  the list length and the first and last values of list
elements are preserved when generating an ITS from the SEG.
 Thus, they
  can be used in the termination proof of the ITS (e.g., the variables for list length
can occur in ranking functions).

In \cite{RTA11,CAV12,RTA10} we
developed a technique for
termination analysis of \tool{Java},
based on a program transformation to \emph{integer term rewrite systems} 
instead of ITSs.  This approach does not require specific
list
invariants as
recursive data structures on\linebreak the heap are abstracted to terms. However, these terms are
unsuitable for \sfC{}, since they cannot express 
memory allocations and the connection to 
their contents.

Separation logic predicates for termination of list programs were also used in\linebreak
\cite{Mutant},  but 
their list \pagebreak predicates only consider the list length and
the recursive field,  but no other fields or offsets.  
The tools \tool{Cyclist} \cite{Cyclist} and \tool{HipTNT+} \cite{HipTNT} are integra\-ted
in separation logic systems which also allow to define heap predicates.
However, they require annotations and
hints which parameters of the list predicates are needed as a
termination measure.
The tool \tool{2LS} \cite{2LS} also provides basic support for dynamic data
structures.
But all these approaches are not suitable if
ter\-mination depends on the contents or the shape of data structures combined with\linebreak pointer
arithmetic.
In \cite{David15}, programs can be annotated with 
arithmetic and
structural properties to reason about termination.  In contrast,
our approach does not need hints or annotations, but finds termination arguments fully
automatically.

We implemented our approach in \aprove{} \cite{LLVM-JAR17}.
While \sfC{} programs with  lists are very common, existing tools can hardly prove their
termination. Therefore, 
the current benchmark collections for termination analysis contain almost no list
programs. In 2017, a benchmark
set\footnote{\url{https://github.com/sosy-lab/sv-benchmarks/tree/master/c/termination-memory-linkedlists}}
of 18 typical \sfC{} programs
 on 
lists was added to the \emph{Termination} category of the
\emph{Competition on Software Verification}
(\emph{SV-COMP}) \cite{SVCOMP22}, where 9 of them are terminating.
Two of these 9 programs do not need
list invariants, because they
just  create
a list without operating on it
afterwards. The remaining seven terminating programs create a list and then
traverse it, search for a value, or append lists and compute the length
afterwards. Only few tools in \emph{SV-COMP} 
produced correct
termination proofs for programs from this set:
\tool{HipTNT+} and \tool{2LS} failed for all of them.
\tool{CPAchecker} \cite{CPAchecker} and \tool{PeSCo} \cite{PeSCo}
proved termination and non-termination
for one of these programs in 2020.  \tool{UAutomizer} \cite{UAutomizer}
proved termination for two
and non-termination for seven programs.
The termination proofs
of \tool{CPAchecker}, \tool{PeSCo}, and \tool{UAutomizer}
only concern the
programs that just
create a list. 
Our new version of \aprove{} is the only termination prover\footnote{We did not compare
with the tool \tool{VeriFuzz} \cite{Met+23}, since it
does not prove termination but only tests
for non-termination and thus, it is unsound for  inferring termination.}
that succeeds if
termination depends on the shape or contents of a list after its
creation.
Note that for non-termination, a proof is a single non-terminating
program path, so here list invariants are less helpful.

For the \emph{Termination Competition} \cite{TermComp} 2022, we submitted 18
terminating \sfC{}\linebreak programs on
lists\footnote{\url{https://github.com/TermCOMP/TPDB/tree/master/C/Hensel_22}} (different
from the ones at \emph{SV-COMP}),
where two of them just create a list. Three  traverse it
afterwards (by a loop or recursion),
and ten search for a value, where
 for nine, also the list contents 
are relevant for termination.
Three programs perform common
operations like inserting or deleting an element.
\tool{UAutomizer} proves termination for a program that just creates a list but not for programs operating on the list afterwards.
With our approach, \aprove{} succeeds on 17 of the 18 programs.
 Overall, \aprove{}  and \tool{UAutomizer} were the two most powerful tools 
for termination
of \sfC{}  in  \emph{SV-COMP} 2022 and the \emph{Termination Competition} 2022, with
\tool{UAutomizer}  winning the former and \aprove{}  winning
\begin{wrapfigure}[3]{r}{7cm}
 \footnotesize
 \renewcommand{\arraystretch}{.9}
\vspace*{-.75cm}
 \begin{tabular}{|c|c|c|c|}
    \cline{2-4}
    \multicolumn{1}{c|}{} &  {\scriptsize \textbf{SV-C T.}} 
                          &  {\scriptsize \textbf{SV-C Non-T.}} 
                          &  {\scriptsize \textbf{TermCmp T.}}\\
    \hline
    \aprove{}             & 7 (of 9) & 5 (of 9) & 17 (of 18)\\
    \hline
    \tool{UAutomizer}     & 2 (of 9) & 7 (of 9) &  1 (of 18)\\
    \hline
   \end{tabular}
 \renewcommand{\arraystretch}{1}
\end{wrapfigure}
 the
latter competition. To download  \aprove{}, run it via its web interface,
 and for
 details on our experiments, see \url{https://aprove-developers.github.io/recursive_structs}.\vspace*{-.3cm}\pagebreak

\bibliographystyle{plainurl}
\paper{\bibliography{references}}
\arxiv{\providecommand{\noopsort}[1]{}\providecommand{\apro}{}\providecommand{\ecli}{}

}

\arxiv{

\section*{Appendix}

This appendix contains the formal definition of the semantics of our abstract states using
separation logic in App.\ \ref{Separation Logic Semantics of Abstract
  States}. App.\ \ref{app:rules} presents the additional variants of the 
symbolic execution rule for \code{getelementptr} to handle list traversal (which are all
similar to the list traversal rule in \cref{sect:List Traversal}). In App.\ \ref{app:leadingExample} we present additional
details of the symbolic execution graph of our leading example. Finally, App.\ \ref{app:proofs}
contains the detailed new rule for generalization and the
soundness proofs for all new symbolic execution rules in our paper.

\setcounter{section}{0}
\renewcommand{\thesection}{\Alph{section}}
\section{Separation Logic Semantics of Abstract States}
\label{Separation Logic Semantics of Abstract States}

As described in Sect.\ \ref{sect:domain}, we use the following definition of abstract states.

\begin{definition}[States]\label{LLVM states}
\normalfont{\LLVM{}}  \emph{states} have the form $(\Pos, \LV, \AL, \PT, \LI, \KB)$  where
    $\Pos \in \PPos$, $\LV \colon \Ids \partialfunctionmap \Vsym$,
    $\AL \subseteq \{\alloc{v_1}{v_2} \: | \: v_1, v_2 \in \Vsym\}$,
    $\PT \subseteq \{ (v_1 \pointsto[\codevar{ty}] v_2) \: | \: v_1, v_2 \in \Vsym, \text{\codevar{ty} is an \normalfont{\LLVM{}} type} \}$,
    $\KB \subseteq \QFIA(\Vsym)$,  \pagebreak and
\[\begin{array}{r@{\hspace{3.7cm}}l}
	\multicolumn{2}{l}{\LI \subseteq \{ v_{\mathit{ad}} \pointstorec[\codevar{ty}]{v_\ell}
            [(\mathit{off}_i: \codevar{ty}_i: {v_i..\hat{v}_i})]_{i=1}^n \: | \: n \in
            \N \mbox{ and}}\\
             &\mbox{for all $1 \leq i \leq n$  we have }
        v_{\mathit{ad}},v_\ell,v_i,\hat{v}_i \in \Vsym, 
         \mathit{off}_i \in \N_{>0},\\
         &\text{and } \codevar{ty}, \codevar{ty}_i \text{ are \normalfont{\LLVM{}} types such that}\\
         &\mbox{there is exactly one $1 \leq j \leq n$ with $\codevar{ty}_j = \codevar{ty*}$}
     \}
\end{array}\]
In addition, there is a state $\ERROR$ for undefined behavior.
For any state $s$, let $\Vsym(s)$ consist of  all symbolic variables  occurring in $s$.
\end{definition}

Every state $s$ is represented by a corresponding first-order formula $\stateformula{s}$.

\begin{definition}[Representing States by $\FOL$ Formulas]
  \label{def:StateFOLFormula} Given a state $s = (\Pos, \LV, \AL, \PT, \LI, \KB)$,
the set
$\stateformula{s}$ is the smallest set with
$\stateformula{s} =$

\vspace*{-.5cm}

\[\mbox{ $\begin{array}{@{\hspace*{-.15cm}}l}
\{v_1 \geq 1
\wedge v_1 \leq v_2 \mid \alloc{v_1}{v_2} \in \AL\} \; \cup\\
 \{ v_2 < w_1 \vee w_2 < v_1 \mid
   \alloc{v_1}{v_2},\alloc{w_1}{w_2} \in \AL, \; (v_1,v_2) \neq (w_1,w_2)\} \; 
   \cup\\
 \{v_1 \geq 1 
   \mid (v_1 \hookrightarrow_{\codevar{ty}} v_2) \in \PT \} \; \cup \\
 \{v_2 = w_2 \mid (v_1 \hookrightarrow_{\codevar{ty}} v_2), 
   (w_1 \hookrightarrow_{\codevar{ty}} w_2) \in \PT \mbox{ and } \models \, 
   \stateformula{s} \implies v_1 = w_1\} \; \cup\\
 \{v_1
  \neq w_1 \mid (v_1 \hookrightarrow_{\codevar{ty}} v_2),  
   (w_1 \hookrightarrow_{\codevar{ty}} w_2) \in \PT \mbox{ and } \models \, 
  \stateformula{s} \implies v_2 \neq w_2\} \; \cup\\ 
 \{v_\ell \geq 1 \wedge v_{\mathit{ad}} \geq 1 
  \mid
(v_{\mathit{ad}} \pointstorec[\codevar{ty}]{v_\ell}
  [(\mathit{off}_i: \codevar{ty}_i: {v_i..\hat{v}_i})]_{i=1}^n) \in \LI \} \; \cup \\
 \{\bigwedge_{i=1}^n v_i = \hat{v}_i 
   \mid (v_{\mathit{ad}} \pointstorec[\codevar{ty}]{v_\ell}
  [(\mathit{off}_i: \codevar{ty}_i: {v_i..\hat{v}_i})]_{i=1}^n) \in \LI \mbox{ and } \models \, 
   \stateformula{s} \implies v_\ell = 1\} \; \cup\\
 \mbox{\small $\{v_j \geq 1 
   \mid
(v_{\mathit{ad}} \pointstorec[\codevar{ty}]{v_\ell}
       [(\mathit{off}_i: \codevar{ty}_i: {v_i..\hat{v}_i})]_{i=1}^n) \in \LI
       \mbox{ with }  \codevar{ty}_j = \codevar{ty*} \mbox{ and } \models \, 
   \stateformula{s} \implies v_\ell \geq 2\} \; \cup$}\\
 \mbox{\small $\{v_\ell \geq 2 
   \mid (v_{\mathit{ad}} \pointstorec[\codevar{ty}]{v_\ell}
  [(\mathit{off}_i: \codevar{ty}_i: {v_i..\hat{v}_i})]_{i=1}^n) \in \LI \mbox{ and
   } \exists k \in \Nplus, k \leq n, \, \mbox{s.t.} \, \models \, 
   \stateformula{s} \implies v_k \neq \hat{v}_k\}$}
\end{array}$}
\]
\end{definition}

As mentioned in Sect.\ \ref{sect:domain}, \emph{concrete} states are states where all values of variables and
 memory contents are determined uniquely.  $\PT$ only contains information about
allocated addresses and to enforce a uniform representation,  for concrete states we only allow statements of
the form
$w_1 \hookrightarrow_{\codevar{i8}} w_2$ in $\PT$, i.e.,
contents are represented byte-wise and each address stores a value from
$[0,2^8-1]$. So as mentioned, to ease the formalization we assume
that all integers are unsigned and refer to \cite{JLAMP18} for the general
case.
 Recall that for concrete states, the set $\LI$ is empty since list invariants express abstract
 knowledge about the memory, which is unnecessary
 if the memory contents are known.

 \begin{definition}[Concrete States]
\label{def:concrete_state}
A \normalfont{\LLVM{}} state $c$ is \emph{concrete} iff $c = \ERROR$ or $c  = (\Pos, \LV,
\AL, \PT, \emptyset, \KB)$ such that the following holds:
\begin{itemize}
\item[$\bullet$] $\stateformula{c}$ is satisfiable
\item[$\bullet$] for all $v \in \Vsym(c)$  there exists an
  $n \in \N$ such that $\models \stateformula{c} \implies v = n$
\item[$\bullet$] there is no $(w_1 \hookrightarrow_{\codevar{ty}} w_2) \in \PT$ for $\codevar{ty} \neq \codevar{i8}$
\item[$\bullet$] for all $\alloc{v_1}{v_2} \in \AL$ and for all numbers $n$ with $\models \stateformula{c}  \implies v_1 \leq n \land n \leq v_2$, there exists $(w_1 \hookrightarrow_{\codevar{i8}} w_2) \in \PT$ for some $w_1,w_2 \in \Vsym$ such that $\models\stateformula{c} \implies  w_1 = n$ and $\models\stateformula{c} \implies w_2 = m$ for some $m \in [0,2^8-1]$
\item[$\bullet$] for every  $(w_1 \pointsto[\codevar{i8}] w_2) \in \PT$, there is a $ \alloc{v_1}{v_2} \in \AL$ such that $\models \stateformula{c} \implies v_1 \leq w_1 \leq v_2$
\end{itemize}
\end{definition}

 To define which concrete states are represented by a state
$s$, we introduced a \emph{separation logic} formula $\stateformula{s}_{\SL}$
in \cite{LLVM-JAR17}. 
We used a fragment of separation logic that combines \pagebreak first-order
logic formulas with the predicate symbols ``$\pointsto$'' and ``${\pointsto}_n$'' with $n \in \N$ for information
from $\PT$ and the separating conjunction ``$*$'' to express that memory blocks from $\AL$
are disjoint. We now lift this to a fragment similar to
\emph{quantitative} separation logic \cite{QSL}, extending conventional separation logic
by list predicates.
As in \cite{QSL}, these predicates represent list segments where all
elements are in disjoint parts of the heap. However, we introduce predicates
$\listpred_{\listsize,j}$ that express more details about the structure and the contents
of the list elements than the predicates of \cite{QSL}.
Here, $\listsize$ is the size of a list element in bytes and $j$ is the index of the
``recursive field''. Note that due to data structure padding resp.\ alignment in \sfC{}, the size of a list
element may be greater than the sum of the offset and the type size of
the last field.

As 
in \cite{LLVM-JAR17}, we use \emph{interpretations} $(\slassignment,\slmemory)$ to
define the semantics of separation logic. The \underline{as}signment function
$\slassignment : \Ids \to \N$ maps the program variables to concrete values. The function
$\slmemory : \N_{> 0} \partialfunctionmap \{0,\ldots,2^8-1\}$ describes the \underline{mem}ory
contents at allocated addresses as unsigned bytes. The
following definition uses possibly non-concrete instantiations $\sigma : \Vsym \to
\mathcal{T}(\Vsym)$, where $\mathcal{T}(\Vsym)$ is the set of arithmetic terms containing
only variables from $\Vsym$.

\begin{definition}[Semantics of Separation Logic]
\label{def:semantics-of-sl}
Let $\slassignment : \Ids \to \N$, $\slmemory : \N_{> 0} \partialfunctionmap
\{0,\ldots,2^8-1\}$, and let $\varphi$ be a formula. Let $\slassignment(\varphi)$ result
from replacing all program variables $\codevar{x}$ in $\varphi$ by the value $\slassignment(\codevar{x})$. Note that by construction, program variables $\codevar{x}$ are never quantified in our formulas. Then we define $(\slassignment,\slmemory) \models \varphi$ iff $\slmemory \models \slassignment(\varphi)$.

We now define $\slmemory \models \psi$ for formulas $\psi$ that may contain symbolic variables from $\Vsym$. As usual, all free variables $v_1,\ldots,v_n$ in $\psi$ are implicitly universally quantified, i.e., $\slmemory \models \psi$ iff $\slmemory \models \forall v_1,\ldots, v_n. \, \psi$.

The semantics of arithmetic operations and predicates as well as of first-order
connectives and quantifiers are as usual. In particular, we define $\slmemory \models
\forall v. \, \psi$ iff $\slmemory \models \sigma(\psi)$ holds for all instantiations
$\sigma$ where
  $\sigma(v) \in \N$ and $\sigma(w) = w$ for all $w \in \Vsym \backslash \{v\}$.

The semantics of $\pointstosl$ and $*$ for variable-free formulas are as follows:
For $n_1 \in \N_{>0}$ and $n_2 \in \{0, \ldots, 2^8-1\}$, let $\slmemory \models n_1 \pointstosl n_2$ hold iff $\slmemory(n_1) = n_2$.\footnote{We use ``$\pointstosl$'' instead of ``$\mapsto$'' in separation logic, since $\slmemory \models n_1 \mapsto n_2$ would imply that $\slmemory(n)$ is undefined for all $n \neq n_1$. This would be inconvenient in our formalization, since $\PT$ usually only contains information about a \emph{part} of the allocated memory.}
We assume a little-endian data layout (where least significant bytes are stored in the
lowest address).
Then for $n_1 \in \N_{>0}$ and
$n_2,n\in \N$, $n_1 \hookrightarrow_n n_2$ expresses that
$n_2$ is stored in the $n$ bytes starting at the address $n_1$. Hence, we always have 
$\slmemory \models n_1
\hookrightarrow_0 n_2$ and we define $\slmemory \models n_1 \hookrightarrow_{n+1} n_2$ iff
$\slmemory \models n_1 \hookrightarrow (n_2 \; \mathrm{mod} \; 2^8)$ and $\slmemory
\models (n_1+1) \hookrightarrow_{n} (n_2\;\mathrm{div}\;2^8)$.

The semantics of $*$ is defined as usual in separation logic: For two partial functions $\slmemory_1,\slmemory_2 : \N_{>0} \partialfunctionmap  \{0, \ldots, 2^8-1\}$, we write $\slmemory_1 \bot \slmemory_2$ to indicate that the domains of $\slmemory_1$ and $\slmemory_2$ are disjoint. If $\slmemory_1 \bot \slmemory_2$, then $\slmemory_1 \uplus \slmemory_2$ denotes the union of $\slmemory_1$ and $\slmemory_2$.
Now $\slmemory \models \varphi_1 * \varphi_2$ holds iff there exist $\slmemory_1 \bot \slmemory_2$ such that $\slmemory = \slmemory_1 \uplus \slmemory_2$ where $\slmemory_1 \models \varphi_1$ and $\slmemory_2 \models \varphi_2$.
We define \pagebreak  the empty separating conjunction to be $\mathit{true}$, i.e.,
$\bigast\nolimits_{\varphi \in \varnothing} \, \varphi \; = \; \mathit{true}$. 

Finally,
for $n \in \N_{>0}$ and all $1 \leq i \leq n$, let
$\listsize,  \ell, n_\mathit{ad}, s_i \in \N_{>0}$,
$\mathit{off}_i, m_i, \hat{m}_i \in
\N$,  and $1 \leq j \leq
n$. Then we define the semantics of the list predicate $\listpred_{\listsize,j}$ such that
$\slmemory \models \listpred_{\listsize,j}(\ell, n_\mathit{ad}, [(\mathit{off}_i: s_i:
  {m_i..\hat{m}_i})]_{i=1}^n)$ if
$\slmemory$ is defined on $n_\mathit{ad}$, \ldots,
$n_\mathit{ad}+\listsize-1$ and maps $n_\mathit{ad} + \mathit{off}_i$ to $m_i$ (taking the
type sizes $s_i$ into account). Furthermore, if $\ell = 1$, we must have $m_i = \hat{m}_i$ for
all $1 \leq i \leq n$, since the first values correspond to the last values for lists
consisting of a single element. In contrast, if $\ell > 1$, we require that
$\listpred_{\listsize,j}(\ell-1, m_j, [(\mathit{off}_i:
  s_i:{\overline{m_i}..\hat{m}_i})]_{i=1}^n)$ holds for a different part of the domain
of $\slmemory$ for some $\overline{m_i} \in \N$:
\begin{itemize}
	\item[(1)] $\slmemory \models \listpred_{\listsize,j}(1, n_\mathit{ad},
          [(\mathit{off}_i: s_i: {m_i..\hat{m}_i})]_{i=1}^n)$ iff
            $\slmemory \models \forall x. \exists y. \; (n_\mathit{ad} \leq x \leq
          n_\mathit{ad}+\listsize-1) \implies (x \pointstosl y)$, and for all $1 \leq i
          \leq n$ we have
            $m_i = \hat{m}_i$ and $\slmemory \models n_\mathit{ad}+\mathit{off}_i \pointstosl_{s_i} m_i$
	\item[(2)] $\slmemory \models \listpred_{\listsize,j}(\ell, n_\mathit{ad},
          [(\mathit{off}_i: s_i: {m_i..\hat{m}_i})]_{i=1}^n)$ for $\ell > 1$ iff
          $\slmemory \models (\forall x. \exists y. \; (n_\mathit{ad} \leq x \leq
          n_\mathit{ad}+\listsize-1) \implies (x \pointstosl y)) \,*\,
          \listpred_{\listsize,j}(\ell-1, m_j, [(\mathit{off}_i: s_i:
             {\overline{m_i}..\hat{m}_i})]_{i=1}^n)$ for some $\overline{m_i} \in \N$,
          and
            $\slmemory \models \,
           n_\mathit{ad}+\mathit{off}_i \pointstosl_{s_i} m_i$ for all $1 \leq i \leq n$
          \end{itemize}
\end{definition}

For example, $\slmemory \models \listpred_{16,2}(2, 1408,[(0: 4: 5..0), (8: 8: 1216..0)])$ holds for a function $\slmemory$ iff there exist $\slmemory_1 \bot \slmemory_2$ with $\slmemory = \slmemory_1 \uplus \slmemory_2$ and
\begin{itemize}
	\item for all $1408 \leq x \leq 1423$ there exists $y$ such that $\slmemory_1 \models x \pointstosl y$,
	\item $\slmemory_1 \models 1408 \pointstosl_{4} 5$, $\slmemory_1 \models 1416 \pointstosl_{8} 1216$,
	\item $\slmemory_2 \models \listpred_{16,2}(1, 1216,[(0: 4: \overline{m_1}..0), (8: 8: \overline{m_2}..0)])$ for some $\overline{m_1}, \overline{m_2} \in \N$:
    \begin{itemize}
        \item for all $1216 \leq x \leq 1231$ there exists $z$ such that $\slmemory_2 \models x \pointstosl z$,
        \item $\slmemory_2 \models 1216 \pointstosl_{4} \overline{m_1}$, $\slmemory \models 1224 \pointstosl_{8} \overline{m_2}$, where $\overline{m_1} = 0$ and $\overline{m_2} = 0$.
    \end{itemize}
\end{itemize}

Now we extend the formula $\stateformula{s}_{\SL}$ for a state $s$ from \cite{LLVM-JAR17}
by list predicates for
entries from $\LI$. Note that $\PT$ may contain entries referring to overlapping parts of
the memory and thus, the formulas resulting from $\PT$ are combined by the ordinary
conjunction ``$\wedge$''. In contrast,
all entries from $\AL$ and $\LI$ must operate on disjoint memory parts in order to exclude
sharing
and thus, their formulas are combined with ``$*$''. Moreover, we require disjointness of
the memory domains of $\PT$ and $\LI$
and $\sizeOf(\codevar{ty})$ denotes the size of \codevar{ty} in bytes for any \LLVM{} type \codevar{ty}.

\begin{definition}[$\SL$ Formulas for States]
\label{def:StateFormula}
For $v_1,v_2 \in \Vsym$, let  $\stateformula{\alloc{v_1}{v_2}}_\SL =(\forall x. \exists y. \; (v_1 \leq x \leq v_2) \implies (x \pointstosl y))$.
For any \normalfont{\LLVM{}} type $\codevarx{ty}$ we define
$\stateformula{v_1 \hookrightarrow_{\codevarx{ty}} v_2}_\SL\linebreak = (v_1
\hookrightarrow_{\sizeOf(\codevar{ty})} v_2)$.
For $n \in \N_{>0}$, $1 \leq i,j \leq n$, $\mathit{off}_i \in \N$, $v_\mathit{ad}, v_\ell, v_i,
\hat{v}_i \in \Vsym$, and \normalfont{\LLVM{}} types $\codevar{ty},
\codevar{ty}_i$ with $\codevar{ty}_j = \codevar{ty*}$ and $\codevar{ty}_m \neq
\codevar{ty*}$ for all $m \neq j$, we define
$\stateformula{v_{\mathit{ad}} \pointstorec[\codevar{ty}]{v_\ell} [(\mathit{off}_i: \codevar{ty}_i: {v_i..\hat{v}_i})]_{i=1}^n}_\SL =$
$$\listpred_{\sizeOf(\codevar{ty}),j}(v_\ell, v_\mathit{ad}, [(\mathit{off}_i: \sizeOf(\codevar{ty}_i): {v_i..\hat{v}_i})]_{i=1}^n).$$

\noindent
A state $s = (\Pos,\LV,\AL,\PT,\LI,\KB)$  is then represented in separation logic
by
\[\stateformula{s}_\SL = \stateformula{s}\;   \wedge \;  \LV \; \wedge \;
\left(
\left((\bigast\nolimits_{\varphi \in \AL} \; \; \stateformula{\varphi}_\SL)
\; \wedge \;
(\bigwedge\nolimits_{\varphi \in \PT} \; \; \stateformula{\varphi}_\SL )\right)
\; \ast \;
(\bigast\nolimits_{\varphi \in \LI} \; \; \stateformula{\varphi}_\SL)\right).\]
\end{definition}

While the weaker formula $\stateformula{s}$ is used for all reasoning during the
construction of the symbolic execution graph,  we use $\stateformula{s}_\SL$ to define when
a concrete state $c \neq \ERROR$ is \emph{represented} by a state $s$, see \cref{def:representation}.
To this end, one has to  extract an interpretation $(\slassignment^c,\slmemory^c)$ from a
concrete \pagebreak  state $c$.

\begin{definition}[Interpretations $\slassignment^c$, $\slmemory^c$]
\label{def:Interpretations}
Let $c \neq \ERROR$ be a concrete state $c = (\Pos,\allowbreak \LV^c,\allowbreak \AL^c,\allowbreak \PT^c,\allowbreak \emptyset,\allowbreak \KB^c)$.
For every $\codevar{x} \in \Ids$ where $\codevar{x} \in \domain(\LV^{c})$, let
$\slassignment^c(\codevar{x}) = n$ for the number $n \in \N$ with $\models \stateformula{c} \implies \LV^c(\codevar{x}) = n$.

For $n \in \N_{>0}$, the  function $\slmemory^{c}(n)$ is defined iff there exists a $(w_1 \pointsto[\codevar{i8}] w_2) \in \PT^c$  such that $\models  \stateformula{c}  \implies  w_1 = n$.
Let $\models \stateformula{c} \implies  w_2 = m$ for  $m \in [0,2^8-1]$.
Then we have $\slmemory^c(n) = m$.
\end{definition}

Now we can define the representation of states formally.
A state $s$
\emph{represents} a concrete state $c$ if there exists a concrete instantiation $\sigma : \Vsym \to \N$ which assign numbers to symbolic variables,
such that $(\slassignment^c,\slmemory^c)$ is a model of
$\sigma(\stateformula{s}_\SL)$. Moreover, 
for each allocation in $s$ (explicitly given by $\AL$ or implicitly given by $\LI$)
there must be a corresponding allocation in $c$. The fourth condition of
\cref{def:representation} requires that for each list invariant of length $v_\ell$, if
$\sigma(v_\ell)$ is the concrete value for $v_\ell$, then there exist $\sigma(v_\ell)$
allocations of the size of a list element,
  where the first allocation
begins at the start address $\sigma(v_\mathit{ad})$ of the
invariant and the $k$-th \code{next} pointer points to the $(k+1)$-th allocation for all $1 \leq k \leq
\sigma(v_\ell)-1$. Here, we extend $\sigma$ to formulas as
usual, i.e., $\sigma(\varphi)$ instantiates all free occurrences of $v \in \Vsym$ in
$\varphi$ by $\sigma(v)$.

\begin{definition}[Representation of States]
\label{def:representation}
A concrete state $c = (\Pos,\allowbreak \LV^c,\allowbreak \AL^c,\allowbreak \PT^c,\allowbreak \emptyset,\allowbreak \KB^c)$ is \emph{represented} by a state $s = (\Pos, \LV^s, \AL^s, \PT^s, \LI^s, \KB^s)$ iff
\begin{enumerate}
    \item $\domain(\LV^c) =  \domain(\LV^s)$
    \item $(\slassignment^c,\slmemory^c)$ is a model of $\sigma(\stateformula{s}_\SL)$ for a concrete instantiation $\sigma : \Vsym \to \N$
    \item for all {\small $\alloc{v_1}{v_2} \in \AL^s$} there is {\small $\alloc{w_1}{w_2}
      \in \AL^c$} with {\small $\models \stateformula{c} \Rightarrow w_1 = \sigma(v_1) \wedge w_2 = \sigma(v_2)$}
    \item for each $v_{\mathit{ad}} \pointstorec[\codevar{ty}]{v_\ell} [(\mathit{off}_i:
      \codevar{ty}_i: {v_i..\hat{v}_i})]_{i=1}^n \in \LI^s$ with $\codevar{ty}_j =
      \codevar{ty*}$ there exist allocations
      $\alloc{w^\mathit{start}_{k}}{w^\mathit{end}_{k}} \in \AL^c$ for all $1 \leq k \leq \sigma(v_\ell)$, such that $(\slassignment^c,\slmemory^c)$ is a model of
    $(\bigwedge_{k=1}^{\sigma(v_\ell)} m^\mathit{start}_{k} - m^\mathit{end}_{k} = \sizeOf(\codevar{ty})-1)
      \,\wedge\, \sigma(v_\mathit{ad}) = m^\mathit{start}_{1}
      \,\wedge\,\bigwedge_{k=1}^{\sigma(v_\ell)-1} m^\mathit{start}_{k} + \mathit{off}_j
      \hookrightarrow_{\sizeOf(\codevar{ty*})} m^\mathit{start}_{k+1}$, where
      $m^\mathit{start}_k$ and  $m^\mathit{end}_k$ are the numbers
      with $\models \stateformula{c} \implies w^\mathit{start}_k = m^\mathit{start}_k$
      and $\models \stateformula{c} \implies w^\mathit{end}_k = m^\mathit{end}_k$.
\end{enumerate}
The error state $\ERROR$ is only represented by $\ERROR$ itself.
\end{definition}

So as mentioned in  Sect.\ \ref{sect:domain},
all concrete states $c = ((\cmpFor,0),\LV,\AL,\PT,\emptyset,\linebreak \KB)$ represented by the
example state \eqref{example state}
contain $n$ allocations of 16 bytes for some $n = \sigma(x_\ell) \geq 1$,
where in the first four bytes a 32-bit integer is stored and in the last eight
bytes the address of the next allocation (or 0, in case of the last allocation) is stored. Here, $\slassignment^c(\codevar{curr})$ is the address of the first allocation.

\section{Additional Symbolic Execution Rules for List Traversal}
\label{app:rules}

In \cref{sect:List Traversal} we presented a new symbolic execution rule for list traversal via pointer
arithmetic.  
This appendix  contains the remaining variants of the symbolic execution rule for list
traversal that are needed in addition to the variant in  \rSC{sect:List Traversal}.


We start with the rule for list traversal via field access, i.e.,
where the next element is accessed using \code{curr'->next} as in the 
\code{for} loop of our leading example. In \LLVM{}, this also corresponds to a
\code{getelementptr}  \pagebreak instruction,
but then one uses the type of the list instead of \code{i8} and indices are used to specify the field to
be accessed. Such a traversal can be handled without considering field offsets. The
corresponding rule
is very similar to the rule in \cref{sect:List Traversal}
(one simply replaces  $\models \, \stateformula{s} \implies \LV(t) = \mathit{off}_j$ in
the first condition by  $\models \, \stateformula{s} \implies \LV(t) = j$).

\vspace*{.2cm}
\noindent
\fbox{
\begin{minipage}{11.8cm}
\label{rule:traversalfieldaccess}
\small
\mbox{\small \textbf{\hspace*{-.15cm}list trav.}}  {\scriptsize($p:$ ``\code{pb = getelementptr ty,
  ty* pa, i$m_1$ $0$, i$m_2$ $t$}'', $t \in \Ids \cup \N$, $\code{pa},\code{pb}\in \Ids$)}\\
\mbox{\small \textbf{\hspace*{-.15cm}via field}}\\
\mbox{\small \textbf{\hspace*{-.15cm}access}}\\
\vspace*{-6mm}\\
\centerline{$\frac{\parbox{6.6cm}{\centerline{
$s = (\Pos, \; \LV, \; \AL, \; \PT, \; \LI, \; \KB)$}\vspace*{.1cm}}}
{\parbox{10.2cm}{\vspace*{.1cm} \centerline{
$s' = (\Pos^+, \; \LV[\code{pb}:=w^\mathit{start}_j], \; \AL \cup \alloc{v^\mathit{start}}{v^\mathit{end}}, \; \PT', \; \LI\backslash\{\listinv\}\cup\{\listinv'\}, \; \KB')$}}}\;\;\;\;$
\mbox{if}} \vspace*{-.05cm}
{\small
\begin{itemize}
\item[$\bullet$] there is $\listinv = (v_\mathit{ad} \pointstorec[\codevar{ty}]{v_\ell}
  [(\mathit{off}_i: \codevar{ty}_i: {v_i..\hat{v}_i})]_{i=1}^n) \in \LI$ with
$\codevar{ty}_j = \codevar{ty*}$,\\
  $\models
  \, \stateformula{s} \implies \LV(\codevar{pa}) = v_j$,  $\models \, \stateformula{s} \implies \LV(t) = j$, and $\models \, \stateformula{s} \implies v_\ell \geq 2$
\item[$\bullet$] $\PT' = \PT \cup \{(v^\mathit{start}_i \hookrightarrow_{\codevar{ty}_i} v_i) \mid 1 \leq i \leq n \}$
\item[$\bullet$] $\listinv' = (w^\mathit{start} \pointstorec[\codevar{ty}]{w_\ell} [(\mathit{off}_i: \codevar{ty}_i: {w_i..\hat{v}_i})]_{i=1}^n)$
\item[$\bullet$] $\KB' = \KB \cup \{v^\mathit{start} = v_\mathit{ad},\, v^\mathit{end} = v^\mathit{start} + \sizeOf(\code{ty}) -1,\, w^\mathit{start} = v_j,\, w_\ell = v_\ell-1,\\ w^\mathit{start}_j = w^\mathit{start} + \mathit{off}_j\} \cup \{v^\mathit{start}_i = v_\mathit{ad} + \mathit{off}_i \mid 1 \leq i \leq n \}$
\item[$\bullet$] $v^\mathit{start}, v^\mathit{end}, v^\mathit{start}_1, \ldots, v^\mathit{start}_n, w^\mathit{start}, w_\ell, w^\mathit{start}_j, w_1, \ldots, w_n \in \Vsym$ are fresh
\end{itemize}}
\end{minipage}}
\vspace*{.2cm}

In the following, we only give rules for the case where pointer arithmetic is used. The
respective rules for list traversal via field access are obtained using the small adaption shown above.


If the list invariant $\listinv$ to be traversed only contains one element ($\models \,
\stateformula{s} \implies v_\ell = 1$),
but the rest of the first condition of the rule in \cref{sect:List Traversal} holds,
then we apply
a rule which computes the components of $s'$ as in \cref{sect:List Traversal} (i.e., the list element is
transformed to $\AL$ and $\PT$ entries), but we do not add a new invariant
$\listinv'$ to $\LI$. So we
remove the list invariant since the pointer $\code{pb}$ reaches the final
element.
Moreover, we add the atoms $v_i = \hat{v}_i$ for $1 \leq i \leq n$ to $\KB'$.
If we cannot decide whether $v_\ell \geq 2$ holds, then we refine the state such that
$v_\ell \geq 2$ holds
in the first refined
state,  and we have $v_\ell = 1$ in the second state. For our leading
example such a refinement is not needed since the null check ensures that we only enter
the loop body if the length of the list invariant is at least two.

\vspace*{.2cm}
\noindent
\fbox{
\begin{minipage}{11.8cm}
\label{rule:traversal}
\small
\mbox{\small \textbf{\hspace*{-.15cm}list traversal}}  ($p:$ ``\code{pb = getelementptr
  i8, i8* pa, i$m$ $t$}'', {\scriptsize $t \in \Ids \cup \N$, $\code{pa},\code{pb}\in \Ids$})\\
\mbox{\small \textbf{\hspace*{-.15cm}of last element}}\\
\mbox{\small \textbf{\hspace*{-.15cm}via pointer}}\\
\mbox{\small \textbf{\hspace*{-.15cm}arithmetic}}\\
\vspace*{-6mm}\\
\centerline{$\frac{\parbox{6.6cm}{\centerline{
$s = (\Pos, \; \LV, \; \AL, \; \PT, \; \LI, \; \KB)$}\vspace*{.1cm}}}
{\parbox{9.3cm}{\vspace*{.1cm} \centerline{
$s' = (\Pos^+, \; \LV[\code{pb}:=w^\mathit{start}_j], \; \AL \cup \alloc{v^\mathit{start}}{v^\mathit{end}}, \; \PT', \; \LI\backslash\{\listinv\}, \; \KB')$}}}\;\;\;\;$
\mbox{if}} \vspace*{-.05cm}
{\small
\begin{itemize}
\item[$\bullet$] there is $\listinv = (v_\mathit{ad} \pointstorec[\codevar{ty}]{v_\ell}
  [(\mathit{off}_i: \codevar{ty}_i: {v_i..\hat{v}_i})]_{i=1}^n) \in \LI$ with
$\codevar{ty}_j = \codevar{ty*}$,\\
  $\models
  \, \stateformula{s} \implies \LV(\codevar{pa}) = v_j$,  $\models \, \stateformula{s} \implies \LV(t) = \mathit{off}_j$, and $\models \, \stateformula{s} \implies v_\ell = 1$
\item[$\bullet$] $\PT' = \PT \cup \{(v^\mathit{start}_i \hookrightarrow_{\codevar{ty}_i} v_i) \mid 1 \leq i \leq n \}$
\item[$\bullet$] $\KB' = \KB \cup \{v^\mathit{start} = v_\mathit{ad},\, v^\mathit{end} = v^\mathit{start} + \sizeOf(\code{ty}) -1,\, w^\mathit{start}_j = v_j + \mathit{off}_j\} \cup$\\$\{v^\mathit{start}_i = v_\mathit{ad} + \mathit{off}_i \mid 1 \leq i \leq n \} \cup \{v_i = \hat{v}_i \mid 1 \leq i \leq n \}$
\item[$\bullet$] $v^\mathit{start}, v^\mathit{end}, v^\mathit{start}_1, \ldots, v^\mathit{start}_n, w^\mathit{start}_j \in \Vsym$ are fresh
\end{itemize}}
\end{minipage}}
\vspace*{.2cm}


If the list invariant $\listinv_2$ to be traversed starts at an address $v_\mathit{ad}$
that is equal to the value $\hat{u}_j$ of the last ``recursive'' pointer of another list
invariant $\listinv_1$, we have a split list. As above, the invariant $\listinv_2$ is
replaced by a new list invariant $\listinv'_2$ with decremented length that starts at the
second element of $\listinv_2$. But instead of transforming the first element to entries
from $\AL$ and $\PT$, it is ``appended'' to the end of $\listinv_1$, resulting in a new
list invariant $\listinv'_1$ with incremented length. 
In the new resulting state, again the last recursive pointer $v_j$ of $\listinv'_1$ is equal to
the start address $w^\mathit{start}$ of $\listinv'_2$.
The soundness proof is similar to
the proof for 
the corresponding rule above but uses disjointness
of the individual list invariants instead of
disjointness between the allocations and the list invariants.

\vspace*{.2cm}
\noindent
\fbox{
\begin{minipage}{11.8cm}
\label{rule:traversal}
\small
\mbox{\small \textbf{\hspace*{-.15cm}list traversal}} ($p:$ ``\code{pb = getelementptr i8,
  i8* pa, i$m$ $t$}'', {\scriptsize $t \in \Ids \cup \N$, $\code{pa},\code{pb}\in \Ids$})\\
\mbox{\small \textbf{\hspace*{-.15cm}via pointer}}\\
\mbox{\small \textbf{\hspace*{-.15cm}arithmetic}}\\
\mbox{\small \textbf{\hspace*{-.15cm}(split lists)}}\\
\vspace*{-6mm}\\
\centerline{$\frac{\parbox{6.6cm}{\centerline{
$s = (\Pos, \; \LV, \; \AL, \; \PT, \; \LI, \; \KB)$}\vspace*{.1cm}}}
{\parbox{8.9cm}{\vspace*{.1cm} \centerline{
$s' = (\Pos^+, \; \LV[\code{pb}:=w^\mathit{start}_j], \; \AL, \; \PT, \; \LI\backslash\{\listinv_1,\listinv_2\}\cup\{\listinv'_1,\listinv'_2\}, \; \KB')$}}}\;\;\;\;$
\mbox{if}} \vspace*{-.05cm}
{\small
\begin{itemize}
\item[$\bullet$] there is $\listinv_2 = (v_\mathit{ad} \pointstorec[\codevar{ty}]{v_\ell}
  [(\mathit{off}_i: \codevar{ty}_i: {v_i..\hat{v}_i})]_{i=1}^n) \in \LI$ with
$\codevar{ty}_j = \codevar{ty*}$,\\
  $\models
  \, \stateformula{s} \implies \LV(\codevar{pa}) = v_j$,  $\models \, \stateformula{s} \implies \LV(t) = \mathit{off}_j$, and $\models \, \stateformula{s} \implies v_\ell \geq 2$
\item[$\bullet$] there is $\listinv_1 = (u_\mathit{ad} \pointstorec[\codevar{ty}]{u_\ell}
  [(\mathit{off}_i: \codevar{ty}_i: {u_i..\hat{u}_i})]_{i=1}^n) \in \LI$ with
$\models \, \stateformula{s} \implies \hat{u}_j = v_\mathit{ad}$
\item[$\bullet$] $\listinv'_1 = (u_\mathit{ad} \pointstorec[\codevar{ty}]{{u}'_\ell}
  [(\mathit{off}_i: \codevar{ty}_i: {u_i..v_i})]_{i=1}^n)$
\item[$\bullet$] $\listinv'_2 = (w^\mathit{start} \pointstorec[\codevar{ty}]{w_\ell} [(\mathit{off}_i: \codevar{ty}_i: {w_i..\hat{v}_i})]_{i=1}^n)$
\item[$\bullet$] $\KB' = \KB \cup \{{u}'_\ell = u_\ell+1,\, w^\mathit{start} = v_j,\, w_\ell = v_\ell-1,\, w^\mathit{start}_j = w^\mathit{start} + \mathit{off}_j\}$
\item[$\bullet$] ${u}'_\ell, w^\mathit{start}, w_\ell, w^\mathit{start}_j, w_1, \ldots, w_n \in \Vsym$ are fresh
\end{itemize}}
\end{minipage}}
\vspace*{.2cm}


If the list invariant $\listinv_2$ to be traversed only contains one element
($\models \, \stateformula{s} \implies v_\ell = 1$) and there is a list invariant
$\listinv_1$ whose last element points to the only element of $\listinv_2$, then $\listinv_2$
is ``appended'' to the end of $\listinv_1$.

\vspace*{.2cm}
\noindent
\fbox{
\begin{minipage}{11.8cm}
\label{rule:traversal}
\small
\mbox{\small \textbf{\hspace*{-.15cm}list traversal}}  ($p:$ ``\code{pb = getelementptr i8, i8* pa, i$m$ $t$}'', {\scriptsize $t \in \Ids \cup \N$, $\code{pa},\code{pb}\in \Ids$})\\
\mbox{\small \textbf{\hspace*{-.15cm}of last element}}\\
\mbox{\small \textbf{\hspace*{-.15cm}via pointer}}\\
\mbox{\small \textbf{\hspace*{-.15cm}arithmetic}}\\
\mbox{\small \textbf{\hspace*{-.15cm}(split lists)}}\\
\vspace*{-6mm}\\
\centerline{$\frac{\parbox{6.6cm}{\centerline{
$s = (\Pos, \; \LV, \; \AL, \; \PT, \; \LI, \; \KB)$}\vspace*{.1cm}}}
{\parbox{8.6cm}{\vspace*{.1cm} \centerline{
$s' = (\Pos^+, \; \LV[\code{pb}:=w^\mathit{start}_j], \; \AL, \; \PT, \; \LI\backslash\{\listinv_1,\listinv_2\}\cup\{\listinv'_1\}, \; \KB')$}}}\;\;\;\;$
\mbox{if}} \vspace*{-.05cm}
{\small
\begin{itemize}
\item[$\bullet$] there is $\listinv_2 = (v_\mathit{ad} \pointstorec[\codevar{ty}]{v_\ell}
  [(\mathit{off}_i: \codevar{ty}_i: {v_i..\hat{v}_i})]_{i=1}^n) \in \LI$ with
$\codevar{ty}_j = \codevar{ty*}$,\\
  $\models
  \, \stateformula{s} \implies \LV(\codevar{pa}) = v_j$,  $\models \, \stateformula{s} \implies \LV(t) = \mathit{off}_j$, and $\models \, \stateformula{s} \implies v_\ell = 1$
\item[$\bullet$] there is $\listinv_1 = (u_\mathit{ad} \pointstorec[\codevar{ty}]{u_\ell}
  [(\mathit{off}_i: \codevar{ty}_i: {u_i..\hat{u}_i})]_{i=1}^n) \in \LI$ with
$\models \, \stateformula{s} \implies \hat{u}_j = v_\mathit{ad}$
\item[$\bullet$] $\listinv'_1 = (u_\mathit{ad} \pointstorec[\codevar{ty}]{{u}'_\ell}
  [(\mathit{off}_i: \codevar{ty}_i: {u_i..v_i})]_{i=1}^n)$
\item[$\bullet$] $\KB' = \KB \cup \{{u}'_\ell = u_\ell+1,\, w^\mathit{start}_j = v_j + \mathit{off}_j\} \cup \{v_i = \hat{v}_i \mid 1 \leq i \leq n \}$
\item[$\bullet$] ${u}'_\ell, w^\mathit{start}_j \in \Vsym$ are fresh
\end{itemize}}
\end{minipage}}
\vspace*{.2cm}

\clearpage
\setcounter{section}{2}
\renewcommand{\thesection}{\Alph{section}}
\section{Symbolic Execution Graph of our Leading Example}
\label{app:leadingExample}

In this appendix, we present the remaining construction of the SEG for our leading example.

The first cycle of the SEG contains the two states 
$\GraphInitCmpZeroGen$  and $\GraphInitCmpZeroC$ at the program position $(\cmpFor, 0)$,
with a generalization edge from  $\GraphInitCmpZeroC$ to $\GraphInitCmpZeroGen$, see
\cref{fig:ForLoopGen}.

\footnotesize

\begin{figure}[t]
\centering
\begin{tikzpicture}[node distance = \ydist and \xdist]
\scriptsize
\def\widetwidth{6.2cm}
\def\smalltwidth{4.75cm}
\def\fulllinewidhth{11cm}
\def\edgenodedist{0.2cm}
\def\FirstIndentwidth{3cm}
\def\SecondIndentwidth{2cm}
\tikzset{invisible/.style={opacity=0}}
\tikzstyle{state}=[
                           inner sep=2pt,
                           font=\scriptsize,
                           draw]

\node[state, align=left, label={[xshift=-\labelxshift]180:$\GraphInitCmpZeroGen$}] (3) 
{$(\cmpFor, 0),\;
  \{ \code{n} = x_{\code{n}}, \,
   \code{tail\_ptr} = x_{\code{tp}}, \,
   \code{mem} = x_{\code{mem}}, \,
   \code{curr} = x_{\code{mem}}, \,
   \code{nondet} = x_{\code{nd}}, \,
   \code{curr\_val} = x_{\code{mem}}, \,
   \code{curr\_next} = x_{\code{cn}},$\\
$\,\code{k} = x_{\code{k}}, \,
   \code{kinc} = x_{\code{kinc}}, \, 
   ...\},\;
  \{
   \alloc{x_{\code{tp}}}{x_\code{tp}^\mathit{end}}, \,
   \alloc{x_{\code{k\_ad}}}{x_\code{k\_ad}^\mathit{end}}
   \},$\\
 $\{
   x_{\code{tp}} \pointsto[\code{list*}] x_{\code{mem}}, \,
   x_{\code{k\_ad}} \pointsto[\code{i32}] x_{\code{kinc}}
  \}, \;
  \{
   x_{\code{mem}} \pointstorec[\code{list}]{x_{\ell}}
     [(0: \code{i32}: x_{\code{nd}}..\hat{x}_{\code{nd}}),
      (8: \code{list*}: x_{\code{next}}..0)]
   \},$\\
 $\{  x_{\code{n}} > x_{\code{k}}, \,
x_{\code{k\_ad}}^\mathit{end} = x_{\code{k\_ad}} + 3, \,
   x_\code{tp}^\mathit{end} = x_{\code{tp}} + 7, \,
   x_{\code{cn}} = x_{\code{mem}} + 8, \,
   x_{\code{kinc}} = x_{\code{k}} + 1, \,
   ... \}$
};

\node[state, below left=0.15cm and -4.5cm of 3, align=left] (4) {};

\node[below left=-0.13cm and -0.05cm of 4] (label) {$\GraphTravEntryZero$};

\node[left=0.6cm of 4] (dots) {\ldots};

\node[state, below right=0.25cm and -4.5cm of 4, align=left, label={[xshift=-\labelxshift]183:$\GraphInitCmpZeroC$}] (5) 
{$(\cmpFor, 0),\;
  \{ \code{n} = x_{\code{n}}, \,
   \code{tail\_ptr} = x_{\code{tp}}, \,
   \code{mem} = y_{\code{mem}}, \,
   \code{curr} = y_{\code{mem}}, \,
   \code{nondet} = y_{\code{nd}}, \,
   \code{curr\_val} = y_{\code{mem}}, \,
   \code{curr\_next} = y_{\code{cn}},$\\
$\,\code{k} = x_{\code{kinc}}, \,
   \code{kinc} = y_{\code{kinc}}, \,
     ...\},\;
  \{
   \alloc{x_{\code{tp}}}{x_\code{tp}^\mathit{end}}, \,
   \alloc{x_{\code{k\_ad}}}{x_\code{k\_ad}^\mathit{end}}
   \},$\\
 $\{
   x_{\code{tp}} \pointsto[\code{list*}] y_{\code{mem}}, \,
   x_{\code{k\_ad}} \pointsto[\code{i32}] y_{\code{kinc}}
  \}, \;
  \{
   y_{\code{mem}} \pointstorec[\code{list}]{y_{\ell}}
     [(0: \code{i32}: y_{\code{nd}}..\hat{x}_{\code{nd}}),
      (8: \code{list*}: x_{\code{mem}}..0)]
   \},$\\
 $\{
    x_{\code{n}} > x_{\code{kinc}}, \,
  x_{\code{k\_ad}}^\mathit{end} = x_{\code{k\_ad}} + 3, \,
   x_\code{tp}^\mathit{end} = x_{\code{tp}} + 7, \,
   y_{\code{cn}} = y_{\code{mem}} + 8, \,
   y_{\code{kinc}} = x_{\code{kinc}} + 1, \,
   y_{\ell} = x_{\ell} + 1, \,
   ... \}$
};

\draw[omit-edge] ($(4)+(1.65,0)$) -- (4);
\draw[omit-edge] (4) -- (dots);
\draw[omit-edge] (3) -- (5);
\draw[gen-edge] (5.west) to[bend left=80] (3.south west);

\end{tikzpicture}
\caption{Generalization of States in the First Cycle}
\label{fig:ForLoopGen}
\end{figure}
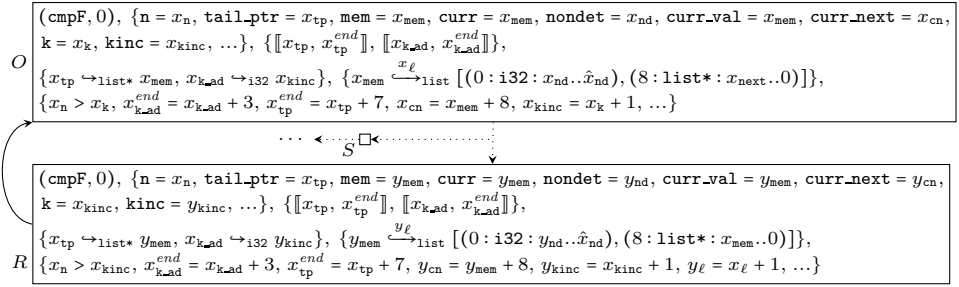
\normalsize

For the second cycle of the SEG, we continue the construction from \cref{sect:List Traversal}.
After reaching  State $\GraphTravBodyGetelemRes$ by the list traversal rule in \cref{sect:List Traversal}, we
continue the symbolic execution.
When we reach
position $(\cmpWhile, 0)$ again, our merging heuristic from
\cite{LLVM-JAR17}
would merge the resulting
state with the previous state $\GraphTravCmpZeroTwo$
at this
position.
Then the list invariants would be merged to a
generalized invariant but in that way we would lose the information about the first
part of the list,
which has already been traversed and transformed to other components. Therefore,
then we would not be able to prove memory safety anymore if there
were a subsequent loop where one iterates through the whole list again.
  For this reason, we forbid
merging of two states $s_1, s_2$ if $s_1$ contains a list invariant $\listinv$ and there are no list
elements resulting from $\AL^{s_1}$ and $\PT^{s_1}$ pointing to the first element of
$\listinv$, and the other state $s_2$ contains a corresponding list invariant $\listinv'$
and a list element resulting from $\AL^{s_2}$ and $\PT^{s_2}$ pointing to the first element
of $\listinv'$. In our example, the successor of $\GraphTravBodyGetelemRes$ at position $(\code{cmpW},0)$ contains the list
element at $\alloc{x_{\code{mem}}}{x_\code{mem}^\mathit{end}}$, which points to the list
invariant starting at $x_{\code{next}}$
(since $(x_{\code{np}} = x_{\code{mem}} + 8)$,
$(x_\code{mem}^\mathit{end} = x_{\code{mem}} + 15) \in \KB^{\GraphTravBodyGetelemRes}$, and
 $(x_{\code{np}} \pointsto[\code{list*}] x_{\code{next}}) \in
\PT^{\GraphTravBodyGetelemRes}$).
Its predecessor at the same position (i.e., $\GraphTravCmpZeroTwo$)
  does not contain any list element
pointing to the start of its corresponding list invariant.
Using this merging heuristic, we iterate the loop a third time and apply the list traversal
rule a second time (Fig. \ref{fig:Trav2}).

\footnotesize

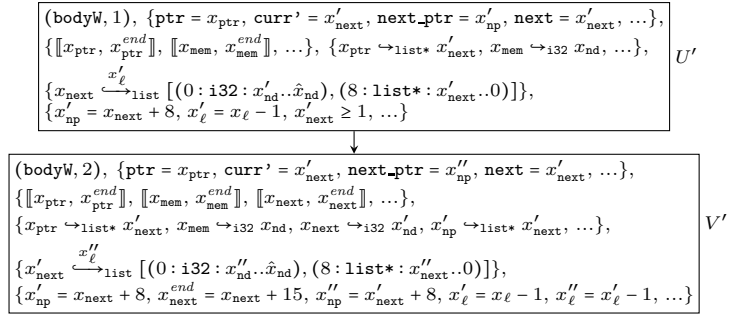
\begin{figure}[t]
\centering
\begin{tikzpicture}[node distance = \ydist and \xdist]
\scriptsize
\def\widetwidth{6.2cm}
\def\smalltwidth{4.75cm}
\def\fulllinewidhth{11cm}
\def\edgenodedist{0.2cm}
\def\FirstIndentwidth{3cm}
\def\SecondIndentwidth{2cm}
\tikzset{invisible/.style={opacity=0}}
\tikzstyle{state}=[
                           inner sep=2pt,
                           font=\scriptsize,
                           draw]

\node[state, align=left, label={[xshift=\labelxshift]2:$\GraphTravBodyGetelemStartP$}] (5) 
{$(\bodyWhile, 1),\;
  \{
   \code{ptr} = x_{\code{ptr}}, \,
   \code{curr'} = x'_{\code{next}}, \,
   \code{next\_ptr} = x'_{\code{np}}, \,
   \code{next} = x'_{\code{next}}, \,
   ...\},$\\
 $\{
   \alloc{x_{\code{ptr}}}{x_\code{ptr}^\mathit{end}}, \,
   \alloc{x_{\code{mem}}}{x_\code{mem}^\mathit{end}}, \,
   ...\}, \;
  \{
   x_{\code{ptr}} \pointsto[\code{list*}] x'_{\code{next}}, \,
   x_{\code{mem}} \pointsto[\code{i32}] x_{\code{nd}}, \,
   ...\},$\\
 $\{
   x_{\code{next}} \pointstorec[\code{list}]{x'_{\ell}}
     [(0: \code{i32}: x'_{\code{nd}}..\hat{x}_{\code{nd}}),
      (8: \code{list*}: x'_{\code{next}}..0)]
   \},$\\
 $\{
   x'_{\code{np}} = x_{\code{next}} + 8, \,
   x'_{\ell} = x_{\ell} - 1, \,
   x'_\code{next} \geq 1, \,
   ... \}$
};

\node[state, below=of 5, align=left, label={[xshift=\labelxshift]2:$\GraphTravBodyGetelemResP$}] (6) 
{$(\bodyWhile, 2),\;
  \{
   \code{ptr} = x_{\code{ptr}}, \,
   \code{curr'} = x'_{\code{next}}, \,
   \code{next\_ptr} = {x}_{\code{np}}'', \,
   \code{next} = x'_{\code{next}}, \,
   ...\},$\\
 $\{
   \alloc{x_{\code{ptr}}}{x_\code{ptr}^\mathit{end}}, \,
   \alloc{x_{\code{mem}}}{x_\code{mem}^\mathit{end}}, \,
   \alloc{x_{\code{next}}}{x_\code{next}^\mathit{end}}, \,
   ...\},$\\
  $\{
   x_{\code{ptr}} \pointsto[\code{list*}] x'_{\code{next}}, \,
   x_{\code{mem}} \pointsto[\code{i32}] x_{\code{nd}}, \,
   x_{\code{next}} \pointsto[\code{i32}] x'_{\code{nd}}, \,
   x'_{\code{np}} \pointsto[\code{list*}] x'_{\code{next}}, \,
   ...\},$\\
 $\{
   x'_{\code{next}} \pointstorec[\code{list}]{{x}_{\ell}''}
     [(0: \code{i32}: {x}_{\code{nd}}''..\hat{x}_{\code{nd}}),
      (8: \code{list*}: {x}_{\code{next}}''..0)]
   \},$\\
 $\{
   x'_{\code{np}} = x_{\code{next}} + 8, \,
x_{\code{next}}^{\mathit{end}} = x_{\code{next}} + 15, \,
   {x}_{\code{np}}'' = x'_{\code{next}} + 8, \,
   x'_{\ell} = x_{\ell} - 1, \,
   {x}_{\ell}'' = x'_{\ell} - 1, \,
   ... \}$
};

\draw[eval-edge] (5)  --  (6);

\end{tikzpicture}
\caption{Traversing a List Invariant (Part 2)}
\label{fig:Trav2}
\end{figure}
\normalsize

The
next state at position $(\cmpWhile, 0)$ has
a concrete list
with two elements (at the addresses $x_\code{mem}$ and $x_\code{next}$) before the element
at the start address $x'_\code{next}$ of the list invariant.
This state is merged with the state at position
$(\cmpWhile, 0)$ from the previous iteration, which has a concrete list
with one element (at the address  $x_\code{mem}$) before the element 
at the start address $x_\code{next}$ of its list invariant.
The merging results in the state $\GraphTravCmpZeroThree$
in Fig. \ref{fig:Trav3}. Here,
according to our merging technique in \cref{sect:inferring_inv},
the concrete lists consisting of the traversed list elements are
merged to a new list invariant. We call such a list that is represented by more than one
list invariant a \emph{split list}. To evaluate a \code{getelementptr} instruction on a split
list
  (i.e., the base address $\code{pa}$
of this instruction is known to be the \code{next} pointer of a list
invariant
 $\listinv_2$, and there is another list invariant $\listinv_1$ whose last element points to the
start address of $\listinv_2$), we use a variant of the above list traversal rule where
the first element of $\listinv_2$ is added as a last element to $\listinv_1$ and the 
length of $\listinv_1$ is incremented, see App.\ \ref{app:rules}. Using this rule, State
$\GraphTravBodyGetelemStartFinal$ is evaluated to State
$\GraphTravBodyGetelemResFinal$.
Note that the adaption of the merging heuristic for list traversal adds at most one
further loop iteration to the graph, so it does not change the termination behavior of the
graph construction. 
We reach position $(\cmpWhile, 0)$ again at state $\GraphTravCmpZeroFour$. This state is
an instance of $\GraphTravCmpZeroThree$ with
$\mu_{\GraphTravCmpZeroFour}(z_\code{mem}) = z_\code{mem}$,
$\mu_{\GraphTravCmpZeroFour}(z'_\code{next}) = z''_\code{next}$,
$\mu_{\GraphTravCmpZeroFour}(z''_\code{next}) = z'''_\code{next}$,
$\mu_{\GraphTravCmpZeroFour}(z_{\ell 1}) = z'_{\ell 1}$, and
$\mu_{\GraphTravCmpZeroFour}(z_{\ell}) = z'_{\ell}$. The generalization edge from
$\GraphTravCmpZeroFour$ to $\GraphTravCmpZeroThree$ finishes the graph construction.

Note that currently our approach does not support modifying or freeing list elements while
they are part of a list invariant. In such cases, using the rules from \cite{LLVM-JAR17}
we reach the state $\ERROR$. However, this can easily be improved by adapting the
corresponding rules for \code{store} and \code{free}.

\footnotesize

\begin{figure}
\centering
\begin{tikzpicture}[node distance = \ydist and \xdist]
\scriptsize
\def\widetwidth{6.2cm}
\def\smalltwidth{4.75cm}
\def\fulllinewidhth{11cm}
\def\edgenodedist{0.3cm}
\def\FirstIndentwidth{3cm}
\def\SecondIndentwidth{2cm}
\tikzset{invisible/.style={opacity=0}}
\tikzstyle{state}=[
                           inner sep=2pt,
                           font=\scriptsize,
                           draw]

\node[state,label={[xshift=\labelxshift]92:$\GraphInitCmpZeroGen$}] (B1) {};

\node[state, right=0.7cm of B1,label={[xshift=\labelxshift]92:$\GraphTravEntryZero$}] (B2) {};

\node[state, right=0.7cm of B2,label={[xshift=\labelxshift]92:$\GraphTravCmpZeroTwo$}] (B3) {};

\node[state, right=0.7cm of B3,label={[xshift=\labelxshift]92:$\GraphTravBodyGetelemStart$}] (B4) {};

\node[state, right=0.7cm of B4,label={[xshift=\labelxshift]92:$\GraphTravBodyGetelemRes$}] (B5) {};

\node[state, below=of B5, align=left, label={[xshift=\labelxshift]2:$\GraphTravCmpZeroThree$}] (3) 
{$(\cmpWhile, 0),\;
  \{
   \code{ptr} = z_{\code{ptr}}, \,
   \code{curr'} = z'_{\code{next}}, \,
   \code{next\_ptr} = z''_{\code{np}}, \,
   \code{next} = z''_{\code{next}}, \,
   ...\},$\\
 $\{
   \alloc{z_{\code{ptr}}}{z_\code{ptr}^\mathit{end}},
   ...\},
   \{
   z_{\code{ptr}} \pointsto[\code{list*}] z''_{\code{next}}, \,
   ...\},$\\
 $\{
   z_{\code{mem}} \pointstorec[\code{list}]{z_{\ell 1}}
     [(0: \code{i32}: z_{\code{nd}}..{z'_{\code{nd}}}),
      (8: \code{list*}: z_{\code{next}}..{z'_\code{next}})], \,$\\
  \;\,$z'_{\code{next}} \pointstorec[\code{list}]{z_{\ell}}
     [(0: \code{i32}: z''_\code{nd}..\hat{z}_\code{nd}),
      (8: \code{list*}: z''_{\code{next}}..0)]
   \},$\\
 $\{
   z''_{\code{np}} = z'_{\code{next}} + 8, \,
   z_{\ell} = z'_{\ell} - 1, \,
   ... \}$
};

\node[state, below=of 3, align=left, label={[xshift=\labelxshift]2:$\GraphTravBodyGetelemStartFinal$}] (4) 
{$(\bodyWhile, 1),\;
  \{
   \code{ptr} = z_{\code{ptr}}, \,
   \code{curr'} = z''_{\code{next}}, \,
   \code{next\_ptr} = z''_{\code{np}}, \,
   \code{next} = z''_{\code{next}}, \,
   ...\},$\\
 $\{
   \alloc{z_{\code{ptr}}}{z_\code{ptr}^\mathit{end}},
   ...\},
   \{
   z_{\code{ptr}} \pointsto[\code{list*}] z''_{\code{next}}, \,
   ...\},$\\
 $\{
   z_{\code{mem}} \pointstorec[\code{list}]{z_{\ell 1}}
     [(0: \code{i32}: z_{\code{nd}}..{z'_{\code{nd}}}),
      (8: \code{list*}: z_{\code{next}}..{z'_\code{next}})], \,$\\
  \;\,$z'_{\code{next}} \pointstorec[\code{list}]{z_{\ell}}
     [(0: \code{i32}: z''_\code{nd}..\hat{z}_\code{nd}),
      (8: \code{list*}: z''_{\code{next}}..0)]
   \},$\\
 $\{
   z''_{\code{np}} = z'_{\code{next}} + 8, \,
   z_{\ell} = z'_{\ell} - 1, \,
   ... \}$
};

\node[state, below=of 4, align=left, label={[xshift=\labelxshift]2:$\GraphTravBodyGetelemResFinal$}] (5) 
{$(\bodyWhile, 2),\;
  \{
   \code{ptr} = z_{\code{ptr}}, \,
   \code{curr'} = z''_{\code{next}}, \,
   \code{next\_ptr} = z'''_{\code{np}}, \,
   \code{next} = z''_{\code{next}}, \,
   ...\},$\\
 $\{
   \alloc{z_{\code{ptr}}}{z_\code{ptr}^\mathit{end}},
   ...\},
   \{
   z_{\code{ptr}} \pointsto[\code{list*}] z''_{\code{next}}, \,
   ...\},$\\
 $\{
   z_{\code{mem}} \pointstorec[\code{list}]{z'_{\ell 1}}
     [(0: \code{i32}: z_{\code{nd}}..{z''_\code{nd}}),
      (8: \code{list*}: z_{\code{next}}..{z''_\code{next}})], \,$\\
  \;\,$z''_{\code{next}} \pointstorec[\code{list}]{z'_{\ell}}
     [(0: \code{i32}: z'''_\code{nd}..\hat{z}_\code{nd}),
      (8: \code{list*}: z'''_{\code{next}}..0)]
   \},$\\
 $\{
   z'''_{\code{np}} = z''_{\code{next}} + 8, \,
   z'_{\ell 1} = z_{\ell 1} + 1, \,
   z'_{\ell} = z_{\ell} - 1, \,
   ... \}$
};

\node[state, below=of 5, align=left, label={[xshift=\labelxshift]2:$\GraphTravCmpZeroFour$}] (6) 
{$(\cmpWhile, 0),\;
  \{
   \code{ptr} = z_{\code{ptr}}, \,
   \code{curr'} = z''_{\code{next}}, \,
   \code{next\_ptr} = z'''_{\code{np}}, \,
   \code{next} = z'''_{\code{next}}, \,
   ...\},$\\
 $\{
   \alloc{z_{\code{ptr}}}{z_\code{ptr}^\mathit{end}},
   ...\},
   \{
   z_{\code{ptr}} \pointsto[\code{list*}] z'''_{\code{next}}, \,
   ...\},$\\
 $\{
   z_{\code{mem}} \pointstorec[\code{list}]{z'_{\ell 1}}
     [(0: \code{i32}: z_{\code{nd}}..{z''_\code{nd}}),
      (8: \code{list*}: z_{\code{next}}..{z''_\code{next}})], \,$\\
  \;\,$z''_{\code{next}} \pointstorec[\code{list}]{z'_{\ell}}
     [(0: \code{i32}: z'''_\code{nd}..\hat{z}_\code{nd}),
      (8: \code{list*}: z'''_{\code{next}}..0)]
   \},$\\
 $\{
   z'''_{\code{np}} = z''_{\code{next}} + 8, \,
   z'_{\ell 1} = z_{\ell 1} + 1, \,
   z'_{\ell} = z_{\ell} - 1, \,
   ... \}$
};

\draw[omit-edge] (B1)  --  (B2);
\draw[omit-edge] (B2)  --  (B3);
\draw[omit-edge] (B3)  --  (B4);
\draw[eval-edge] (B4)  --  (B5);
\draw[omit-edge] (B5)  --  (3);
\draw[omit-edge] (3)  --  (4);
\draw[eval-edge] (4)  --  (5);
\draw[omit-edge] (5)  --  (6);
\coordinate (gen-edge) at ($(6.west)+(-\edgenodedist,0)$);
\draw[gen-edge,rounded corners=5pt] (6.west) -- (gen-edge) -- (gen-edge |- 3.west) -- (3.west);

\end{tikzpicture}
\caption{Traversing a List Invariant (Part 3)}
\label{fig:Trav3}
\end{figure}
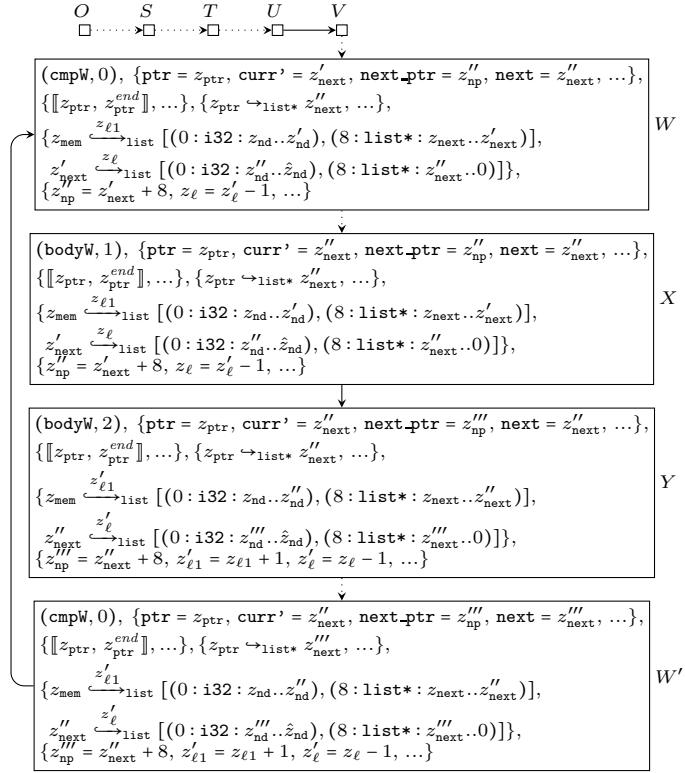
\normalsize

\clearpage

\setcounter{section}{3}
\section{Proofs}
\label{app:proofs}

This appendix contains all proofs for the results of the paper.

\section*{Soundness of Generalization}

\begin{theorem}[Soundness of Generalization]\label{thm-soundness-gen}
Let
$s = (\Pos, \; \LV, \; \AL, \; \PT, \; \LI,\linebreak
 \KB)$ be a state and let $\overline{s} = (\Pos, \; \overline{\LV}, \; \overline{\AL}, \; \overline{\PT}, \; \overline{\LI}, \; \overline{\KB})$ be a generalization of $s$ with instantiation $\mu$ according to the generalization rule. Then every concrete state $c = (\Pos, \; \LV^c, \; \AL^c, \; \PT^c, \; \varnothing, \; \KB^c)$ that is represented by $s$ is also represented by $\overline{s}$.
\end{theorem}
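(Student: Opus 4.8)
The plan is to unfold the definition of representation (\cref{def:representation}) and verify its four conditions for $\overline{s}$, reusing the data witnessing that $c$ is represented by $s$. Let $\sigma\colon \Vsym \to \N$ be the concrete instantiation for which $(\slassignment^c,\slmemory^c)$ is a model of $\sigma(\stateformula{s}_\SL)$ and for which conditions (3) and (4) of \cref{def:representation} hold for $s$. I would then take $\overline{\sigma} = \sigma \circ \mu$ as the instantiation for $\overline{s}$, with $\sigma$ extended by the identity on integer constants; since $\mu\colon \Vsym(\overline{s}) \to \Vsym(s) \cup \Z$ and all values are unsigned, $\overline{\sigma}$ is again a map into $\N$. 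The single reusable observation I would establish first is that, because $\stateformula{s}$ is a conjunct of $\stateformula{s}_\SL$, every quantifier-free $\psi$ with $\models \stateformula{s} \implies \psi$ satisfies $(\slassignment^c,\slmemory^c) \models \sigma(\psi)$. This discharges all the ``$\models \stateformula{s} \implies \ldots$'' premises occurring in conditions $(g3)$--$(g6)$ once evaluated under $\sigma$, and it turns each stated equality $v = \mu(x)$ into the numeric identity $\sigma(v) = \overline{\sigma}(x)$ in the model $(\slassignment^c,\slmemory^c)$.

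With this in hand, the easy conditions fall out. Condition (1) of \cref{def:representation}, $\domain(\LV^c) = \domain(\overline{\LV})$, follows from condition (1) for $s$ together with $(g2)$. For the local-variable conjunct of $\stateformula{\overline{s}}_\SL$, $(g2)$ gives $\LV(\code{var}) = \mu(\overline{\LV}(\code{var}))$, whence $\slassignment^c(\code{var}) = \sigma(\LV(\code{var})) = \overline{\sigma}(\overline{\LV}(\code{var}))$; the knowledge-base part of $\stateformula{\overline{s}}$ is handled by applying the reusable observation to $(g3)$. Condition (3) for $\overline{s}$ is obtained allocation by allocation: given $\alloc{x_1}{x_2} \in \overline{\AL}$, $(g4)$ yields $\alloc{v_1}{v_2} \in \AL$ with $v_i = \mu(x_i)$ under $\stateformula{s}$, condition (3) for $s$ supplies a matching allocation in $\AL^c$, and the numeric identities above show its bounds equal $\overline{\sigma}(x_1)$ and $\overline{\sigma}(x_2)$.

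The heart of the argument is the separating-conjunction conjunct of $\stateformula{\overline{s}}_\SL$ (and, in parallel, condition (4) of \cref{def:representation}). Here I would reuse the disjoint heap splitting of $\slmemory^c$ that witnesses $\sigma(\stateformula{s}_\SL)$ and reassemble it to match the memory structure of $\overline{s}$. The points-to entries of $\overline{\PT}$ transfer directly via $(g5)$. For each list invariant $\overline{\listinv} \in \overline{\LI}$, condition $(g6)$ gives two cases. In the first, a corresponding invariant $\listinv \in \LI$ exists, so its list-predicate sub-heap and the allocations guaranteed by condition (4) for $s$ are simply inherited after matching the $\sigma$-values of length and field contents. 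In the second, $\overline{\listinv}$ abstracts a concrete list of $s$ presented by allocations $\alloc{v^\mathit{start}_k}{v^\mathit{end}_k}$ and values $v_{k,i}$ satisfying conditions $(a)$--$(d)$ of \cref{sect:inferring_inv}; I would then show that the portion of $\slmemory^c$ covering these allocations models the list predicate $\listpred_{\sizeOf(\codevar{ty}),j}$ of \cref{def:semantics-of-sl} by induction on the list length, feeding $(b)$ for the per-element allocations, $(c)$ for the field contents, and $(d)$ for the recursive-field chaining, and finally transferring the allocations to $\AL^c$ through condition (3) for $s$ to establish condition (4) for $\overline{s}$. The remaining derived atoms of $\stateformula{\overline{s}}$ (positivity, allocation disjointness, points-to consistency, and the list-length constraints) then hold because they are semantic consequences of the heap model just assembled.

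I expect the main obstacle to be the disjointness bookkeeping required by the separating conjunction in the second case of $(g6)$: after folding a concrete sublist of $s$ into a list predicate of $\overline{s}$, I must certify that its sub-heap is disjoint from the heaps of $\overline{\AL}$, $\overline{\PT}$, and every other entry of $\overline{\LI}$, so that $\slmemory^c$ still splits exactly as $\stateformula{\overline{s}}_\SL$ prescribes (in particular respecting the required separation between $\overline{\PT}$ and $\overline{\LI}$). This is precisely what the last clause of the second case of $(g6)$ provides, namely that every retained address $\mu(z_1)$ of $\overline{\PT}$ lies strictly outside each interval $[v^\mathit{start}_k, v^\mathit{end}_k]$, combined with the disjointness already present in $s$'s witnessing heap split. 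A secondary technical point I would keep explicit is that $\overline{\sigma}$ indeed maps into $\N$ and that the inductive unfolding of $\listpred$ in the second case is compatible with the concrete $\sigma$-values chosen for $v_\ell$, $v_i$, and $\hat{v}_i$.
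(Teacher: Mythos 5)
Your proposal is correct and follows essentially the same route as the paper's proof: you use the composed instantiation $\sigma \circ \mu$, discharge conditions (1) and (3) of the representation definition via $(g2)$ and $(g4)$, split on the two cases of $(g6)$ for each list invariant, and in the second case prove by induction on the list length that the concrete sublist (via conditions $(a)$--$(d)$) models the list predicate, with disjointness of the folded sub-heap from $\overline{\PT}$ secured by the last clause of $(g6)$ and from the other allocations and invariants by the separating conjunction in $\stateformula{s}_\SL$. The only cosmetic difference is that the paper delegates the $\overline{\AL}$/$\overline{\PT}$ subformula wholesale to the soundness result of the earlier generalization rule in \cite{LLVM-JAR17}, where you re-verify it directly from $(g4)$/$(g5)$.
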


\begin{proof}
We prove that $c$ is represented by $\overline{s}$ as defined in Def. \ref{def:representation}.
Since $c$ is represented by $s$, we have $\domain(\LV^c) = \domain(\LV)$ and with $(g2)$
we have $\domain(\LV) = \domain(\overline{\LV})$, so $\domain(\LV^c)
= \domain(\overline{\LV})$.

Let $\sigma : \Vsym \to \Z$ such that $(\slassignment^c,\slmemory^c)$ is a model of
$\sigma(\stateformula{s}_\SL)$. We show that $(\slassignment^c,\slmemory^c)$ is also a
model of $\sigma(\mu(\stateformula{\overline{s}}_\SL))$
(i.e., for $\overline{\sigma} = \sigma \circ \mu$,  $(\slassignment^c,\slmemory^c)$ is a
model of $\overline{\sigma}(\stateformula{\overline{s}}_\SL)$.
For the subformula
$\sigma(\mu((\bigast\nolimits_{\varphi \in \overline{\AL}} \, \stateformula{\varphi}_\SL) \,\wedge\,
(\bigwedge\nolimits_{\varphi \in \overline{\PT}} \, \stateformula{\varphi}_\SL)))$, this
follows from soundness of the generalization rule of our earlier approach
in \cite{LLVM-JAR17}.

To prove that $(\slassignment^c,\slmemory^c)$ is a model of
    $\sigma(\mu(\stateformula{\overline{s}}))$ for the newly introduced conjuncts from
    $\stateformula{\overline{s}}$, assume that there is an $\overline{\listinv}:=
    (x_{\codevar{ad}} \pointstorec[\codevar{ty}]{x_\ell} [(\mathit{off}_i: \codevar{ty}_i:
    {x_i..\hat{x}_i})]_{i=1}^n) \in \overline{\LI}$. With $(g6)$, there
    either exists a corresponding invariant in $\LI$ or a corresponding concrete list in
    $s$. In the first case, there is an $\listinv:= (v_{\codevar{ad}} \pointstorec[\codevar{ty}]{v_\ell} [(\mathit{off}_i: \codevar{ty}_i: {v_i..\hat{v}_i})]_{i=1}^n) \in \LI$ with
    $\models \stateformula{s} \implies v_\codevar{ad} = \mu(x_\codevar{ad}) \wedge v_\ell = \mu(x_\ell)$ and
    $\models \stateformula{s} \implies v_i = \mu(x_i) \wedge \hat{v}_i = \mu(\hat{x}_i)$ for all $1 \leq i \leq n$.
This means that $\overline\listinv$ only renames the variables of $\listinv$ but the assignment of the variables has to be equal. Since $(\slassignment^c,\slmemory^c)$ is a model of $\sigma(\stateformula{s})$ and $\listinv$ yields the same conjuncts in $\stateformula{s}$ where corresponding variables are replaced, $(\slassignment^c,\slmemory^c)$ is also a model of the conjuncts added to $\sigma(\mu(\stateformula{\overline{s}}))$ for the list invariant $\overline{\listinv}$.
In the second case of $(g6)$ we have $\ell \geq 1$ and
    $\models \stateformula{s} \implies v^{\mathit{start}}_{1}
    = \mu(x_\codevar{ad}) \wedge \ell = \mu(x_\ell)$, where $v^{\mathit{start}}_{1}$ is the start
    address of an allocation from $\AL$ and thus $v^{\mathit{start}}_{1} \geq 1$ is a conjunct of
    $\stateformula{s}$. Since $(\slassignment^c,\slmemory^c)$ is a model of
    $\sigma(\stateformula{s})$, it is also a model of $\sigma(\mu(x_\ell \geq 1))$ and
    $\sigma(\mu(x_{\codevar{ad}} \geq 1))$. If
    $\models \stateformula{\overline{s}} \implies x_\ell = 1$, we have $\ell = 1$, so
    $v_{{\ell},i} = v_{1,i}$ and hence $\sigma(\mu(\bigwedge_{i=1}^n x_i = \hat{x}_i))
    = \sigma(\bigwedge_{i=1}^n v_{1,i} = v_{\ell,i})$ holds. If
    $\models \stateformula{\overline{s}} \implies x_\ell \geq 2$, by condition $(d)$ we
    have $\models \stateformula{s} \implies v_{1,j} = v^{\mathit{start}}_{2}$, where $v^{\mathit{start}}_{2}$ is the
    start address of an allocation from $\AL$ and thus $v^{\mathit{start}}_{2} \geq 1$ is a conjunct of
    $\stateformula{s}$. Since $(\slassignment^c,\slmemory^c)$ is a model of
    $\sigma(\stateformula{s})$ and $\models \stateformula{s} \implies v_{1j}
    = \mu(x_j)$, $(\slassignment^c,\slmemory^c)$ is also a model of
    $\sigma(\mu(x_j \geq 1))$. If there exists an $1 \leq i \leq n$ such that
    $\models \stateformula{\overline{s}} \implies x_k \neq \hat{x}_i$, in a similar way we can
    infer that 
    $\ell \geq 2$ and hence, $\sigma(\mu(x_\ell \geq 2))$ holds.

To show that $(\slassignment^c,\slmemory^c)$ is a model of
    $\sigma(\mu(\bigast\nolimits_{\varphi \in \overline{\LI}} \, \stateformula{\varphi}_\SL))$,
    we again assume $\overline{\listinv} \in \overline{\LI}$. Then there either exists a
    corresponding invariant in $\LI$ or a corresponding concrete list in $s$. In the first case, $\overline\listinv$ only renames the variables
    of $\listinv$ but the assignment of the variables has to be equal. Therefore, with
    $(\slassignment^c,\slmemory^c) \models\sigma(\stateformula{\listinv}_\SL)$ it follows
    directly that $(\slassignment^c,\slmemory^c)$ is also a model of
    $\sigma(\mu(\stateformula{\overline{\listinv}}_\SL))$.
    We now consider the second case of $(g6)$, where
    $v_{1}^\mathit{start}$ points to a concrete list of type \codevar{ty} and length $\ell$ with
    values $v_{k,i}$ (i.e., $(\star)$ the conditions $(a)-(d)$ hold),
    $\models \stateformula{s} \implies v^{\mathit{start}}_{1} = \mu(x_\codevar{ad}) \wedge \ell
    = \mu(x_\ell)$, and $\models \stateformula{s} \implies v_{1,i} = \mu(x_i) \wedge
    v_{\ell,i} = \mu(\hat{x}_i)$ for all $1 \leq i \leq n$. 
    We have
    \begin{align*}
    	&\sigma(\mu(\stateformula{\overline{\listinv}}_\SL))\\
    	=\;&\listpred_{\sizeOf(\code{ty}),j}(\sigma(\mu(x_\ell)), \sigma(\mu(x_\codevar{ad})), [(\mathit{off}_i: \sizeOf(\codevar{ty}_i): {\sigma(\mu(x_i))..\sigma(\mu(\hat{x}_i))})]_{i=1}^n)\\
    	=\;&\listpred_{\sizeOf(\code{ty}),j}(\ell, \sigma(v^{\mathit{start}}_{1}), [(\mathit{off}_i: \sizeOf(\codevar{ty}_i): {\sigma(v_{1,i})..\sigma(v_{\ell,i})})]_{i=1}^n)
    \end{align*}
    and show by induction over the length $\ell$ of the list that if
    $(\slassignment^c,\slmemory^c)$ is a model of $\sigma(\stateformula{s}_\SL)$ and
    $v^{\mathit{start}}_{1}$ points to a concrete list of length $\ell$ and
    type \codevar{ty} such that $(\star)$ holds, then
    $(\slassignment^c,\slmemory^c)$ is a model of
    $\listpred_{\sizeOf(\code{ty}),j}(\ell, \sigma(v^{\mathit{start}}_{1}),
    [(\mathit{off}_i: \sizeOf(\codevar{ty}_i):
    {\sigma(v_{1,i})..\sigma(v_{\ell,i})})]_{i=1}^n)$.
    
    If $\ell = 1$, the list consists of a single allocation $\alloc{v^{\mathit{start}}_{1}}{v^{\mathit{end}}_{1}}$ of the size of \codevar{ty}.
    Then $v^{\mathit{start}}_{1} \geq 1$ and $\stateformula{\alloc{v^{\mathit{start}}_{1}}{v^{\mathit{end}}_{1}}}_\SL =(\forall x. \exists y. \; (v^{\mathit{start}}_{1} \leq x \leq v^{\mathit{end}}_{1}) \implies (x \pointstosl y))$ are already conjuncts of $\stateformula{s}_\SL$.
       Trivially, $\sigma(v_{1,i}) = \sigma(v_{\ell,i})$ for all $1 \leq i \leq n$.
    With $(c)$ there exists $(v^{\mathit{start}}_{1,i} \pointsto[\codevar{ty}_i]
    v_{1,i}) \in \PT$ with $\models \, \stateformula{s} \implies v^{\mathit{start}}_{1,i}
    = v^{\mathit{start}}_{1} + \mathit{off}_i$ for all $1 \leq i \leq n$. From
    $\slmemory^c \models \sigma(\stateformula{v^{\mathit{start}}_{1,i}
 \pointsto[\codevar{ty}_i] v_{1,i}}_\SL)$ it follows that
    $\slmemory^c \models \sigma(\stateformula{v^{\mathit{start}}_{1}
    + \mathit{off}_i \pointsto[\codevar{ty}_i] v_{1,i}}_\SL)$.

    Now assume that the hypothesis holds for a fixed length $\ell' \geq 1$. We show that it
    holds for $\ell = \ell'+1$. Now, the list consists of $\ell'+1$ allocations. Again
    we have conjuncts $v^{\mathit{start}}_1 \geq 1$ and
    $\stateformula{\alloc{v^{\mathit{start}}_1}{v^{\mathit{end}}_1}}_\SL 
    =(\forall x. \exists y. \; (v^{\mathit{start}}_1 \leq x \leq v^{\mathit{end}}_1) \implies (x \pointstosl y))$
    in $\stateformula{s}_\SL$.
    Since $\sigma(v_{1,j}) = \sigma(v^{\mathit{start}}_{2})$ points to a concrete list of length $\ell'$ and type \codevar{ty} such that $(\star)$ holds, we can use the induction hypothesis to conclude that
    \[\slmemory^c \models\; \listpred_{\sizeOf(\code{ty}),j}(\ell', \sigma(v_{1,j}),
    [(\mathit{off}_i: \sizeOf(\codevar{ty}_i):
    {m_i..\sigma(v_{\ell,i})})]_{i=1}^n),\]
      where $\slmemory^c \models \sigma(v_{1j}) + \mathit{off}_i \pointstosl_{\sizeOf(\codevar{ty}_i)} m_i$ for all $1 \leq i \leq n$.
    Since $\alloc{v^{\mathit{start}}_{1}}{v^{\mathit{end}}_{1}}$ is different from all
    allocations $\alloc{v^{\mathit{start}}_{k}}{v^{\mathit{end}}_{k}}$ for $2 \leq k \leq \ell$ and all these allocations are in $\AL$, with $\slmemory^c \models\, (\bigast\nolimits_{\varphi \in \AL} \, \stateformula{\varphi}_\SL)$ there exist $\slmemory^c_1 \bot \slmemory^c_2$ such that $\slmemory^c = \slmemory^c_1 \uplus \slmemory^c_2$ where $\slmemory^c_1 \models (\forall x. \exists y. \; (v^{\mathit{start}}_1 \leq x \leq v^{\mathit{end}}_1) \implies (x \pointstosl y))$ and $\slmemory^c_2 \models \listpred_{\sizeOf(\code{ty}),j}(\ell', \sigma(v_{1,j}), [(\mathit{off}_i: \sizeOf(\codevar{ty}_i): {m_i..\sigma(v_{\ell,i})})]_{i=1}^n)$. Therefore,
    \begin{align*}
    	\slmemory^c \models\; &(\forall x. \exists y. \; (v^{\mathit{start}}_1 \leq x \leq v^{\mathit{end}}_1) \implies (x \pointstosl y)) \;*\,\\ 
    	&\listpred_{\sizeOf(\code{ty}),j}(\ell', \sigma(v_{1,j}), [(\mathit{off}_i: \sizeOf(\codevar{ty}_i): {m_i..\sigma(v_{\ell,i})})]_{i=1}^n)
    \end{align*}
    for some $m_1, \ldots, m_n \in \Z$.
    Furthermore, with $(c)$ there exists $(v^{\mathit{start}}_{1,i} \pointsto[\codevar{ty}_i] v_{1,i}) \in \PT$, so we have
    \begin{align*}
    	&\;\slmemory^c \models\; \sigma(\stateformula{v^{\mathit{start}}_{1,i} \pointsto[\codevar{ty}_i] v_{1,i}}_\SL)\\
    	\Leftrightarrow &\;\slmemory^c \models\; \sigma(v^{\mathit{start}}_{1,i}) \pointsto[\sizeOf(\codevar{ty}_i)] \sigma(v_{1,i})\\
    	\overset{(c)}{\Leftrightarrow} &\;\slmemory^c \models\; \sigma(v^{\mathit{start}}_{1}) + \mathit{off}_i \pointsto[\sizeOf(\codevar{ty}_i)] \sigma(v_{1,i})
    \end{align*}
    for all $1 \leq i \leq n$.
    Disjointness between the memory domain of $\overline{\listinv}$ and the
    memory domains of all other list invariants and allocation subformulas is preserved
    from $\stateformula{s}_\SL$.
    Similarly, in the first case of $(g6)$, disjointness between the memory domain of
    $\overline{\listinv}$ and the memory domain of $\overline{PT}$ is preserved from
    $\stateformula{s}_\SL$. In the second case of $(g6)$, this disjointness
    follows from the last condition.
    This concludes our induction and the proof that $(\slassignment^c,\slmemory^c)$ is a
    model of
\[\left(
\left((\bigast\nolimits_{\varphi \in \AL} \; \; \stateformula{\varphi}_\SL)
\; \wedge \;
(\bigwedge\nolimits_{\varphi \in \PT} \; \; \stateformula{\varphi}_\SL )\right)
\; \ast \;
(\bigast\nolimits_{\varphi \in \LI} \; \; \stateformula{\varphi}_\SL)\right).\]

Now, we prove the third condition of Def. \ref{def:representation}. By $(g4)$ there exists $\alloc{v_1}{v_2} \in \AL$ for all $\alloc{x_1}{x_2} \in \overline{\AL}$ such that $\models \stateformula{s} \implies  v_1 = \mu(x_1) \wedge v_2 = \mu(x_2)$. Since $c$ is represented by $s$, there exists $\alloc{w_1}{w_2} \in \AL^c$ for all $\alloc{v_1}{v_2} \in \AL$ such that $\models \stateformula{c} \Rightarrow w_1 = \sigma(v_1) \wedge w_2 = \sigma(v_2)$. Hence there exists $\alloc{w_1}{w_2} \in \AL^c$ for all $\alloc{x_1}{x_2} \in \overline{\AL}$ such that $\models \stateformula{c} \Rightarrow w_1 = \sigma(\mu(x_1)) \wedge w_2 = \sigma(\mu(x_2))$.

Finally, we show that the fourth condition of Def. \ref{def:representation} holds.
Again, let $\overline{\listinv}:= (x_{\codevar{ad}} \pointstorec[\codevar{ty}]{x_\ell}
[(\mathit{off}_i: \codevar{ty}_i:
{x_i..\hat{x}_i})]_{i=1}^n) \in \overline{\LI}$. With $(g6)$, there either
exists a corresponding invariant in $\LI$ (which is equal except for variable renamings) or a corresponding concrete list in $s$. The first case is
trivial. In the second case, the condition is again implied by $(b)$, $(c)$, and
$(d)$.
\qed
\end{proof}


\section*{Soundness of List Extension}

Next we prove the soundness of the new list extension rule. To this end, we have to show
that if a concrete state is represented by an
abstract state with a corresponding list invariant
and the concrete state is evaluated via a 
\code{store} instruction, then this evaluation can also be represented by applying the
list extension rule on the abstract state.

In \cite{LLVM-JAR17},
we proved the soundness of our
 symbolic execution rules with respect to the
formal \LLVM{} semantics defined by the \tool{Vellvm} project \cite{Vellvm}. 
Hence, the evaluation of the concrete state via the \code{store} instruction can be
modeled by the general \codevar{store} rule for concrete
states from \cite{LLVM-JAR17}, which we recapitulate below, 
adapted to our definition of states.
The value  $t$
 of type \code{ty} has to be converted to
 \code{i8}
values for the byte-wise memory representation of concrete states.
Since we interpret
all integers stored in the memory as unsigned (as opposed to signed) integers, this
conversion can now be simplified.

\vspace*{.3cm}
\noindent
\fbox{
\begin{minipage}{11.8cm}
\label{rule:generalization}
\small
\mbox{\small \textbf{\hspace*{-.15cm}\codevar{store} to allocated memory $(p:$
``\code{store ty $t$, ty* pa}'', $t \in \Ids \cup \Z$, \code{pa} $\in \Ids$)}}\\
\vspace*{1mm}\\
\centerline{$\frac{\parbox{4.5cm}{\centerline{
$s = (\Pos, \; \LV, \; \AL, \; \PT, \; \emptyset, \; \KB)$}\vspace*{.1cm}}}
{\parbox{4.7cm}{\vspace*{.1cm} \centerline{
$s' = (\Pos^+, \; \LV, \; \AL, \; \PT', \; \emptyset, \; \KB')$}}}\;\;\;\;$
\mbox{if}} \vspace*{-.05cm}
{\small
\begin{itemize}
\item[$\bullet$] there is $\alloc{w_1}{w_2} \in \AL$ with $\models \, \stateformula{s} \implies (w_1 \leq \LV(\codevar{pa}) \wedge \LV(\codevar{pa})+\sizeOf(\code{ty})-1 \leq w_2)$
\item[$\bullet$] there are $(y_0 \pointsto[\codevar{i8}] z_0), \ldots, (y_{\sizeOf(\codevar{ty})-1} \pointsto[\codevar{i8}] z_{\sizeOf(\codevar{ty})-1}) \in \PT$ such that $\models \, \stateformula{s} \implies \LV(\codevar{pa}) = y_0 \wedge \bigwedge_{1 \leq i \leq \sizeOf(\codevar{ty})-1} y_i = y_0 + i$
\item[$\bullet$] $\PT' = (\PT\backslash\{(y_0 \pointsto[\codevar{i8}] z_0),\ldots,(y_{\sizeOf(\codevar{ty})-1} \pointsto[\codevar{i8}] z_{\sizeOf(\codevar{ty})-1})\}) \,\cup\, \{(y_0 \pointsto[\codevar{i8}] v_0),\ldots,(y_{\sizeOf(\codevar{ty})-1} \pointsto[\codevar{i8}] v_{\sizeOf(\codevar{ty})-1})\}$
\item[$\bullet$] $\KB' = \KB \,\cup\, \{v_i = u_i \,\mid\, 0 \leq i \leq \sizeOf(\codevar{ty})-1\}$
\item[$\bullet$] for $0 \leq i \leq \sizeOf(\codevar{ty})-1$ let $u_i = (t \operatorname{div} 2^{8 \cdot i}) \mod 2^8$
\item[$\bullet$] $v_0, \ldots, v_{\sizeOf(\codevar{ty})-1} \in \Vsym$ are fresh
\end{itemize}}
\end{minipage}}
\vspace*{.2cm}

\begin{theorem}[Soundness of List Extension]\label{thm-soundness-ext}
Let
$s$ be a state and let $c$ be a concrete state that is represented by $s$. Moreover, let
$s'$ be the state resulting from $s$
by applying the list extension rule and let $c'$ be the concrete state that results from
$c$ by evaluating the general \codevar{store} rule for concrete states. Then $c'$ is represented by $s'$.
\end{theorem}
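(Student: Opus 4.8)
The plan is to verify the four conditions of Def.~\ref{def:representation} for $c'$ and $s'$, reusing throughout that $c$ is represented by $s$. First I would dispose of the domain condition: neither the list extension rule nor the concrete \code{store} rule modifies the $\LV$-component, so $\domain(\LV^{c'}) = \domain(\LV^c) = \domain(\LV^s) = \domain(\LV^{s'})$. Next I would fix the instantiation. Let $\sigma$ witness that $c$ is represented by $s$, so $(\slassignment^c, \slmemory^c) \models \sigma(\stateformula{s}_\SL)$. I would extend it to $\sigma'$ on the two fresh variables of the rule by setting $\sigma'(v_m) = \sigma(\LV(t))$ and $\sigma'(v'_\ell) = \sigma(v_\ell)+1$, with $\sigma' = \sigma$ elsewhere; both new values are natural numbers since all integers are unsigned, so $\sigma'$ is concrete.

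The heart of the argument is condition~2, that $(\slassignment^{c'}, \slmemory^{c'}) \models \sigma'(\stateformula{s'}_\SL)$. Here $\slmemory^{c'}$ agrees with $\slmemory^c$ except on the $\sizeOf(\code{ty})$ bytes starting at $\LV(\code{pa}) = v^\mathit{start}+\mathit{off}_m$, which lie inside $[v^\mathit{start}, v^\mathit{end}]$; there the stored bytes are the little-endian decomposition of $\LV(t)$. I would split $\stateformula{s'}_\SL$ along its outermost $\ast$. For the points-to part $\PT'$ and the surviving allocations $\AL \setminus \{\alloc{v^\mathit{start}}{v^\mathit{end}}\}$, the defining side condition of $\PT'$ keeps exactly those entries whose footprint is disjoint from $[v^\mathit{start}, v^\mathit{end}]$, so their truth is inherited verbatim from $(\slassignment^c,\slmemory^c) \models \sigma(\stateformula{s}_\SL)$. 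The new work is the conjunct $\stateformula{\listinv'}_\SL$, i.e.\ the predicate $\listpred_{\sizeOf(\codevar{lty}),j}(\sigma(v_\ell)+1, \sigma(v^\mathit{start}), \ldots)$. I would obtain it by prepending the freshly completed element as the head of the existing list: in $\stateformula{s}_\SL$ the allocation $\alloc{v^\mathit{start}}{v^\mathit{end}}$ and the invariant $\listinv$ are separated by $\ast$, so their footprints in $\slmemory^c$ are disjoint; the element at $v^\mathit{start}$ has field values $v_i$ for $i \neq m$ (present in $\PT$ by the fifth premise, hence correctly stored in $\slmemory^c$ and untouched by the store) and value $\LV(t)$ for $i = m$ after the store; and in either case $m = j$ or $m \neq j$ the fourth premise guarantees that the recursive field $j$ holds $v_\mathit{ad}$, the head of the old list. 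Since the new length is always $\geq 2$, this is exactly the recursive case~(2) of Def.~\ref{def:semantics-of-sl} applied to the model of $\listpred_{\sizeOf(\codevar{lty}),j}(\sigma(v_\ell), \sigma(v_\mathit{ad}), \ldots)$ furnished by $c$ being represented by $s$.

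Finally I would check conditions~3 and~4. Since the concrete \code{store} rule leaves $\AL^c$ unchanged, $\AL^{c'} = \AL^c$; every allocation of $\AL^{s'} \subseteq \AL^s$ therefore retains its witness in $c'$, giving condition~3. For condition~4, the invariants in $\LI^{s'} \setminus \{\listinv'\}$ inherit their witnessing allocations from $s$, while for $\listinv'$ the required $\sigma'(v'_\ell) = \sigma(v_\ell)+1$ allocations are the witness $\alloc{v^\mathit{start}}{v^\mathit{end}}$ of the new head together with the $\sigma(v_\ell)$ allocations witnessing $\listinv$ in $c$, with the \code{next}-chaining closed at $v_\mathit{ad}$ exactly as above. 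The main obstacle is the separation-logic step for $\listinv'$: one must track disjointness of the new element from the rest of the list through the $\ast$-structure of $\stateformula{s}_\SL$, and reconcile the byte-wise (little-endian, unsigned) memory update performed by the concrete \code{store} rule with the typed field value $v_m = \LV(t)$ demanded by the list predicate, using the $\hookrightarrow_n$ clause of Def.~\ref{def:semantics-of-sl}.
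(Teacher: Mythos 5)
Your proposal is correct and takes essentially the same route as the paper's proof: the same instantiation $\sigma' = \sigma[v_m := \sigma(\LV(t)),\, v'_\ell := \sigma(v_\ell)+1]$, the same inheritance argument for all conjuncts whose footprint is untouched by the store, and the same key step of establishing $\stateformula{\listinv'}_\SL$ via the recursive case of the list-predicate semantics by prepending the $\ast$-disjoint allocation $\alloc{v^\mathit{start}}{v^\mathit{end}}$ as the new head, including the byte-wise versus typed reconciliation for field $m$ and the $m = j$ / $m \neq j$ case split from the rule's fourth premise. Your treatment of conditions 3 and 4 of Def.~\ref{def:representation} (reusing the old witnesses and adding $\alloc{v^\mathit{start}}{v^\mathit{end}}$ with the \code{next}-chaining at $v_\mathit{ad}$) likewise matches the paper's.
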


\begin{proof}
Neither $s$ nor $c$ is $\ERROR$ since $\ERROR$ has no successor states, so we have $s = (\Pos, \; \LV^s, \; \AL^s, \; \PT^s, \; \LI^s, \; \KB^s)$ and $c = (\Pos, \; \LV^c, \; \AL^c, \; \PT^c, \; \emptyset, \; \KB^c)$.
By construction of the rules we get
\[ \begin{array}{rcl}
s' &=& (\Pos^+, \; \LV^{s'}, \; \AL^{s'}, \; \PT^{s'}, \; \LI^{s'}, \; \KB^{s'})\\
c' &=& (\Pos^+, \; \LV^{c'}, \; \AL^{c'}, \; \PT^{c'}, \; \emptyset, \; \KB^{c'})
\end{array}\] 
where\\
\\
\begin{tabular}{lll}
	$\LV^{s'}$&\!\!$=$&$\LV^s$\\
	$\AL^{s'}$&\!\!$=$&$\AL^s\backslash\{\alloc{v^\mathit{start}}{v^\mathit{end}}\}$\\
	$\PT^{s'}$&\!\!$=$&$\{(x_1 \hookrightarrow_{\codevar{sy}} x_2) \in \PT^s \mid\, \models \stateformula{s} \implies$\\
	&&$(v^\mathit{end} < x_1) \, \lor \,
        (x_1+\sizeOf(\codevar{sy})-1 < v^\mathit{start})$\\
	$\LI^{s'}$&\!\!$=$&$(\LI^s\backslash\{\listinv\}) \cup \{\listinv'\}$\\
	$\KB^{s'}$&\!\!$=$&$\KB^s \cup \{v_k = \LV^s(t),\, v'_\ell =
	v_\ell+1\}$\\        
        \\
  $\LV^{c'}$&\!\!$=$&$\LV^c$\\
$\AL^{c'}$&\!\!$=$&$\AL^c$\\
$\PT^{c'}$&\!\!$=$&$(\PT^c\backslash\{(y_0 \pointsto[\codevar{i8}] z_0),\ldots,$\\
&&$(y_{\sizeOf(\codevar{ty})-1} \pointsto[\codevar{i8}] z_{\sizeOf(\codevar{ty})-1})\})$\\
&&$\cup\, \{(y_0 \pointsto[\codevar{i8}] v_0),\ldots,$\\
&&$(y_{\sizeOf(\codevar{ty})-1} \pointsto[\codevar{i8}] v_{\sizeOf(\codevar{ty})-1})\}$\\
$\KB^{c'}$&\!\!$=$&$\KB^c \cup \{v_i = u_i \,\mid\,
	0 \leq i \leq \sizeOf(\codevar{ty})-1\}$         
\end{tabular}

\vspace{0.25cm}
\noindent
Here, $\listinv = (v_\mathit{ad} \pointstorec[\codevar{lty}]{v_\ell} [(\mathit{off}_i: \codevar{lty}_i: {w_i..\hat{w}_i})]_{i=1}^n)$ and $\listinv' = (v^\mathit{start} \pointstorec[\codevar{lty}]{v'_\ell}[(\mathit{off}_i: \codevar{lty}_i:{v_i..\hat{w}_i})]_{i=1}^n)$.
We prove that $c'$ is represented by $s'$ as defined in Def. \ref{def:representation}.
Since $c$ is represented by $s$, we have $\domain(\LV^c) = \domain(\LV^s)$ and thus $\domain(\LV^{c'}) = \domain(\LV^{s'})$.

Let $\sigma : \Vsym \to \Z$ such that $(\slassignment^c,\slmemory^c)$ is a model of $\sigma(\stateformula{s}_\SL)$. We show that there exists $\sigma' : \Vsym \to \Z$ such that $(\slassignment^{c'},\slmemory^{c'})$ is a model of $\sigma'(\stateformula{s'}_\SL)$. We define $\sigma' = \sigma[v_k:=\sigma(\LV^s(t)), v'_\ell:=\sigma(v_\ell)+1]$. By construction of $\LV^{c'}$ and $\PT^{c'}$ we have $\slassignment^{c'} = \slassignment^c$ and $\slmemory^{c'}(n) = \slmemory^c(n)$ for all $n \in \Nplus\backslash\{n_0,\ldots,n_{\sizeOf(\codevar{ty})-1}\}$, where $n_i$ is the number with $\models \stateformula{c} \implies y_i = n_i$. Moreover, let $\slmemory^{c'}(n_i) = u_i$ for all $i \in \{0,\ldots,\sizeOf(\codevar{ty})-1\}$.
The formula $\sigma'(\stateformula{s'}_\SL)$ differs from $\sigma(\stateformula{s}_\SL)$ as follows:
\begin{itemize}
	\item[(1)] The conjuncts for the allocation
	$\alloc{v^\mathit{start}}{v^\mathit{end}}$ are missing to prevent further reasoning about parts of the heap that are included in the list invariant.
	\item[(2)] Some points-to conjuncts are missing such that for all conjuncts $x_1 \hookrightarrow x_2$ that are kept we know $\sigma(x_1) \not\in \{n_0,\ldots,n_{\sizeOf(\codevar{ty})-1}\}$.
	\item[(3)] Instead of the subformula
\[\sigma(\stateformula{\listinv}_\SL)
	= \sigma(\listpred_{\sizeOf(\code{lty}),j}(v_\ell, v_\mathit{ad},
	[(\mathit{off}_i: \sizeOf(\codevar{lty}_i): {w_i..\hat{w}_i})]_{i=1}^n)),\] we
	now have \[\sigma'(\stateformula{\listinv'}_\SL) = \sigma'(\listpred_{\sizeOf(\code{lty}),j}(v'_\ell, v^\mathit{start}, [(\mathit{off}_i: \sizeOf(\codevar{lty}_i): {v_i..\hat{w}_i})]_{i=1}^n)).\]
	\item[(4)] The conjuncts that are added for $l$ to $\stateformula{s}$ are
	missing. As new conjuncts for $l'$, $\sigma'(v'_\ell) \geq 1$,
	$\sigma'(v^\mathit{start}) \geq 1$, $\sigma'(v_j) \geq 1$, and
	$\sigma'(v'_\ell) \geq 2$ have been added.
	\item[(5)] The conjuncts $\sigma'(v_k) = \sigma'(\LV^s(t))$ and $\sigma'(v'_\ell)
	= \sigma'(v_\ell)+1$ have been added, i.e., $\sigma'(v_k) = \sigma(\LV^s(t))$ and $\sigma'(v'_\ell) = \sigma(v_\ell)+1$.
\end{itemize}
Since $\slmemory^{c'}$ behaves like $\slmemory^c$ on all addresses except $\{n_0,\ldots,n_{\sizeOf(\codevar{ty})-1}\}$, the conjuncts that were kept from $\sigma(\stateformula{s}_\SL)$ are satisfied by $(\slassignment^{c'},\slmemory^{c'})$. Moreover, we can ignore subformulas that were removed since this only makes the formula weaker.
\begin{itemize}
\item[(3)]
Since $\sigma'(v'_\ell) = \sigma(v_\ell) + 1 \geq 2$, $\slmemory^{c'} \models \sigma'(\stateformula{\listinv'}_\SL)$ iff
\begin{itemize}
	\item $\sigma'(v^\mathit{start}) \geq 1$, which is true as with $\alloc{v^\mathit{start}}{v^\mathit{end}} \in \AL^s$, $\sigma(v^\mathit{start}) \geq 1$ is a conjunct of $\sigma(\stateformula{s})$ and we have $\sigma'(v^\mathit{start}) = \sigma(v^\mathit{start})$,
	\item $\slmemory^{c'} \models (\forall x. \exists y. \; (v^\mathit{start} \leq
	x \leq v^\mathit{start}+\sizeOf(\code{lty})-1) \implies (x \pointstosl y)) \,*\, \listpred_{\sizeOf(\code{lty}),j}(\sigma'(v'_\ell)-1, \sigma'(v_j), [(\mathit{off}_i: \sizeOf(\codevar{lty_i}): {\sigma(w_i)..\sigma'(\hat{w}_i)})]_{i=1}^n)$.
    We have
	    \begin{align*}
		    &\listpred_{\sizeOf(\code{lty}),j}(\sigma'(v'_\ell)-1, \sigma'(v_j), [(\mathit{off}_i: \sizeOf(\codevar{lty_i}): {\sigma(w_i)..\sigma'(\hat{w}_i)})]_{i=1}^n)\\
		    =\;&\listpred_{\sizeOf(\code{lty}),j}(\sigma(v_\ell), \sigma(v_\mathit{ad}), [(\mathit{off}_i: \sizeOf(\codevar{lty_i}): {\sigma(w_i)..\sigma(\hat{w}_i)})]_{i=1}^n)
	    \end{align*}
	    since $\sigma'(v'_\ell) = \sigma(v_\ell) + 1$ and $\sigma(v_\mathit{ad})
    = \sigma'(v_j)$.
    This is equal
    to the removed list predicate $\sigma(\stateformula{\listinv}_\SL)$, for which $\slmemory^{c}$ (and hence $\slmemory^{c'}$) is a model. Furthermore, $(\forall x. \exists y. \; (v^\mathit{start} \leq
	x \leq v^\mathit{start}+\sizeOf(\code{lty})-1) \implies (x \pointstosl y))$ holds since $\sigma(\stateformula{\alloc{v^\mathit{start}}{v^\mathit{end}}}_\SL)$
	is a conjunct of $\sigma(\stateformula{s}_\SL)$. It indeed operates on a different
	domain of $\slmemory^{c'}$ since $\stateformula{s}_\SL$ ensures that
	$\sigma(\stateformula{\alloc{v^\mathit{start}}{v^\mathit{end}}}_\SL)$ operates on
	a domain disjoint from the domain of the removed list predicate
	$\sigma(\stateformula{\listinv}_\SL)$ and since
	$\sigma'(v^\mathit{start})+\mathit{off}_i \geq \sigma'(v^\mathit{start})$ and
	$\sigma'(v^\mathit{start})+\mathit{off}_i+\sizeOf(\codevar{lty}_i)-1 \leq \sigma'(v^\mathit{end})$
	for all $1 \leq i \leq n$.
        \item for all $1 \leq i \leq n$ we have $\slmemory^{c'} \models \, (\sigma'(v^\mathit{start})+\mathit{off}_i \pointstosl_{\sizeOf(\codevar{lty_i})} \sigma'(v_i))$.
	If $i \neq k$ this follows from the condition of
    the list extension rule that there exist $v^\mathit{start}_i,v_i \in \Vsym$ with
    $\models \, \stateformula{s} \implies v^\mathit{start}_i = v^\mathit{start}
    + \mathit{off}_i$ and $(v^\mathit{start}_i \pointsto[\codevar{lty}_i] v_i) \in \PT$,
    so $\stateformula{\sigma(v^\mathit{start})
    + \mathit{off}_i \pointsto[\codevar{lty}_i] \sigma(v_i)}_\SL$ is a conjunct of
    $\sigma(\stateformula{s}_\SL)$.
    If $i = k$ we have to show $\slmemory^{c'} \models \,
    (\sigma(\LV^s(\codevar{pa})) \pointstosl_{\sizeOf(\codevar{lty}_i)} \sigma(\LV^s(t)))$,
    i.e.,
    $\slmemory^{c'} \models \, (\sigma(\LV^s(\codevar{pa})) +
    m \pointstosl \lfloor\frac{\sigma(\LV^s(t))}{2^{8 \cdot m}}\rfloor \mod 2^8)$
for all $0 \leq m \leq \sizeOf(\codevar{lty}_i) -1$,
which holds by construction with $u_i$ as given in the \codevar{store} rule for concrete states.
\end{itemize}
 Disjointness of the memory domain of $\sigma'(\stateformula{\listinv'}_\SL)$ and the memory domain of $\sigma'(\bigast\nolimits_{\varphi \in \AL^s\backslash\{\alloc{v^\mathit{start}}{v^\mathit{end}}\}} \; \; \stateformula{\varphi}_\SL)$ follows directly from $\stateformula{s}_\SL$. Disjointness of the memory domain of $\sigma'(\stateformula{\listinv'}_\SL)$ and the memory domain of $\sigma'(\bigwedge\nolimits_{\varphi \in \PT^{s'}} \; \; \stateformula{\varphi}_\SL)$ follows from $\stateformula{s}_\SL$ and (2).

\item[(4)]
The conjunct $\sigma'(v'_\ell) \geq 2$ holds with $\sigma'(v'_\ell) = \sigma(v_\ell) + 1 \geq 2$.
Since $\alloc{v^\mathit{start}}{v^\mathit{end}} \in \AL^s$, the conjunct $\sigma(v^\mathit{start}) \geq 1$ is part of $\sigma(\stateformula{s})$, so $\sigma'(v^\mathit{start}) \geq 1$ also holds.
We have $\sigma'(v_j) = \sigma(v_\mathit{ad})$ and $v_\mathit{ad}$ is the start address of $l$. Therefore, $\sigma(v_\mathit{ad}) \geq 1$ is a conjunct of $\sigma(\stateformula{s})$ and thus we have $\sigma'(v_j) \geq 1$.

\item[(5)]
The conjuncts $\sigma'(v_k) = \sigma(\LV^s(t))$ and $\sigma'(v'_\ell) = \sigma(v_\ell)+1$ are trivially satisfied by construction of $\sigma'$.
\end{itemize}

The third condition of Def. \ref{def:representation} holds as $\AL^{s'} \subseteq \AL^s$ and $\AL^{c'} = \AL^c$.

Finally, we show that for the new list invariant $v^\mathit{start} \pointstorec[\codevar{lty}]{v'_\ell} [(\mathit{off}_i: \codevar{lty}_i: {v_i..\hat{w}_i})]_{i=1}^n$ with $\codevar{lty}_j = \codevar{lty*}$ there exist allocations $\alloc{w^\mathit{start}_r}{w^\mathit{end}_r} \in \AL^{c'}$, $1 \leq r \leq \sigma(v'_\ell)$, such that $(\slassignment^{c'},\slmemory^{c'})$ is a model of
    $(\bigwedge_{r=1}^{\sigma'(v'_\ell)} m^\mathit{end}_r - m^\mathit{start}_r
    = \sizeOf(\codevar{lty})-1) \,\wedge\, \sigma'(v^\mathit{start}) =
    m^\mathit{start}_1 \,\wedge\,\bigwedge_{r=1}^{\sigma'(v'_\ell)-1} m^\mathit{start}_r
    + \mathit{off}_j \hookrightarrow_{\sizeOf(\codevar{lty*})} m^\mathit{start}_{r+1}$,
where
$m^\mathit{start}_k$ resp.\ $m^\mathit{end}_k$
is the number with $\models \stateformula{c} \implies w^\mathit{start}_k =
    m^\mathit{start}_k$
resp.\ 
 $\models \stateformula{c} \implies w^\mathit{end}_k =
    m^\mathit{end}_k$.
    For $2 \leq r \leq \sigma'(v'_\ell)$ this follows from the fact that for the former list invariant from $\LI^s$ such allocations exist in $\AL^c$ and $\AL^{c'} = \AL^c$. We choose $m^\mathit{start}_1 = \sigma(v^\mathit{start})$, $m^\mathit{end}_1 = \sigma(v^\mathit{end})$, and $m^\mathit{start}_2 = \sigma(v_\mathit{ad})$. Since $\models \stateformula{s} \implies v^\mathit{end} = v^\mathit{start} + \sizeOf(\codevar{lty})-1$ is a condition of the \codevar{store} rule and $(\slassignment^c,\slmemory^c)$ is a model of $\stateformula{s}$, $m^\mathit{end}_1 - m^\mathit{start}_1 = \sizeOf(\codevar{lty})-1$ holds. Furthermore, we have $\sigma'(v^\mathit{start}) = m^\mathit{start}_1$ and $\slmemory^{c'} \models m^\mathit{start}_1 + \mathit{off}_j \hookrightarrow_{\sizeOf(\codevar{lty*})} m^\mathit{start}_2$.
\qed
\end{proof}


\section*{Soundness of List Traversal}

To prove soundness of the symbolic execution rule for list traversal, we
first recapitulate the general rule for \code{getelementptr} from \cite{LLVM-JAR17} for the case
of \code{i8} pointers and concrete states.

\vspace*{.3cm}
\noindent
\fbox{
\begin{minipage}{11.8cm}
\label{rule:generalization}
\small
\mbox{\small \textbf{\hspace*{-.15cm}\codevar{getelementptr} on \code{i8} pointer}}\\
\mbox{\small \textbf{\hspace*{-.15cm}$(p:$ ``\code{pb = getelementptr i8, i8* pa, i$m$ $t$}'', $t \in \Ids \cup \N$, $\code{pa}, \code{pb} \in \Ids$)}}\\
\vspace*{1mm}\\
\centerline{$\frac{\parbox{5cm}{\centerline{
$s = (\Pos, \; \LV, \; \AL, \; \PT, \; \emptyset, \; \KB)$}\vspace*{.1cm}}}
{\parbox{9.4cm}{\vspace*{.1cm} \centerline{
$s' = (\Pos^+, \; \LV[\code{pb}:=w], \; \AL, \; \PT, \; \emptyset, \; \KB \cup \{w = \LV(\code{pa})+\LV(t)\})$}}}\;\;\;\;$
\mbox{if}} \vspace*{-.05cm}
{\small
\begin{itemize}
\item[$\bullet$] $w \in \Vsym$ is fresh
\end{itemize}}
\end{minipage}}
\vspace*{.2cm}

\begin{theorem}[Soundness of List Traversal]\label{thm-soundness-trav}
Let $s$ be a state and let $c$ be a concrete state that is represented by $s$. Moreover,
let $s'$ be the state resulting from $s$ by applying the list traversal rule and let $c'$
be the concrete state that results from $c$ by evaluating the general
\codevar{getelementptr} rule for concrete states. Then $c'$ is represented by $s'$.
\end{theorem}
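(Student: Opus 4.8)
The plan is to verify the four conditions of Definition~\ref{def:representation} for $c'$ and $s'$ one at a time, in the style of the soundness proof for list extension (Theorem~\ref{thm-soundness-ext}), exploiting throughout that $c$ is represented by $s$. Since $\ERROR$ has no successors, neither $s$ nor $c$ equals $\ERROR$, so both rules fire and yield states of the stated shapes; in particular $\LV^{s'} = \LV[\code{pb} := w^\mathit{start}_j]$ and $\LV^{c'} = \LV^c[\code{pb} := w]$ for the fresh concrete variable $w$. As $\domain(\LV^c) = \domain(\LV^s)$ and both rules extend the domain by exactly \code{pb}, condition~1, $\domain(\LV^{c'}) = \domain(\LV^{s'})$, is immediate.

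The heart of the argument is condition~2: I would fix $\sigma$ with $(\slassignment^c, \slmemory^c) \models \sigma(\stateformula{s}_\SL)$ and build a concrete $\sigma'$ with $(\slassignment^{c'}, \slmemory^{c'}) \models \sigma'(\stateformula{s'}_\SL)$. Note $\slmemory^{c'} = \slmemory^c$ (the concrete \code{getelementptr} rule leaves memory untouched) and $\slassignment^{c'}$ agrees with $\slassignment^c$ except on \code{pb}, which occurs in no memory subformula. From $\models \stateformula{s} \implies v_\ell \geq 2$ we get $\sigma(v_\ell) \geq 2$, so by case~(2) of Definition~\ref{def:semantics-of-sl} the predicate $\sigma(\stateformula{\listinv}_\SL)$ splits on a subheap of $\slmemory^c$ into the first element's allocation together with its per-field contents, separately conjoined with a tail predicate of length $\sigma(v_\ell) - 1$ rooted at the recursive-field value $\sigma(v_j)$, whose first-element field values are some concrete $\overline{m_i}$. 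I would then extend $\sigma$ to the fresh variables by $\sigma'(v^\mathit{start}) = \sigma(v_\mathit{ad})$, $\sigma'(v^\mathit{end}) = \sigma(v_\mathit{ad}) + \sizeOf(\code{ty}) - 1$, $\sigma'(v^\mathit{start}_i) = \sigma(v_\mathit{ad}) + \mathit{off}_i$, $\sigma'(w^\mathit{start}) = \sigma(v_j)$, $\sigma'(w_\ell) = \sigma(v_\ell) - 1$, $\sigma'(w^\mathit{start}_j) = \sigma(v_j) + \mathit{off}_j$, and $\sigma'(w_i) = \overline{m_i}$. With this choice all atoms added to $\KB'$ hold by construction, the $\LV$-conjunct for \code{pb} holds since $\slassignment^{c'}(\code{pb}) = \slassignment^c(\code{pa}) + \slassignment^c(t) = \sigma(v_j) + \mathit{off}_j = \sigma'(w^\mathit{start}_j)$, and the auxiliary conjuncts of Definition~\ref{def:StateFOLFormula} for $\listinv'$ (such as $w_\ell \geq 1$, $w^\mathit{start} \geq 1$, and the length/pointer atoms when they are entailed) follow from $\sigma(v_\ell) \geq 2$, which already forced the recursive pointer $v_j \geq 1$ in $\stateformula{s}$.

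The main obstacle is the separation-logic reassembly: showing $\slmemory^{c'}$ models $((\bigast_{\varphi \in \AL'} \dots) \wedge (\bigwedge_{\varphi \in \PT'} \dots)) \ast (\bigast_{\varphi \in \LI'} \dots)$, i.e.\ that the first list element can be moved out of the list predicate and into the explicit $\AL$/$\PT$ components while preserving every disjointness requirement. The added allocation $\alloc{v^\mathit{start}}{v^\mathit{end}}$ must be disjoint from all other allocations, from every old points-to entry, and from the tail invariant $\listinv'$: the first two, and the separation from the remaining list invariants, follow from the top-level $\ast$ of $\stateformula{s}_\SL$ that already separated the whole of $\listinv$ from $\AL$, $\PT$, and $\LI \setminus \{\listinv\}$, while separation from $\listinv'$ comes from the internal $\ast$ in case~(2). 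The new points-to facts $(v^\mathit{start}_i \hookrightarrow_{\codevar{ty}_i} v_i)$ land exactly on the first element's bytes (by $\sigma'(v^\mathit{start}_i) = \sigma(v_\mathit{ad}) + \mathit{off}_i$) and agree with the field reads $\sigma(v_\mathit{ad}) + \mathit{off}_i \pointstosl_{\sizeOf(\codevar{ty}_i)} \sigma(v_i)$ guaranteed by the semantics, so they are consistent with the $\wedge$-combination in $\AL'$/$\PT'$ and remain disjoint from $\listinv'$.

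Finally, conditions~3 and~4 follow by shifting the chain by one element. For condition~3 the allocations of $\AL^s$ are handled exactly as for $c$ and $s$, and the freshly added $\alloc{v^\mathit{start}}{v^\mathit{end}}$ corresponds to the first concrete list allocation $\alloc{w^\mathit{start}_1}{w^\mathit{end}_1} \in \AL^c$ supplied by condition~4 of the representation of $c$ by $s$, since $\sigma'(v^\mathit{start}) = \sigma(v_\mathit{ad}) = m^\mathit{start}_1$ and $\sigma'(v^\mathit{end}) = m^\mathit{start}_1 + \sizeOf(\code{ty}) - 1 = m^\mathit{end}_1$. For condition~4 applied to $\listinv'$, dropping the first of the $\sigma(v_\ell)$ allocations that witnessed $\listinv$ leaves $\sigma(v_\ell) - 1 = \sigma'(w_\ell)$ allocations rooted at $\sigma(v_j) = \sigma'(w^\mathit{start})$ whose recursive pointers still form the required chain, using that the original first link $m^\mathit{start}_1 + \mathit{off}_j \hookrightarrow m^\mathit{start}_2$ identifies $\sigma(v_j)$ with the start of the second allocation. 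Assembling the four conditions gives that $c'$ is represented by $s'$.
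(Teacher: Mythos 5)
Your proposal is correct and follows essentially the same route as the paper's proof: the identical extension $\sigma'$ of $\sigma$ to the fresh variables, the same one-step unfolding of the list predicate for $\sigma(v_\ell)\geq 2$ via case (2) of Def.~\ref{def:semantics-of-sl}, the same disjointness reassembly from the top-level and internal separating conjunctions, and the same shift-by-one argument for conditions 3 and 4 of Def.~\ref{def:representation}. The only difference is granularity: the paper additionally works through the conditional first-order conjuncts for $\listinv'$ (e.g., that $\models \stateformula{s'} \implies w_\ell = 1$ forces $\models \stateformula{s} \implies v_\ell = 2$, whence $w_i = \hat{v}_i$ for all $i$), which your sketch subsumes under ``when they are entailed''.
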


\begin{proof}
We only prove the soundness for the variant of the list traversal rule from \rSC{sect:List
Traversal}.
The proof for traversal via field access is basically the same and
  the proofs for the cases where the list invariant has length $1$ are analogous, but
  easier, 
  since we do not have the new list invariant $\listinv'$, while everything else says the
  same. The proofs for the variants with split lists are similar but use disjointness of the individual list invariants instead of disjointness between the allocations and the list invariants.

Let $s = (\Pos, \; \LV^s, \; \AL^s, \; \PT^s, \; \LI^s, \; \KB^s)$ and $c = (\Pos, \; \LV^c, \; \AL^c, \; \PT^c, \; \emptyset, \; \KB^c)$.
By construction of the rules we get
\[ \begin{array}{rcl}
s' &=& (\Pos^+, \; \LV^{s'}, \; \AL^{s'}, \; \PT^{s'}, \; \LI^{s'}, \; \KB^{s'})\\
c' &=& (\Pos^+, \; \LV^{c'}, \; \AL^{c'}, \; \PT^{c'}, \; \emptyset, \; \KB^{c'})
\end{array}\]
where\\
\\
\begin{tabular}{lll}
	$\LV^{s'}$&\!\!$=$&$\LV^s[\code{pb}:=w^{\mathit{start}}]$\\
	$\AL^{s'}$&\!\!$=$&$\AL^s\cup\{\alloc{v^{\mathit{start}}}{v^{\mathit{end}}}\}$\\
	$\PT^{s'}$&\!\!$=$&$\PT^s\cup\{(v_i^{\mathit{start}} \hookrightarrow_{\codevar{ty}_i} v_i) \mid\, 1 \leq i \leq n\}$\\
	$\LI^{s'}$&\!\!$=$&$(\LI^s\backslash\{\listinv\}) \cup \{\listinv'\}$\\
	$\KB^{s'}$&\!\!$=$&$\KB^s \cup \{v^{\mathit{start}} =
	v_\mathit{ad},\,w^{\mathit{start}} = v_j \}\,\cup$\\
	&&$\{v^{\mathit{end}} = v^{\mathit{start}} + \sizeOf(\code{ty}) -1 \} \, \cup$\\
	&&$\{w_\ell = v_\ell-1,\, w_j^{\mathit{start}} = w^{\mathit{start}} + \mathit{off}_j\} \, \cup$\\
	&&$\{v_i^{\mathit{start}} = v_\mathit{ad} + \mathit{off}_i \mid 1 \leq i \leq
	n \}$\\
        \\
$\LV^{c'}$&\!\!$=$&$\LV^c[\code{pb}:=w]$\\
$\AL^{c'}$&\!\!$=$&$\AL^c$\\
$\PT^{c'}$&\!\!$=$&$\PT^c$\\
$\KB^{c'}$&\!\!$=$&$\KB^c \cup \{w = \LV(\code{pa})+\LV(t)\}$
\end{tabular}

\vspace{0.25cm}
\noindent
Here, $\listinv = (v_\mathit{ad} \pointstorec[\codevar{ty}]{v_\ell} [(\mathit{off}_i: \codevar{ty}_i:{v_i..\hat{v}_i})]_{i=1}^n)$ and $\listinv' = (w^{\mathit{start}} \pointstorec[\codevar{ty}]{w_\ell}[(\mathit{off}_i: \codevar{ty}_i:{w_i..\hat{v}_i})]_{i=1}^n)$.
We prove that $c'$ is represented by $s'$ as defined in Def. \ref{def:representation}.
Since $c$ is represented by $s$, we have $\domain(\LV^c) = \domain(\LV^s)$ and thus $\domain(\LV^{c'}) = \domain(\LV^{c}) \cup \{\code{pb}\} = \domain(\LV^{s'})$.

Let $\sigma : \Vsym \to \Z$ such that $(\slassignment^c,\slmemory^c)$ is a model of $\sigma(\stateformula{s}_\SL)$. We show that there exists $\sigma' : \Vsym \to \Z$ such that $(\slassignment^{c'},\slmemory^{c'})$ is a model of $\sigma'(\stateformula{s'}_\SL)$.
From
$\slmemory^c \models \listpred_{\sizeOf(\code{ty}),j}(\sigma(v_\ell), \sigma(v_\mathit{ad}),
[(\mathit{off}_i: \sizeOf(\codevar{ty}_i): {\sigma(v_i)..\sigma(\hat{v}_i)})]_{i=1}^n)$
and $\models \stateformula{s} \implies v_\ell \geq 2$ it follows that for all $1 \leq
i \leq n$, $\slmemory^c$ is a model of
\begin{equation}
\label{listFormulaProof}\tag{$\star\star$}
(\ldots) \,*\, \listpred_{\sizeOf(\code{ty}),j}(\sigma(v_\ell)-1, \sigma(v_j),
[(\mathit{off}_i: \sizeOf(\codevar{ty}_i): {u_i..\sigma(\hat{v}_i)})]_{i=1}^n)
\end{equation}
with $u_i \in \Z$ such that $\slmemory^c \models \sigma(v_j) + \mathit{off}_i \pointstosl_{\sizeOf(\codevar{ty}_i)} u_i$.
Then, we define $\sigma' = \sigma[v^{\mathit{start}}:=\sigma(v_\mathit{ad}),
v^{\mathit{end}}:=\sigma(v_\mathit{ad})+\sizeOf(\code{ty})-1,
v_1^{\mathit{start}}:=\sigma(v_\mathit{ad})+\mathit{off}_1, \ldots,
v_n^{\mathit{start}}:=\sigma(v_\mathit{ad})+\mathit{off}_n,
w^{\mathit{start}}:=\sigma(v_j), w_\ell:=\sigma(v_\ell)-1,
w_j^{\mathit{start}}:=\sigma(v_j)+\mathit{off}_j, w_1:=u_1, \ldots,$\linebreak $w_n:=u_n]$ for $u_1, \ldots, u_n$ as above.
By construction of $\LV^{c'}$ and $\PT^{c'}$ we have $\slassignment^{c'}(\code{x}) = \slassignment^c(\code{x})$ for all $\code{x} \in \Ids\backslash\{\code{pb}\}$ and $\slmemory^{c'} = \slmemory^c$. Moreover, let $\slassignment^{c'}(\code{pb}) = \slassignment^{c}(\code{pa})+u$ for the number $u \in \N$ with $\models \stateformula{c} \implies \LV^c(t) = u$.
Recall that $\slmemory^{c'}=\slmemory^c$ and that $\slassignment^{c'}$ only differs
on \code{pb} from $\slassignment^{c}$.
Moreover, if $\code{pb} = \LV^s(\code{pb})$ is in $\stateformula{s}_\SL$, then it is
removed in  $\stateformula{s'}_\SL$ (and replaced by $\code{pb} = w_j^{\mathit{start}}$). So
the conjuncts that are kept from $\sigma(\stateformula{s}_\SL)$ are satisfied by $(\slassignment^{c'},\slmemory^{c'})$.
Therefore, we now focus on the subformulas of $\sigma'(\stateformula{s'}_\SL)$ that are new compared to $\sigma(\stateformula{s}_\SL)$. We have to show that $\slmemory^{c'}$ is a model of
\begin{itemize}
    \item[(1)] $\slassignment^{c'}(\sigma'(\code{pb} = w_j^{\mathit{start}}))$,
    \item[(2)] $\sigma'(\stateformula{\alloc{v^{\mathit{start}}}{v^{\mathit{end}}}}_\SL)$,
    \item[(3)] the formulas in $\stateformula{s'}$ for the new allocation:
    \begin{itemize}
        \item[(a)] $\sigma'(v^{\mathit{start}} \geq 1)$,
        \item[(b)] $\sigma'(v^{\mathit{start}} \leq v^{\mathit{end}})$,
        \item[(c)] $\sigma'(v^{\mathit{end}} < x_1 \vee x_2 < v^{\mathit{start}})$ for all $\alloc{x_1}{x_2} \in \AL^s$,
    \end{itemize}
    \item[(4)] $\sigma'(\stateformula{v_i^{\mathit{start}} \hookrightarrow_{\codevar{ty}_i} v_i}_\SL)$ for all $1 \leq i \leq n$,
    \item[(5)] the formulas in $\stateformula{s'}$ for the new points-to entries:
    \begin{itemize}
        \item[(a)] $\sigma'(v_i^{\mathit{start}} \geq 1)$ for all $1 \leq i \leq n$,
        \item[(b)] $\sigma'(v_i = y)$ if $\models \stateformula{s'} \implies
    v_i^{\mathit{start}} = x$ for all $1 \leq i \leq n$ and
    $(x \hookrightarrow_{\code{ty}_i} y) \in \PT^{s'}$,
        \item[(c)] $\sigma'(v_i^{\mathit{start}} \neq x)$ if $\models \stateformula{s'} \implies v_i \neq y$ for all $1 \leq i \leq n$ and $(x \hookrightarrow_{\code{ty}_i} y) \in \PT^{s'}$,
    \end{itemize}
	\item[(6)]
            $\sigma'(\stateformula{\listinv'}_\SL) = \sigma'(\listpred_{\sizeOf(\code{ty}),j}(w_\ell, w^{\mathit{start}}, [(\mathit{off}_i: \sizeOf(\codevar{ty}_i): {w_i..\hat{v}_i})]_{i=1}^n))$,
	\item[(7)] the formulas in $\stateformula{s'}$ for $\listinv'$:
	\begin{itemize}
	    \item[(a)] $\sigma'(w_\ell \geq 1)$, $\sigma'(w^{\mathit{start}} \geq 1)$,
	    \item[(b)] $\sigma'(\bigwedge_{i=1}^n w_i = \hat{v}_i)$ if $\models \stateformula{s'} \implies w_\ell = 1$,
	    \item[(c)] $\sigma'(w_j \geq 1)$ if $\models \stateformula{s'} \implies w_\ell \geq 2$,
	    \item[(d)] $\sigma'(w_\ell \geq 2)$ if there exists $1 \leq i \leq n$ such
    that $\models \stateformula{s'} \implies w_i \neq \hat{v}_i$, 
	\end{itemize}
	\item[(8)] the formulas in $\stateformula{s'}$ for $\KB^{s'}$:
	\begin{itemize}
	    \item[(a)] $\sigma'(v^{\mathit{start}} = v_\mathit{ad})$,
	    \item[(b)] $\sigma'(w^{\mathit{start}} = v_j)$,
	    \item[(c)] $\sigma'(v^{\mathit{end}} = v^{\mathit{start}} + \sizeOf(\code{ty}) - 1)$,
	    \item[(d)] $\sigma'(w_\ell = v_\ell-1)$,
	    \item[(e)] $\sigma'(w_j^{\mathit{start}} = w^{\mathit{start}} + \mathit{off}_j)$,
	    \item[(f)] $\sigma'(v_i^{\mathit{start}} = v_\mathit{ad} + \mathit{off}_i)$ for all $1 \leq i \leq n$.
	\end{itemize}
\end{itemize}
In the following we prove that $\slmemory^{c}$ is a model of all added conjuncts. It
directly follows that $\slmemory^{c'}$ is also a model of the conjuncts, since
$\slmemory^{c'} = \slmemory^{c}$.
\begin{itemize}
\item[(1)] We have $\slassignment^{c'}(\code{pb}) = \slassignment^{c}(\code{pa}) + u$ for
the number $u \in \N$ with $\models \stateformula{c} \implies \LV^c(t) = u$. With
$\models \stateformula{s} \implies \LV^s(\code{pa}) = v_j$
and $(\slassignment^{c}, \slmemory^{c}) \models \sigma(\stateformula{s})$, we must have
$\slassignment^{c}(\code{pa}) = \sigma(v_j)$, 
and with $\models \stateformula{s} \implies \LV^s(t) = \mathit{off}_j$ we obtain
$u = \mathit{off}_j$. Hence, we get $\slassignment^{c'}(\code{pb})
= \slassignment^{c}(\code{pa}) + u = \sigma(v_j) + \mathit{off}_j
= \sigma'(w_j^{\mathit{start}})$.
\item[(2)] Since $\slmemory^{c} \models \sigma(\stateformula{\listinv}_\SL)$, we have $\slmemory^{c} \models (\forall x. \exists y. \; (\sigma(v_\mathit{ad}) \leq x \leq \sigma(v_\mathit{ad}) + \sizeOf(\code{ty}) - 1) \implies (x \pointstosl y))$.
    With $\sigma'(v^{\mathit{start}}) = \sigma(v_\mathit{ad})$ and $\sigma'(v^{\mathit{end}}) = \sigma(v_\mathit{ad}) + \sizeOf(\code{ty}) - 1$, we have $\slmemory^{c} \models (\forall x. \exists y. \; (\sigma'(v^{\mathit{start}}) \leq x \leq \sigma'(v^{\mathit{end}})) \implies (x \pointstosl y))$.
\item[(3)] (a) Since $\slmemory^{c} \models \sigma(\stateformula{\listinv}_\SL)$, we have
$\sigma(v_\mathit{ad}) \geq 1$ and hence $\sigma'(v^{\mathit{start}}) \geq 1$.
    (b) We have $\sigma'(v^{\mathit{end}}) = \sigma'(v^{\mathit{start}}) + \sizeOf(\code{ty}) - 1$. Since the size of an \LLVM{} type is always positive, $\sigma'(v^{\mathit{start}}) \leq \sigma'(v^{\mathit{end}})$ holds.
    The conjuncts from (c) follow from the fact that for each allocation $\alloc{x_1}{x_2}
    \in \AL^s$, $\slmemory^c$ can be split into partial functions $\slmemory_1^c$, $\slmemory_2^c$ with disjoint domains such that $\slmemory_1^c \models \sigma(\stateformula{\listinv}_\SL)$ and $\slmemory_2^c \models \sigma(\stateformula{\alloc{x_1}{x_2}}_\SL)$.
\item[(4)] Since $\slmemory^{c} \models \sigma(\stateformula{\listinv}_\SL)$, we have
    $\slmemory^{c} \models \sigma(v_\mathit{ad})
    + \mathit{off}_i \hookrightarrow_{\sizeOf(\codevar{ty}_i)} \sigma(v_i)$ for all
    $1 \leq i \leq n$.
This implies $\slmemory^{c} \models \sigma'(\stateformula{v_i^{\mathit{start}} \hookrightarrow_{\codevar{ty}_i} v_i}_\SL)$, since
$\sigma'(v_i^{\mathit{start}}) =
\sigma(v_{\mathit{ad}}) + \mathit{off}_i$ and $\sigma'(v_i) = \sigma(v_i)$. 
\item[(5)] (a) Since $\slmemory^{c} \models \sigma(\stateformula{\listinv}_\SL)$, we have
    $\sigma(v_\mathit{ad}) \geq 1$.
 Therefore, $\sigma'(v_i^{\mathit{start}})
= \sigma(v_\mathit{ad}) + \mathit{off}_i \geq 1$ holds
for all $1 \leq i \leq n$.
The premises of (b) and (c) never hold because
 the addresses from the points-to entries are disjoint from the
 addresses of the 
 list invariant. 
 \item[(6)]
    With $\slmemory^{c} \models \sigma(\stateformula{\listinv}_\SL)$, $\slmemory^{c}$ is a model of
    $$\listpred_{\sizeOf(\code{ty}),j}(\sigma(v_\ell)-1, \sigma(v_j),
    [(\mathit{off}_i: \sizeOf(\codevar{ty}_i): {u_i..\sigma(\hat{v}_i)})]_{i=1}^n),$$
see \eqref{listFormulaProof}, where $\slmemory^c \models \sigma(v_j) + \mathit{off}_i \pointstosl_{\sizeOf(\codevar{ty}_i)} u_i$.
    Hence,
    \begin{align*}
    \slmemory^{c} \models\;\; &\listpred_{\sizeOf(\code{ty}),j}(\sigma'(w_\ell), \sigma'(w^{\mathit{start}}), [(\mathit{off}_i: \sizeOf(\codevar{ty}_i): {\sigma'(w_i)..\sigma'(\hat{v}_i)})]_{i=1}^n)\\
        =\;\; &\sigma'(\stateformula{\listinv'}_\SL).
    \end{align*}
\item[(7)] (a) We have $\sigma'(w_\ell) = \sigma(v_\ell)-1 \geq 2-1 = 1$. Furthermore, we have $\sigma'(w^{\mathit{start}}) = \sigma(v_j)$ and since $\slmemory^c \models \sigma(\stateformula{\listinv}_\SL)$ and $\sigma(v_\ell) \geq 2$ it holds that $\sigma(v_j) \geq 1$.
    (b) Let $\models \stateformula{s'} \implies w_\ell = 1$. Since $w_\ell$ is a fresh variable in $s'$ with the only constraint $w_\ell = v_\ell - 1$, we must have $\models \stateformula{s} \implies v_\ell = 2$ and hence $\sigma(v_\ell) = 2$. With $\slmemory^{c} \models \eqref{listFormulaProof}$, we have $\slmemory^{c} \models \listpred_{\sizeOf(\code{ty}),j}(\sigma(1, \sigma(v_j),
[(\mathit{off}_i: \sizeOf(\codevar{ty}_i): {u_i..\sigma(\hat{v}_i)})]_{i=1}^n)$ and by the
    first case of the definition of the semantics of list predicates, we obtain $u_i = \sigma(\hat{v}_i)$ for all $1 \leq i \leq n$. Therefore, with $\sigma'(w_i) = u_i$ and $\sigma'(\hat{v}_i) = \sigma(\hat{v}_i)$ we have $\sigma'(\bigwedge_{i=1}^n w_i = \hat{v}_i)$.
    (c) Let $\models \stateformula{s'} \implies w_\ell \geq 2$. By an analogous reasoning
    as in (b),  we must have $\models \stateformula{s} \implies v_\ell \geq 3$. Then we can apply the second case of the definition of the semantics of list predicates twice to obtain $u_j \geq 1$ and hence, $\sigma'(w_j \geq 1)$.
    (d) We never have $\models \stateformula{s'} \implies w_i \neq \hat{v}_i$ since the $w_i$ are fresh and not constrained, i.e., they are completely unknown values. The only exception is $w_j$, which is $\geq 1$ in case $w_\ell \geq 2$. So if $\models \stateformula{s'} \implies w_\ell \geq 2$ and $\models \stateformula{s'} \implies \hat{v}_j = 0$, we have to show that $\sigma'(w_\ell) \geq 2$ holds. The reason is again that we must have $\models \stateformula{s} \implies v_\ell \geq 3$, so $\sigma(v_\ell) \geq 3$ and hence $\sigma'(w_\ell) = \sigma(v_\ell) - 1 \geq 2$.
\item[(8)] The claims (a)-(f) directly follow from the definition of $\sigma'$.
\end{itemize}
Note that $\sigma'(\stateformula{\alloc{v^{\mathit{start}}}{v^{\mathit{end}}}}_\SL)$ and $\sigma'(\stateformula{\listinv'}_\SL)$ both reason about addresses that are disjoint from all other allocations that are explicitly contained in $\AL^{s'}$ or implicitly contained in $\LI^{s'}$. Moreover, the addresses of $\sigma'(\stateformula{\listinv'}_\SL)$ are disjoint from the addresses on the left-hand sides of all points-to entries contained in $\PT^{s'}$. The reason is that $\slmemory^c \models \{\sigma(
    ((\bigast\nolimits_{\varphi \in \AL^s} \; \; \stateformula{\varphi}_\SL)
    \; \wedge \;
    (\bigwedge\nolimits_{\varphi \in \PT^s} \; \; \stateformula{\varphi}_\SL ))
    \; \ast \;
    (\bigast\nolimits_{\varphi \in \LI^s} \; \; \stateformula{\varphi}_\SL))\}$, where $\LI^s$ contains $\listinv$. Here, $\stateformula{\listinv}_\SL$ ensures that the addresses of the first list element of $\listinv$ (i.e., $\sigma(v_\mathit{ad}) = \sigma'(v^{\mathit{start}}), \ldots, \sigma(v_\mathit{ad})+\sizeOf(\code{ty})-1 = \sigma'(v^{\mathit{end}})$) are disjoint from the addresses of all other list elements of $\listinv$. Therefore,
     $\slmemory^{c} \models \{\sigma'(((\stateformula{\alloc{v^{\mathit{start}}}{v^{\mathit{end}}}}_\SL \ast \bigast\nolimits_{\varphi \in \AL^s} \; \; \stateformula{\varphi}_\SL)
    \; \wedge \;
    (\bigwedge\nolimits_{\varphi \in \PT^{s'}} \; \; \stateformula{\varphi}_\SL ))
    \; \ast \;
    (\stateformula{\listinv'}_\SL \ast \bigast\nolimits_{\varphi \in \LI^s\backslash\{\listinv\}} \; \; \stateformula{\varphi}_\SL))\}$.

Now we show that the third condition of Def.\ \ref{def:representation} holds. 
Since $c$ is represented by $s$, this condition holds for $c$ and $s$ and implies that the
condition holds for all allocations that were already present in $\AL^s$. Moreover, the
fourth condition holds for $c$ and $s$ and implies that there exists an allocation
$\alloc{u^{\mathit{start}}}{u^{\mathit{end}}} \in \AL^c$ such that $(\slassignment^c,\slmemory^c)$
is a model of $m^\mathit{end}_1 - m^\mathit{start}_1
= \sizeOf(\codevar{ty})-1 \,\wedge\, \sigma(v_\mathit{ad}) = m^\mathit{start}_1$, where
$m^\mathit{start}_1$ and $m^\mathit{end}_1$ are the numbers with
$\models \stateformula{c} \implies u^{\mathit{start}} = m^\mathit{start}_1$ and
$\models \stateformula{c} \implies u^{\mathit{end}} = m^\mathit{end}_1$. With the definition of
$\sigma'$ and $\AL^c = \AL^{c'}$ this means that
$\alloc{u^{\mathit{start}}}{u^{\mathit{end}}} \in \AL^{c'}$ such that
$\models \stateformula{c'} \Rightarrow u^{\mathit{start}}
= \sigma'(v^{\mathit{start}}) \wedge u^{\mathit{end}} = \sigma'(v^{\mathit{end}})$. 

The fourth condition of Def.\ \ref{def:representation} for $c'$ and $s'$ directly follows from the fact that this condition holds for $c$ and $s$, where the list invariant is an extension of the new list invariant in $s'$.
\qed
\end{proof}

}
\end{document}